\definecolor{darkred}{RGB}{150,0,0}
\definecolor{darkgreen}{RGB}{0,150,0}
\definecolor{darkblue}{RGB}{0,0,200}
\newtheorem{theorem}{Theorem}[section]
\newtheorem{lemma}[theorem]{Lemma}
\newtheorem{corollary}[theorem]{Corollary}
\newtheorem{definition}[theorem]{Definition}
\newtheorem{condition}[theorem]{Condition}
\newcommand\tr{{{\operatorname{trace}}}}
\newcommand{\eps}{\varepsilon}
\newcommand{\beq}{\begin{equation}}
\newcommand{\eeq}{\end{equation}}
\newcommand{\nn}{\nonumber}
\newcommand{\A}{{\mtx{A}}}
\newcommand{\Cb}{{\mtx{C}}}
\newcommand{\Ub}{{\mtx{U}}}
\newcommand{\V}{{\mtx{V}}}
\newcommand{\B}{{{\mtx{B}}}}
\newcommand{\Gb}{{\mtx{G}}}
\newcommand{\F}{{\mtx{F}}}
\newcommand{\Pb}{{\mtx{P}}}
\newcommand{\Rb}{{\mtx{R}}}
\newcommand{\Xii}{\X_{r_i}}
\newcommand{\Yi}{\Y_{r_i}}
\newcommand{\Iden}{{\bf{I}}}
\newcommand{\M}{{\mtx{M}}}
\newcommand{\Vb}{{\mtx{V}}}
\newcommand{\Pc}{\mathcal{P}}
\newcommand{\Sc}{\mathcal{S}}
\newcommand{\Nn}{\mathcal{N}}
\newcommand{\vb}{\mtx{v}}
\newcommand{\rb}{\mtx{r}}
\newcommand{\pb}{\mtx{p}}
\newcommand{\qb}{\mtx{q}}
\newcommand{\w}{\mtx{w}}
\newcommand{\ab}{\vct{a}}
\newcommand{\bb}{\vct{b}}
\newcommand{\ub}{\vct{u}}
\newcommand{\deltab}{\delta_{buff}}
\newcommand{\epsb}{\eps_{buff}}
\newcommand{\h}{\vct{h}}
\newcommand{\g}{\vct{g}}
\newcommand{\infnorm}[1]{\left\|#1\right\|_{\ell_\infty}}
\newcommand{\twonorm}[1]{\left\|#1\right\|_{\ell_2}}
\newcommand{\x}{{\vct{x}}}
\newcommand{\X}{{\mtx{X}}}
\newcommand{\Hb}{{\mtx{H}}}
\newcommand{\Y}{{\mtx{Y}}}
\newcommand{\Yp}{{\mtx{Y}^\perp}}
\newcommand{\y}{{\vct{y}}}
\definecolor{emmanuel}{RGB}{255,127,0}
\newcommand{\R}{\mathbb{R}}
\newcommand{\Pro}{\mathbb{P}}
\newcommand{\sgn}[1]{\textrm{sgn}(#1)}
\renewcommand{\P}{\operatorname{\mathbb{P}}}
\newcommand{\E}{\operatorname{\mathbb{E}}}
\newcommand{\e}{\mathrm{e}}
\newcommand{\diag}[1]{\text{diag}{(#1)}}
\newcommand{\vct}[1]{\bm{#1}}
\newcommand{\mtx}[1]{\bm{#1}}
\newcommand{\order}[1]{{\cal{O}}( #1)}
\newcommand{\tn}{\twonorm}
\newcommand{\ino}{\infnorm}
\numberwithin{equation}{section} 
\def \endprf{\hfill {\vrule height6pt width6pt depth0pt}\medskip}
\newenvironment{proof}{\noindent {\bf Proof} }{\endprf\par}
\newcommand{\ang}{\text{ang}}
\title{Near-Optimal Sample Complexity Bounds for\\Circulant Binary Embedding} 
\author{Samet Oymak\thanks{Google Inc.~1600 Amphitheatre Parkway, Mountain View, CA 94043.}}
\begin{document}
\maketitle

\begin{abstract} Binary embedding is the problem of mapping points from a high-dimensional space to a Hamming cube in lower dimension while preserving pairwise distances. An efficient way to accomplish this is to make use of fast embedding techniques involving Fourier transform e.g.~circulant matrices. While binary embedding has been studied extensively, theoretical results on fast binary embedding are rather limited. In this work, we build upon the recent literature to obtain significantly better dependencies on the problem parameters. A set of $N$ points in $\R^n$ can be properly embedded into the Hamming cube $\{\pm 1\}^k$ with $\delta$ distortion, by using $k\sim\delta^{-3}\log N$ samples which is optimal in the number of points $N$ and compares well with the optimal distortion dependency $\delta^{-2}$. Our optimal embedding result applies in the regime $\log N\lesssim n^{1/3}$. Furthermore, if the looser condition $\log N\lesssim \sqrt{n}$ holds, we show that all but an arbitrarily small fraction of the points can be optimally embedded. We believe our techniques can be useful to obtain improved guarantees for other nonlinear embedding problems.
\end{abstract}

\section{Introduction}
Binary embedding problem aims to map a set of points in a high-dimensional space to the Hamming cube in a lower dimension. The task is preserving the distances between the points while keeping embedding dimension as small as possible. A common approach to accomplish this task is applying a random map to the data. In particular, given a point $\x\in\R^n$, we first apply a linear transformation $\x\rightarrow\A\x\in\R^k$ and then apply the discretization $\A\x\rightarrow \sgn{\A\x}$ where $\sgn{\cdot}$ returns the sign. Given a set $S$ and distortion level $\delta>0$, we are interested in ensuring that for all $\x,\y\in S$, $\A$ satisfies
\beq
|k^{-1}\|\sgn{\A\x},\sgn{\A\y}\|_H-\ang(\x,\y)|\leq \delta.\nn
\eeq
Here, $\|\cdot,\cdot\|_H$ is the Hamming distance between two $\{0,1\}^k$ vectors and $\ang(\cdot)$ is the angular distance which returns the smaller angle between two points normalized by $\pi$. Often we are interested in embedding a large set of points $S=\{\vb_i\}_{i=1}^N$ or a continuous set such as a subspace. An important aspect of the embedding problems is the tradeoff between the number of points $N$ and the embedding dimension $m$. For linear embedding, classical Johnson-Lindenstrauss (JL) Lemma guarantees that by using $k\approx \delta^{-2}\log N$ samples, $N$ points can be embedded with $\delta$ distortion. More recently, this tradeoff attracted significant attention for the binary embedding problem. Specifically, by choosing $\A$ to be a Gaussian matrix, it can be trivially shown that one can achieve a good binary embedding under the same assumption of $k\approx \delta^{-2}\log N$. This arguments have also been extended to arbitrary (e.g.~continuous) sets which are of interest for sparse estimation problems. 

While the results on dense Gaussian matrices are valuable, for most applications we are interested in faster projections where embedding can be done in near-linear time. Such projections make use of fast matrix multiplications such as the Fourier Transform followed by random diagonal modulations and are broadly called Fast Johnson-Lindenstrauss Transform (FJLT). In this work, we focus on circulant embedding matrices where projection matrix $\A$ is given by $\A=\Rb\Cb_{\h}\diag{\rb}$. Here,
\begin{itemize}
\item $\Rb\in\R^{k\times n}$ is the restriction operator that selects $k$ rows out of $n$ uniformly at random.
\item $\h,\rb\in\R^n$ are independent vectors with independent standard normal entries.
\item $\Cb_{\h}$ is a circulant matrix whose first row is equal to $\h^*$.
\item $\diag{\rb}$ is a diagonal matrix obtained from the vector $\rb$.
\end{itemize}
The theoretical results for fast binary embedding techniques are rather limited \cite{yi2015binary,yu2015binary,oymak2015near}. Related to us, very recently Yu et al. provided an analysis of circulant projections. Loosely speaking, the authors show that by using $k\sim \log^2 N$ samples, binary embedding with small distortion is possible as long as $\log N \lesssim n^{1/6}$. Another related work connected to nonlinear embedding is due to Le et al. \cite{le2013fastfood}. Here, the authors speed up Kernel approximation \cite{rahimi2007random} by making use of FJLT however the number of required Fourier features scale quadratically due to suboptimal concentration bounds. A natural question is whether circulant projections can achieve the optimal binary embedding guarantees. In this work, we answer this question positively. We show that using $k\sim \log N$ samples, binary embedding via circulant matrices will be successful as long as $\log N\lesssim n^{1/2}$. This shows that Fast JL Transform not only works well for linear embedding but also for highly nonlinear problems and the embedding behavior is essentially same.

\noindent{\bf{Contributions:}} Specifically, we have two sets of results. Our first set of results consider embedding with circulant projections and the associated theorem has a dependency on the coherence of the set $\{\vb_i\}_{i=1}^N$. When the points are not spiky, (i.e.~small infinity norm), the optimal embedding works for a larger regime of $N$. For maximally incoherent sets we can allow $\log N \lesssim n^{1/2}$. Our second result is a corollary of the first one and attempts to remove the dependence on incoherence. This is done by applying an additional layer of randomness $\x\rightarrow \Hb\diag{\bb}\vb$ where $\Hb$ is the Hadamard transform and $\diag{\bb}$ is a diagonal matrix with independent Rademacher diagonal entries. The overall embedding takes the form $\vb\rightarrow \sgn{\Rb\Cb_{\h}\diag{\rb}\Hb\diag{\bb}\vb}$. Observe that all matrix multiplications are still near-linear time. This model makes no assumption on the set $\{\vb_i\}_{i=1}^N$ and optimal embedding is possible as soon as $\log N\lesssim n^{1/3}$. Furthermore, if $\log N\lesssim \sqrt{n}$, fast and optimal binary embedding still succeeds for all but arbitrarily small fraction of the points.


\subsection{Related Literature}
Binary embedding with dense Gaussian matrices is a rather well studied problem. Guarantees for finite set of points can be obtained by applying a standard Chernoff bound. Embedding continuous sets is a more challenging problem and it is studied in a series of papers \cite{plan2014dimension,oymak2015near,boufounos20081,plan2013robust,jacques2013robust} with results mostly restricted to Gaussian ensemble. Much less is known for the fast embedding techniques that make use of Fourier or Hadamard transforms. We can split the existing results in this direction into two groups.

\noindent $\bullet$ {\bf{Fast JL embedding followed by dense (two-layer) map:}} This map is given by $\x\rightarrow \sgn{\Gb\F\diag{\bb}\x}$ where $\Gb\in\R^{k\times k'}$ is a dense Gaussian matrix, $\F\in\R^{k'\times n}$ is the subsampled Discrete Fourier Transform matrix, $\h$ has independent Rademacher entries, and $k\approx k'\approx \order{\log N}$. This first applies a fast linear dimensionality reduction to linearly embed points to the lower dimensionality space $\R^{k'}$. Next, we use a dense Gaussian matrix to obtain a binary embedding guarantee. This approach is not computationally efficient as soon as $k\gtrsim \order{\sqrt{n}}$ as dense Gaussian multiplication becomes more expensive than the Fourier transform. In \cite{yi2015binary}, Yi et al. propose a related but more efficient algorithm by replacing $\Gb$ with a more involved procedure involving Toeplitz matrices.

The optimal embedding bound of the present paper applies in the regime $k\leq \sqrt{n}$ which shows that circulant projections perform as good as two-layer maps computationally (both require $\order{n\log n}$ in the regime $k\leq \order{\sqrt{n}}$). However, the proposed approach is much simpler and easily extends to the regime $k\geq \order{\sqrt{n}}$ in an efficient manner (albeit without proof).


\noindent $\bullet$ {\bf{Simply use Fast JL embedding:}} We simply apply a Fast JL Transform by using a circulant matrix. The map we consider has the form $\x\rightarrow\sgn{\Rb\Cb_{\h} \diag{\rb}}$ where $\Rb\in\R^{k\times n}$ is the subsampling operator, $\Cb_{\h}$ is a circulant matrix and $\rb,\h$ are vectors with iid $\Nn(0,1)$ entries. Since circulant matrices are diagonalized by the Discrete Fourier Transform, the computational complexity of embedding is always $\order{n\log n}$ independent of the sample size $k\leq n$. Yu et al. \cite{yu2015binary} very recently provided an analysis of this map with rigorous sample complexity bounds. While their result has significant dependency on the set geometry, under best circumstances, they show that $k\gtrsim \log^2 N$ samples are sufficient for successful embedding as long as $\log N\lesssim n^{1/6}$. As an example of geometric dependence, the results of \cite{yu2015binary} depends on the maximal correlation of the point set $\sup_{i\neq j}|\vb_i^*\vb_j|$ and becomes arbitrarily suboptimal as this number approaches $1$. In particular, they cannot allow points that are close to each other. There are also several works on the applications of fast binary projections in large scale image retrieval and hashing algorithms \cite{yu2014circulant,wang2014binary,gong2011iterative}.

\section{Main results}
Suppose we are given $N$ unit vectors in $\R^n$ namely $\{\vb_i\}_{i=1}^N$. Our task is mapping this points to a low-dimensional Hamming cube in $\R^k$ while preserving the distances. We are interested in ensuring that for all $1\leq i,j\leq N$, $\A$ satisfies
\beq
|k^{-1}\|\sgn{\A\vb_i},\sgn{\A\vb_j}\|_H-\ang(\vb_i,\vb_j)|\leq \delta.\nn
\eeq
As a geometric feature, we shall make use of the coherence of the set which is defined as \[\rho_{cross}=\max\{\sup_{1\leq i\leq N}\ino{\vb_i}, ~\sup_{1\leq i\neq j\leq N} \frac{\ino{\vb_i-\vb_j}}{\tn{\vb_i-\vb_j}}\}.\] For our results to work, we make the following assumptions on $N,k,n$ and the coherence parameter.
\begin{condition} \label{cond1}There exists sufficiently large nonnegative constants $c_1,c_2,c_3$\footnote{$c,C,\{c_i,C_i\}_{i\geq 0},c',C'$ will be used to denote absolute constants that may vary from line to line.}, such that
\begin{enumerate}
\item $k>c_1\delta^{-3}\log N$.
\item $c_2\delta k\rho_{cross} \log n < 1$.
\item $\delta \geq c_3k\rho_{cross} $.
\end{enumerate}
\end{condition}
Observe that in the maximally incoherent case ($\rho_{cross}=\order{n^{-1/2}}$), we can pick $\delta=o(1)$, $k=\order{(\log n)^{-1}n^{1/2}}$ and $\log N=\order{\delta^{-3} k}$. Hence, our optimal embedding result applies up to $\order{\sqrt{n}}$ as the embedding dimension. Our main result is on fast binary embedding of finite set of points with near-optimal embedding dimensions and is stated in the next theorem.
\begin{theorem}\label{main result} Let $\A=\Rb\Cb_{\h}\diag{\rb}\in\R^k$ be a circulant projection as described above. Under the assumptions of Condition \ref{cond1}, with probability $1-\exp(-c_4\delta^3k)$, for all $\x,\y\in\{\vb_i\}_{i=1}^N$, we have that
\beq
|k^{-1}\|\sgn{\A\x},\sgn{\A\y}\|_H-\ang(\x,\y)|\leq \delta.\nn
\eeq
\end{theorem}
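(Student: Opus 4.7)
My plan is to fix an arbitrary pair $\x,\y\in\{\vb_i\}_{i=1}^N$, prove the distortion bound with failure probability $\exp(-c\delta^3 k)$ for that pair, and conclude by a union bound over the $\binom{N}{2}$ pairs; the factor $N^2\leq\exp(2\log N)$ is absorbed by Condition~\ref{cond1}(1). For a single pair, the first structural step is to expose the shift structure of $\A$: writing $\tau(i)\in\{1,\ldots,n\}$ for the row that $\Rb$ selects at position $i$, one has $(\A\x)_i=\langle\h,\tilde\x_i\rangle$, where $\tilde\x_i$ is the cyclic shift of $\rb\odot\x$ by $\tau(i)-1$, and analogously $(\A\y)_i=\langle\h,\tilde\y_i\rangle$. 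Since cyclic shifts are $\ell_2$-isometries, $\|\tilde\x_i\|=\|\rb\odot\x\|$ and $\|\tilde\y_i\|=\|\rb\odot\y\|$ for every $i$, so conditional on $(\rb,\Rb)$ the jointly Gaussian pair $((\A\x)_i,(\A\y)_i)$ has the row-independent correlation
\[
\rho(\rb):=\frac{\sum_j r_j^2 x_j y_j}{\sqrt{\sum_j r_j^2 x_j^2\,\sum_j r_j^2 y_j^2}}.
\]
By Grothendieck's sign identity, the per-row disagreement probability $\E[f_i\mid\rb]=\arccos(\rho(\rb))/\pi$, with $f_i:=\mathbf{1}[\sgn{(\A\x)_i}\neq\sgn{(\A\y)_i}]$, is identical for every $i$.

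The proof then splits into two concentration tasks. The first is $|\arccos(\rho(\rb))/\pi-\ang(\x,\y)|\leq\delta/2$ with high probability over $\rb$. The three Gaussian quadratic forms $\sum_j r_j^2 x_j y_j,\ \sum_j r_j^2 x_j^2,\ \sum_j r_j^2 y_j^2$ have means $\langle\x,\y\rangle,\,1,\,1$; by Hanson--Wright each fluctuates at subgaussian scale controlled by $\infnorm{\x}\infnorm{\y}\leq\rho_{cross}^2$ (with the cross-norm of $\x-\y$ handling the $\y$-side), and Conditions~\ref{cond1}(2,3) are exactly what is needed to make these fluctuations a small fraction of $\delta$. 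The passage from the concentration of $\rho(\rb)$ to that of $\arccos(\rho(\rb))/\pi$ costs a factor of $(1-\rho^2)^{-1/2}$, which blows up near $\pm 1$; this is where $\delta^{-3}$ (rather than $\delta^{-2}$) enters the sample complexity. One handles the near-antipodal regime $|\rho(\rb)|\geq 1-\Theta(\delta^2)$ separately by noting that sign disagreements are then exponentially rare events for which a direct Chernoff-type bound applies despite row dependence.

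The main obstacle is the second concentration task: bounding $|k^{-1}\sum_i f_i-\arccos(\rho(\rb))/\pi|$ conditional on $(\rb,\Rb)$. The bits $f_i$ are not independent because every row shares $\h$. However, the off-diagonal Gaussian covariances within the $4$-tuple $((\A\x)_i,(\A\y)_i,(\A\x)_{i'},(\A\y)_{i'})$ are precisely the cyclic auto- and cross-correlations of $\rb\odot\x$ and $\rb\odot\y$ at lag $\tau(i)-\tau(i')$. Each such correlation has mean zero and, for Gaussian $\rb$, concentrates at scale $1/\sqrt n$ by Hanson--Wright, uniformly over the at most $n$ possible lags; converting to bit-level covariances $\E[(f_i-\E f_i)(f_{i'}-\E f_{i'})\mid\rb,\Rb]$ via the two-dimensional Gaussian orthant formula yields per-pair correlations of the same order. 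I would then bound $\E\!\left[\left(\sum_i(f_i-\E f_i)\right)^{2p}\mid\rb,\Rb\right]$ by a combinatorial expansion in which each edge contributes at most the maximum off-diagonal correlation, and take $p\sim\delta^3 k$ so that Markov's inequality delivers the $\exp(-c\delta^3 k)$ tail. Conditions~\ref{cond1}(2,3) are consumed again to absorb the logarithmic factor that appears when uniformizing over the $n$ lags. Combining the two concentration bounds for each pair and union-bounding completes the proof; I expect this moment expansion, together with the bookkeeping for the random lags introduced by $\Rb$, to be the technically most delicate step.
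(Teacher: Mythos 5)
Your setup is the same as the paper's: each row of $\A\x$ is $\langle\h,\tilde\x_i\rangle$ with $\tilde\x_i$ a cyclic shift of $\rb\odot\x$, the per-row disagreement probability conditional on $\rb$ is exactly $\arccos(\rho(\rb))/\pi$ and is the same for every row, and the first concentration task (Hanson--Wright on the quadratic forms in $\rb$, then a perturbation bound on the angle) matches the paper's estimate of the quantity it calls $\Delta_{\rb}$. The gap is in your second concentration task. Your plan to bound $\E[(\sum_i(f_i-\E f_i))^{2p}]$ by a combinatorial expansion ``in which each edge contributes at most the maximum off-diagonal correlation'' is not a valid step: joint moments of $2p$ dependent sign variables are not controlled by products of pairwise bit-level covariances. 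The $f_i$ all share the single Gaussian vector $\h$, and a generic family of $\{0,1\}$ variables with pairwise covariances of order $\sqrt{\log n/n}$ only yields Chebyshev-type (polynomial) concentration; to extract an $\exp(-c\delta^3k)$ tail you would need genuine higher-order structure (e.g.\ a Hermite/Wick expansion of the sign function with careful control of the matching combinatorics and of the randomness of the lag correlations in $\rb$), none of which is supplied. As written, the central step of the argument does not go through.

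The paper resolves exactly this dependence by a different device, which is the idea missing from your proposal: a Gram--Schmidt orthogonalization of the sampled shift vectors, $\X_{r_i}=\X_{r_i}'+\pb_i$ and $\Y_{r_i}=\Y_{r_i}'+\pb_i'$, where $\pb_i$ is the projection onto the span of the previously selected vectors. Because $\h$ is Gaussian and the $\X_{r_i}',\Y_{r_i}'$ are orthogonal across $i$, the pairs $(\h^*\X_{r_i}',\h^*\Y_{r_i}')$ are exactly independent, so a standard Chernoff bound applies to buffered sign-agreement events $E_i$ defined with a margin $\delta_{buff}$. The perturbations are then handled deterministically: the number of rows where $|\h^*\pb_i|$ exceeds the buffer is at most $\|\Pb^*\h\|^2/\delta_{buff}^2$, and this is controlled by showing $\tn{\pb_i}\leq C\delta\rho k$ and $\|\Pb\|\leq 7$, which in turn rests on conditioning bounds for the matrix of sampled shifts (a variant of Tropp's incoherent-subdictionary theorem, using the coherence $\mu(\Phi)\lesssim\rho\sqrt{\log n}$ from Hanson--Wright). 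This is also where the $\delta^{-3}$ actually enters --- through the subexponential tail of $\|\Pb^*\h\|^2$ at level $\epsilon_{buff}\delta_{buff}^2k=\delta^3k$ --- rather than through the derivative of $\arccos$ near $\pm1$ as you suggest; the paper's angle-perturbation lemma gives a uniform $\sqrt{\alpha}$ bound, so no separate near-antipodal case is needed.
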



This result applies to arbitrary set of points; however, it depends on the incoherence of the set $\rho_{cross}$. One can get rid of this dependency by applying an additional layer of randomization. In particular, let $\Hb$ be a Hadamard matrix of size $n$ and let $\bb\in\R^n$ be a vector with independent Rademacher entries. If $n$ is not a power of $2$, we can simply zero-pad the vectors. Consider the map \[\A_{\Hb}=\A\Hb\diag{\bb}=\Rb\Cb_{\h}\diag{\rb}\Hb\diag{\bb}.\] For this map, we have the following result that is incoherence-free.
\begin{theorem} \label{no assumption}Consider the binary embedding via the operator $\x\rightarrow \sgn{\A_{\Hb}\x}$. There exists universal constants $c,C>0$ such that following holds. Suppose 
\[\log N\leq c\delta^{2}(\log n)^{-1}n^{1/3}.\] Then, with probability $1-\exp(-c\log N)$, the point set $\w_i=\Hb\diag{\bb}\vb_i$ obeys the incoherence condition with $\rho_{cross}=C\delta (\log n)^{-1/2}n^{-1/3}$. Consequently, as soon as $k\geq c_1\delta^{-3}\log N$, with probability $1-\exp(-c_4\delta^3k)$,
\beq
|k^{-1}\|\sgn{\A_{\Hb}\x},\sgn{\A_{\Hb}\y}\|_H-\ang(\x,\y)|\leq \delta.\nn
\eeq 
\end{theorem}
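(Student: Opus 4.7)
The plan is to reduce Theorem~\ref{no assumption} to Theorem~\ref{main result}. Since the normalized Hadamard matrix $\Hb$ is orthogonal and $\diag{\bb}$ has $\pm 1$ diagonal entries, the preconditioner $\Hb\diag{\bb}$ is an isometry. Setting $\w_i=\Hb\diag{\bb}\vb_i$ preserves every pairwise angle and Euclidean distance, so the embedding guarantee for $\A_{\Hb}=\A\Hb\diag{\bb}$ applied to $\{\vb_i\}$ is equivalent to that of $\A$ applied to $\{\w_i\}$. It therefore suffices to verify that, with probability at least $1-\exp(-c\log N)$ over $\bb$, the set $\{\w_i\}$ satisfies Condition~\ref{cond1} with $\rho_{cross}\leq C\delta(\log n)^{-1/2}n^{-1/3}$; Theorem~\ref{main result} then delivers the stated conclusion directly.

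The main probabilistic ingredient is a Khintchine-type bound on $\ino{\Hb\diag{\bb}\ub}$ for any fixed unit vector $\ub\in\R^n$. With the normalization $|H_{ij}|=1/\sqrt{n}$, each coordinate
\[
(\Hb\diag{\bb}\ub)_i=\sum_{j=1}^n H_{ij}\,b_j\,u_j
\]
is a Rademacher sum whose terms are bounded by $|u_j|/\sqrt{n}$ and whose variance is $\tn{\ub}^2/n=1/n$. Hoeffding's inequality therefore gives $\P(|(\Hb\diag{\bb}\ub)_i|\geq t)\leq 2\exp(-nt^2/2)$. The unit vectors that matter for $\rho_{cross}$ are the $\vb_i$'s together with the normalized differences $(\vb_i-\vb_j)/\tn{\vb_i-\vb_j}$, totalling $O(N^2)$ vectors; since $\Hb\diag{\bb}$ is an isometry it maps each such normalized difference for $\{\vb_i\}$ to the corresponding one for $\{\w_i\}$. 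A union bound over these $O(N^2)$ vectors and their $n$ coordinates, at the threshold $t=C_0\sqrt{(\log N+\log n)/n}$, yields $\rho_{cross}(\{\w_i\})\leq t$ with probability at least $1-\exp(-c\log N)$.

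Invoking the hypothesis $\log N\leq c\delta^2(\log n)^{-1}n^{1/3}$, one obtains $t\leq C\delta(\log n)^{-1/2}n^{-1/3}$, as required. With $k=\Theta(\delta^{-3}\log N)$ and $\rho_{cross}$ at this level, Condition~\ref{cond1} can be verified by direct substitution: item~1 is the hypothesis; item~2 reduces to $\delta^2 k(\log n)^{1/2}n^{-1/3}\lesssim\delta$, which is $o(1)$ because $k\lesssim\delta^{-1}(\log n)^{-1}n^{1/3}$ by the hypothesis on $\log N$; item~3 follows similarly after combining the same upper bound on $k$ with the derived expression for $\rho_{cross}$. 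Theorem~\ref{main result} applied to $\A$ and $\{\w_i\}$ then yields the claimed embedding guarantee for $\A_{\Hb}$ on $\{\vb_i\}$.

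The hardest part is purely bookkeeping: the Hadamard preconditioning idea is classical (Ailon--Chazelle style), and the infinity-norm bound is a routine Hoeffding computation. Care is needed only to ensure that the single incoherence level $C\delta(\log n)^{-1/2}n^{-1/3}$ simultaneously satisfies all three inequalities in Condition~\ref{cond1} for the admissible range of $k$, and that the two failure probabilities---$\exp(-c\log N)$ for the preconditioning and $\exp(-c_4\delta^3 k)$ for the circulant embedding---combine without collision.
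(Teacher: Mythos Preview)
Your proposal is correct and follows essentially the same route as the paper: the paper's proof is simply ``combine Theorem~\ref{main result} with Lemma~\ref{rand modulation},'' where Lemma~\ref{rand modulation} is exactly the Hoeffding/union-bound argument you spell out for $\ino{\Hb\diag{\bb}\ub}$. If anything you are slightly more careful than the paper, since Lemma~\ref{rand modulation} as stated only controls $\rho_{direct}=\sup_i\ino{\w_i}$, whereas you explicitly union-bound over the $O(N^2)$ normalized differences as well to control the full $\rho_{cross}$ that Condition~\ref{cond1} actually requires; you also verify Condition~\ref{cond1} explicitly, which the paper leaves to the reader.
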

\begin{proof} This result follows from the fact that the set of points obtained by the map $\vb_i\rightarrow \Hb\bb\vb_i$ has desirable geometric features (small $\rho_{cross}$) with high probability. In particular, combine Theorem \ref{main result} with Lemma \ref{rand modulation}.
\end{proof}
Finally, the next result shows that one can optimally embed most of the points as long as $\log N\lesssim \order{\sqrt{n}}$.
\begin{theorem}\label{no assumption some}Consider the binary embedding via the operator $\x\rightarrow \sgn{\A_{\Hb}\x}$. There exists universal constants $c,C>0$ such that following holds. Suppose 
\[\log N\leq c\delta^{3}(\log n)^{-2}n^{1/2}.\] Then, with probability $1-n^{-2}$ (over $\Hb$), there exists $S_{good}\subseteq \{\vb_i\}_{i=1}^N$ such that
\[|S_{good}|\geq (1-c_5n^{-2})N,~~~\text{and}~~~\text{for all }\vb\in S_{good}~\text{:}~\ino{\Hb\diag{\bb}\vb}\leq \rho_{cross}\]
where $\rho_{cross}=C \sqrt{\log n/n}$. Consequently, as soon as $k\geq c_1\delta^{-3}\log N$, with probability $1-n^{-2}-\exp(-c_4\delta^3k)$, all $\x,\y$ chosen from $S_{good}$ obeys
\beq
|k^{-1}\|\sgn{\A_{\Hb}\x},\sgn{\A_{\Hb}\y}\|_H-\ang(\x,\y)|\leq \delta.\nn
\eeq 
\end{theorem}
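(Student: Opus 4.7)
The plan is to reduce the claim to Theorem~\ref{main result} by arguing that the random Hadamard modulation $\Hb\diag{\bb}$ sends nearly all of the $\vb_i$ to vectors with small $\ell_\infty$ norm. For any fixed unit vector $\ub$, each coordinate of $\w=\Hb\diag{\bb}\ub$ is a Rademacher sum of $L_2$-norm $n^{-1/2}$; Hoeffding plus a union bound over the $n$ coordinates yields
\[
\P\bigl(\ino{\w}>C_0\sqrt{(\log n)/n}\bigr)\le 2n^{1-C_0^2/2},
\]
which is below $n^{-\alpha}$ for any fixed $\alpha$ once $C_0$ is large enough.

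Next, apply this per-vector bound to each $\vb_i$. Let $X$ count the bad indices; linearity of expectation gives $\E X\le Nn^{-\alpha}$, and Markov's inequality turns this into $\P(X>Nn^{-2})\le n^{-\alpha+2}$, which is at most $n^{-2}$ once $\alpha\ge 4$. Let $S_{good}$ be the set of indices satisfying the $\ell_\infty$ bound: with probability $\ge 1-n^{-2}$ we have $|S_{good}|\ge(1-n^{-2})N$ and $\ino{\Hb\diag{\bb}\vb_i}\le \rho_{cross}:=C\sqrt{(\log n)/n}$ for every $i\in S_{good}$.

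With the above $\rho_{cross}$ and $k=c_1\delta^{-3}\log N$, I would then check that the hypothesis $\log N\le c\delta^{3}(\log n)^{-2}n^{1/2}$ is exactly what one needs to certify the two remaining items of Condition~\ref{cond1}, namely $c_2\delta k\rho_{cross}\log n<1$ and $\delta\ge c_3 k\rho_{cross}$. Theorem~\ref{main result} applied to the restricted set then yields the uniform embedding bound with failure probability $\exp(-c_4\delta^3 k)$; combining with the $n^{-2}$ loss from the Markov step gives the advertised probability $n^{-2}+\exp(-c_4\delta^3 k)$.

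The main obstacle is the pair-difference half of the $\rho_{cross}$ definition, which also requires $\ino{\Hb\diag{\bb}(\vb_i-\vb_j)}/\tn{\vb_i-\vb_j}\le \rho_{cross}$ for every surviving pair. The natural path is to repeat the first two steps on the $\binom{N}{2}$ normalized differences and then apply a vertex-cover pruning of the resulting bad-pair graph to delete an additional $O(Nn^{-2})$ indices. Controlling this trimming loss in the regime where $\log N$ is as large as $n^{1/2}$ is the delicate part, and may require either a sharper large-deviation tail than Hoeffding or a refinement of Theorem~\ref{main result} that relies only on pair-local coherence rather than a uniform supremum.
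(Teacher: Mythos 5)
Your first half is exactly the paper's argument: the paper proves this theorem by invoking Lemma \ref{most modulation} with $p=n^{-2}$ and then Theorem \ref{main result}, and Lemma \ref{most modulation} is proved precisely by your route --- each coordinate of $\w_i=\Hb\diag{\bb}\vb_i$ is a Rademacher sum with $\ell_2$-weight $n^{-1/2}$, a subgaussian tail plus a union bound over the $n$ coordinates gives $\Pro(\ino{\w_i}\geq C\sqrt{\log n/n})\leq c''n^{-3}$, and Markov's inequality on the count of bad indices produces $S_{good}$. So on everything the paper actually writes down, you and the paper coincide.

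The ``main obstacle'' you flag at the end is genuine, and it is worth saying plainly that the paper does not resolve it either. Condition \ref{cond1} is stated in terms of $\rho_{cross}$, which includes the supremum of $\ino{\vb_i-\vb_j}/\tn{\vb_i-\vb_j}$ over pairs, whereas Lemma \ref{most modulation} (and the displayed conclusion of Theorem \ref{no assumption some} itself) only controls the direct coherence $\sup_i\ino{\Hb\diag{\bb}\vb_i}$; the paper's only bridge is the remark $\rho_{cross}\leq 2\sin(\theta)^{-1}\rho_{direct}$ with $\theta$ the minimum pairwise angle, which is never invoked here and degrades for nearly parallel pairs. Your proposed repair does not close this at the stated parameters: with a per-pair failure probability of order $n^{-3}$ the expected number of bad pairs is of order $N^2n^{-3}$, so after Markov at level $p=n^{-2}$ a vertex-cover pruning may delete order $N^2n^{-1}$ points, which is far more than $c_5n^{-2}N$ for any nontrivial $N$. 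Closing it would require either a per-pair tail small compared to $N^{-1}n^{-2}$ (which the fixed threshold $C\sqrt{\log n/n}$ cannot deliver once $\log N\gg\log n$) or, as you suggest, a version of Theorem \ref{main result} whose perturbation analysis for a given pair $\x,\y$ uses only the coherence of that pair. In short: your proposal reproduces the paper's proof of what the paper proves, and correctly identifies a step the paper glosses over; as written, neither argument fully certifies the hypothesis of Theorem \ref{main result} for the trimmed set. (A minor additional caveat: even granting $\rho_{cross}=C\sqrt{\log n/n}$, the hypothesis $\log N\leq c\delta^3(\log n)^{-2}n^{1/2}$ yields item 3 of Condition \ref{cond1} only when $\delta\gtrsim(\log n)^{-3/2}$, so ``exactly what one needs'' is slightly optimistic, though harmless for constant $\delta$.)
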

\begin{proof} This result follows from the fact that all but a small fraction of the set of points obtained by the map $\vb_i\rightarrow \Hb\bb\vb_i$ has desirable geometric features (small $\rho_{cross}$) with high probability. In particular, combine Theorem \ref{main result} with Lemma \ref{most modulation}. Pick $p=n^{-2}$ in Lemma \ref{most modulation}.
\end{proof}

\section{Conclusions and Open Problems}
In this work, we showed that fast binary embedding with near optimal dimensions are possible. In particular, our embedding bounds are consistent with the state of the art results for linear embedding, indicating that fast binary embedding is feasible under identical conditions to fast linear embedding such as \cite{ailon2006approximate,krahmer2011new,oymak2015isometric}. This is the first such result for fast binary embedding and significantly improves over related literature (e.g. \cite{yu2015binary,le2013fastfood}). We believe the tools developed in this paper broadly applies to nonlinear embedding tasks. For instance, our argument may be used to improve the concentration estimates of Fastfood features \cite{le2013fastfood} which is a popular fast kernel approximation technique. Our embedding result holds for finite set of points and it is of interest to extend this work to continuous sets. A weakness of our result is the fact that the embedding dimension scales up to $\order{\sqrt{n}}$ which limits the number of points to $\log N\lesssim \order{\sqrt{n}}$. This work opens up several research directions.
\begin{itemize}
\item {\bf{Fast embedding in linear regime:}} Does fast binary embedding work with embedding dimension $n$? In other words, can we pick $k\sim \order{n}$ to embed $N\sim \exp(\order{k})$ points? If not, is there a fundamental bottleneck at $k\sim\order{\sqrt{n}}$?
\item {\bf{Practical considerations:}} Our result on circulant embedding $\Cb_{\h}\diag{\rb}$ requires $\h$ and $\rb$ to have Gaussian entries. We believe $\rb$ can have Rademacher entries without impacting the performance. It would possibly improve the performance as the operator $\vb\rightarrow \diag{\rb}\vb$ preserves the inner products when $\rb$ is Rademacher. Furthermore, it is not clear whether the incoherence assumption in Theorem \ref{main result} is necessary. Numerical results of prior work \cite{yu2015binary,yu2014circulant} indicates that the map $\vb\rightarrow\text{sign}(\Cb_{\h}\diag{\rb}\vb)$ works well which suggests that we may not need additional randomization via Hadamard transform. This would allow us to discard one layer of the embedding, namely, $\vb\rightarrow \Hb\diag{\bb}\vb$.
\item {\bf{General nonlinear embedding:}} With a minor modification of our analysis, it is possible to obtain fast embedding bounds for a more general model $f(\A\x)$ where $f$ is a function that apply pointwise. The important use cases would be to replace $\sgn{\cdot}$ function with a general function such as quantization, ReLU, sigmoid etc \cite{giryes2015deep,jacques2015quantized}. It would also be of interest to investigate quadratic samples arising in phase retrieval \cite{jaganathan2013sparse,candes2015phase}. 
\item {\bf{Embedding of continuous sets:}} Our current results apply to finite set of points however it is of interest to embed continuous sets such as subspaces or sparse and low-rank manifolds. While this problem is studied for dense Gaussian embedding matrices, we believe similar results can be obtained for fast embedding matrices by building on this work and \cite{yu2015binary}.
\end{itemize}

The rest of the paper is dedicated to the proof of our main result Theorem \ref{main result}. Before going into technical details, we introduce the necessary notation. Given a vector $\x\in\R^n$, let $s_i(\x)$ be the vector obtained by shifting entries of $\x$ by $i$ position, i.e. $s_i(\x)$. In particular, $j$th entry of $s_i(x)$ is same as $(i+j)$th entry of $\x$ modulo $n$. $\sigma_{\min}(\cdot)$ and $\sigma_{\max}(\cdot)$ returns minimum and maximum singular values of a matrix respectively. $\|\cdot\|$ denotes the spectral norm of a matrix and is same as $\sigma_{\max}(\cdot)$. $\text{diag}(\cdot)$ returns a diagonal matrix from a vector input or returns the vector of diagonal entries of a matrix. $c,C,\{c_i,C_i\}_{i\geq 0},c',C'$ will be used to denote absolute constants. For nonzero $\x$, $\bar{\x}=\x/\tn{\x}$ and $\bar{0}=0$. Throughout this work, Hadamard and Discrete Fourier Transform matrices are normalized to be unitary. A standard Gaussian vector obeys the distribution $\Nn(0,\Iden)$. $S$ denotes a subset of $\{1,2,\dots, n\}$ obtained by picking $k$-elements uniformly at random without replacement. Define direct coherence to be $\rho_{direct}=\sup_{1\leq i\leq N} \ino{\vb_i}$. Let $\theta>0$ be the smallest angle between these points namely $\min_{i\neq j} \ang(\vb_i,\vb_j)$. It is trivial to show that the cross coherence can be bounded as $\rho_{cross}\leq 2\sin(\theta)^{-1}\rho_{direct}$. $\ell$th entry of a vector of size $n$ is same as ``$\ell~(\text{mod}~n)$''th entry of the vector.


\section{Controlling the Conditioning of the Projection Matrix}

To simplify our notation, we shall assume that $N\geq n$. $n>N$ case can be recovered by setting $n=N$ in our main result. Let $\{r_i\}_{i=1}^k$ be distinct numbers selected from the set $\{1,2,\dots,n\}$ uniformly at random.
\begin{definition} [Random shift vectors] Let $\x\in\Sc^{n-1}$ and let $\rb\in\R^n$ be a standard Gaussian vector. Random shift vectors of $\x$ are a set of random vectors $\{\X_i\}_{i=1}^n$ such that $\X_i=s_i(\diag{\rb}\x)$ for $0\leq i\leq n-1$. Define $\Y_i$ in the identical manner given vector $\y$ for the same choice of $\rb$.
\end{definition}

The following theorem summarizes the main result of this section by providing a spectral norm bound on subsampled random shift vectors.
\begin{theorem}\label{thm main} Pick unit vectors $\x,\y\in\R^n$ satisfying $\x^*\y=0$ and $\ino{\x},\ino{\y}\leq \rho$. Form a matrix $\M\in\R^{n\times 2k}$ by picking the same $k$ vectors $\{\Xii\}_{i=1}^k,\{\Yi\}_{i=1}^k$ from each of $\{\X_i\}_{i=1}^n$ and $\{\Y_i\}_{i=1}^n$ without replacement uniformly at random and then stacking next to each other. With probability $1-2\exp(-\delta^2k)$ (over $\rb$ and selection of $\{\Xii\}_{i=1}^k$'s), we have that
\beq
\sigma_{\max}(\M^*\M-\Iden)\leq C\delta k\rho\log n.
\eeq
\end{theorem}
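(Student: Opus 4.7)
The plan is to split $\M^*\M-\Iden$ into its diagonal and off-diagonal parts and to bound each by exploiting the two independent sources of randomness (the Gaussian $\rb$ and the uniform subsample $\{r_i\}$). The $2k$ diagonal entries of $\M^*\M$ equal $\|\diag{\rb}\x\|^2=\sum_m r_m^2 x_m^2$ in the $\x$-block and $\|\diag{\rb}\y\|^2$ in the $\y$-block; using $\ino{\x},\ino{\y}\leq\rho$ together with $\tn{\x}=\tn{\y}=1$, Bernstein for weighted chi-squares gives $|\sum_m r_m^2 x_m^2-1|\leq\delta k\rho\log n$ with probability at least $1-2\exp(-c\delta^2 k)$, which handles the diagonal contribution to $\sigma_{\max}$.

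For the off-diagonal part, conditioning on $\rb$, every off-diagonal entry of $\M^*\M$ is a bilinear form of the type $\sum_m r_m r_{m+d}\, a_m b_{m+d}$ in $\rb$, with $a,b\in\{\x,\y\}$ and offset $d=r_j-r_i$ (the cross-block $d=0$ case is mean-zero thanks to $\x^*\y=0$). The symmetrised coefficient matrix has Frobenius norm $\lesssim\rho$ (from $\sum_m a_m^2 b_{m+d}^2\leq\rho^2$) and operator norm $\lesssim\rho^2$ (it is a shift conjugated by $\diag{a},\diag{b}$), so Hanson-Wright bounds each entry by $C\rho\log n$ in magnitude with probability at least $1-n^{-c}$. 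A union bound over the $O(n)$ admissible offsets $d$ and the four $(a,b)$ combinations then yields a uniform entrywise bound $C\rho\log n$ on the off-diagonal part of $\M^*\M$ with overwhelming probability over $\rb$.

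The crux is to upgrade this entrywise control into an operator-norm bound $C\delta k\rho\log n$ on the random $2k\times 2k$ principal submatrix $W_{S,S}$ of the fixed $2n\times 2n$ block-circulant matrix $W$ (with block diagonal zeroed out and $\|W\|_{\max}\leq C\rho\log n$). A naive Frobenius bound gives only $2k\rho\log n$ and misses precisely the factor $\delta$. To recover it I would apply matrix concentration for random principal submatrices: diagonalising each circulant block of $W$ in the Fourier basis and restricting to $S$ writes $W_{S,S}=\sum_q\mu_q(P^*u_q)(P^*u_q)^*$, to which matrix Bernstein with variance parameter $\sigma^2\lesssim k\rho^2\log^2 n$ and uniform bound $\lesssim\rho\log^2 n$ produces the required $\|W_{S,S}\|\leq C\delta k\rho\log n$ with tail $\exp(-c\delta^2 k)$. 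The main obstacle is that the rank-one summands $\mu_q(P^*u_q)(P^*u_q)^*$ all depend on the same subsample $\{r_i\}$ and are therefore not independent, so off-the-shelf matrix Bernstein for independent summands does not apply; I would address this either by Poissonising the subsample (replacing sampling without replacement by independent Bernoullis of rate $k/n$, at the cost of $O(1)$ constants) or, alternatively, by a moment / decoupling computation that handles the shared subsampling randomness and the block-circulant structure of $W$ directly.
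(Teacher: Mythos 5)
Your first two steps (diagonal entries via a weighted chi-square bound, off-diagonal entries via Hanson--Wright applied to the shifted bilinear forms $\sum_m r_m r_{m+d}a_m b_{m+d}$, exploiting $\fronorm{\cdot}\lesssim\rho$ and $\opnorm{\cdot}\lesssim\rho^2$ for the coefficient matrices) coincide with the paper's Lemma \ref{hs app} and the coherence estimate $\mu(\Phi)\lesssim\rho\sqrt{\log n}$. The problem is that you have correctly identified the crux --- upgrading entrywise control of the fixed $2n\times 2n$ Gram matrix to an operator-norm bound on its random $2k\times 2k$ principal submatrix that gains the factors $k/n$ and $\delta$ over the trivial Frobenius bound --- but your proposed resolution of it has two genuine gaps. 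First, the decomposition $W_{S,S}=\sum_q\mu_q(P^*u_q)(P^*u_q)^*$ sums over Fourier modes $q$, so \emph{every} summand depends on the \emph{same} random restriction $P$; Poissonizing the subsample does not fix this, because after Poissonization $W_{S,S}=\sum_{j\neq k}\delta_j\delta_k W_{jk}\e_j\e_k^*$ is a \emph{quadratic} form in the independent Bernoulli indicators, and one still needs a decoupling step to separate the row and column restrictions before any independent-summand concentration inequality applies. This decoupling-plus-two-stage-restriction argument is precisely the content of the paper's Theorems \ref{thm asym}, \ref{tropp var} and \ref{tropp var2} (following Tropp's incoherent-subdictionary analysis), and it is the step your sketch defers to an unspecified ``moment / decoupling computation.'' Second, your stated matrix-Bernstein parameters (uniform bound $\lesssim\rho\log^2 n$, variance $\lesssim k\rho^2\log^2 n$) implicitly require control of $\max_q|\mu_q|=\|W\|$, i.e.\ of the spectral norm of the \emph{full} shift matrix; the entrywise bound only gives $\|W\|\leq 2n\rho\sqrt{\log n}$, which is off by roughly a factor $\sqrt{n}/\rho$ from what you need. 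The paper supplies the missing estimate $\|\Phi\|\lesssim\rho\sqrt{n\log n}$ via a separate trace-moment comparison with a complex Gaussian circulant (Lemma \ref{lemma markov}); without some version of this, the $\frac{k}{n}\|\Phi\|^2$-type term cannot be brought below $\delta k\rho\log n$, and your uniform/variance parameters are unjustified.

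A further structural difference worth noting: the Tropp-style bound the paper uses controls only the \emph{conditional expectation} $\E_{\rb}[\sigma_{\max}(\M^*\M-\Iden)]$ with high probability over the choice of $S$ (Theorem \ref{lemma main}), and a second, separate argument --- the $\sqrt{k}\ino{\x}$-Lipschitz dependence of $\sigma_{\max}(\M)$ and $\sigma_{\min}(\M)$ on $\rb$ together with Gaussian concentration --- converts this into the advertised high-probability statement over $\rb$ with tail $\exp(-\delta^2k)$. Your sketch never separates the two sources of randomness in this way and gives no mechanism for concentration in $\rb$ beyond the entrywise union bound, whose failure probability is polynomial in $n$ rather than $\exp(-\delta^2 k)$. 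So while your architecture is workable in outline and is genuinely close in spirit to the paper's, the argument as written does not close: the decoupling of the subsampling randomness, the spectral-norm bound on the unrestricted shift matrix, and the concentration over $\rb$ all need to be supplied.
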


\begin{corollary} \label{cor main} Let $\x,\y$ be unit vectors obeying $\ang(\x,\y)=\theta$ and $\ino{\x},\ino{\y}\leq \rho$. Form a matrix $\M\in\R^{n\times 2k}$ by picking the same $k$ vectors $\{\Xii\}_{i=1}^k,\{\Yi\}_{i=1}^k$ from each of $\{\X_i\}_{i=1}^n$ and $\{\Y_i\}_{i=1}^n$ without replacement uniformly at random and then stacking next to each other. With probability $1-6\exp(-\delta^2k)$ (over $\rb$ and selection of $\{\Xii\}_{i=1}^k$'s), we have that
\beq
\sigma_{\max}(\M^*\M-\Iden_\theta)\leq C\delta k\rho\log n.
\eeq
where $\Iden_\theta\in\R^{2k\times 2k}$ is given by the matrix $\begin{bmatrix}\Iden_k&\cos(\theta)\Iden_k\\\cos(\theta)\Iden_k&\Iden_k\end{bmatrix}$.
\end{corollary}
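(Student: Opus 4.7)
The natural strategy is to reduce Corollary \ref{cor main} to Theorem \ref{thm main} via an explicit change of basis that turns $(\x,\y)$ into an orthogonal pair. Write $\y = \cos\theta\,\x + \sin\theta\,\z$, where $\z := (\y - \cos\theta\,\x)/\sin\theta$ is a unit vector orthogonal to $\x$, and let $\Z_i := s_i(\diag{\rb}\,\z)$ be its associated random shift vectors. Since $\diag{\rb}$ and $s_i$ are both linear in their vector argument, $\Y_i = \cos\theta\,\X_i + \sin\theta\,\Z_i$ for every $i$. Packaging the same sampled indices for $\x,\y,\z$ into $n\times k$ shift blocks, this gives
\[
\M = \widetilde\M\,T, \qquad T := \begin{bmatrix}\Iden_k & \cos\theta\,\Iden_k\\ \mathbf{0} & \sin\theta\,\Iden_k\end{bmatrix},
\]
where $\widetilde\M\in\R^{n\times 2k}$ is built from the orthonormal pair $(\x,\z)$ in exactly the way $\M$ is built from $(\x,\y)$. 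A direct block computation verifies $T^*T = \Iden_\theta$ and $\|T\|^2 = 1 + |\cos\theta|\le 2$.

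Because $(\x,\z)$ is orthonormal, Theorem \ref{thm main} applies to the auxiliary matrix $\widetilde\M$ and yields, with probability at least $1 - 2\exp(-\delta^2k)$, the bound $\|\widetilde\M^*\widetilde\M - \Iden_{2k}\| \le C\delta k\rho'\log n$ with $\rho' := \max(\ino{\x},\ino{\z})$. Conjugating the previous identity by $T$,
\[
\M^*\M - \Iden_\theta = T^*\bigl(\widetilde\M^*\widetilde\M - \Iden_{2k}\bigr)T,
\]
which yields $\sigma_{\max}(\M^*\M - \Iden_\theta) \le 2C\delta k\rho'\log n$, the desired conclusion provided $\rho' = O(\rho)$.

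The main obstacle is therefore controlling $\ino{\z}$. Naively $\ino{\z} \le 2\rho/\sin\theta$ diverges as $\theta\to 0$ or $\theta\to\pi$, so the direct-coherence hypothesis alone does not suffice. I would instead write $\y - \cos\theta\,\x = (\y-\x) + (1-\cos\theta)\,\x$ and invoke the cross-coherence information carried by $\rho$ in the intended application of the corollary, namely $\ino{\y-\x} \le \rho\,\tn{\y-\x} = 2\rho\sin(\theta/2)$ (this is exactly what the second piece of $\rho_{cross}$ encodes). Together with $1-\cos\theta = 2\sin^2(\theta/2)$ and $\sin\theta = 2\sin(\theta/2)\cos(\theta/2)$, this gives $\ino{\z} \le 2\rho/\cos(\theta/2)$, which is $O(\rho)$ for $\theta\in[0,\pi/2]$. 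The regime $\theta\in[\pi/2,\pi]$ is reduced to the previous one by replacing $\y$ with $-\y$: this flips the sign of $\cos\theta$ in $\Iden_\theta$ and leaves all relevant spectral norms invariant under a trivial unitary conjugation of the appropriate block. The slightly larger failure probability $6\exp(-\delta^2k)$ in the corollary (versus $2\exp(-\delta^2k)$ in Theorem \ref{thm main}) accommodates the additional union-bound events introduced by this reduction.
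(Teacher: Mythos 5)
Your reduction is the same one the paper uses---orthogonalize $\y$ against $\x$ and invoke Theorem \ref{thm main} for the resulting orthogonal pair---and the factorization $\M=\widetilde{\M}T$ with $T^*T=\Iden_\theta$ is a clean way to package it. The gap is in how you pass from $E:=\widetilde{\M}^*\widetilde{\M}-\Iden$ to $\M^*\M-\Iden_\theta$. The crude bound $\|T^*ET\|\le\|T\|^2\|E\|$ discards exactly the cancellation that makes the corollary work: Theorem \ref{thm main} applied to the pair $(\x,\z)$ only gives $\|E\|\lesssim\delta k\rho'\log n$ with $\rho'=\max\{\ino{\x},\ino{\z}\}$, and under the stated hypotheses all you have is $\ino{\z}\le 2\rho/\sin\theta$ (Lemma \ref{ortho cost}), so this route forces you to import an assumption ($\ino{\y-\x}\le\rho\tn{\y-\x}$) that is not part of the corollary. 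Moreover, your reduction of $\theta\in[\pi/2,\pi]$ to $[0,\pi/2]$ by replacing $\y$ with $-\y$ is vacuous: the orthogonalized vector of the flipped pair is $(-\y-\cos(\pi-\theta)\x)/\sin(\pi-\theta)=-\z$, so $\ino{\z}$ is unchanged, and the blow-up of your bound $\ino{\z}\le 2\rho/\cos(\theta/2)$ as $\theta\to\pi$ is not removed (nor can it be repaired from $\rho_{cross}$, which controls normalized differences $\vb_i-\vb_j$ but not sums $\x+\y$).

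The repair---and this is what the paper's proof actually does---is to expand $T^*ET$ block by block instead of taking operator norms of $T$. Writing $E$ in $k\times k$ blocks $E_{11},E_{12},E_{22}$, one finds $(T^*ET)_{12}=\cos\theta\,E_{11}+\sin\theta\,E_{12}$ and $(T^*ET)_{22}=\cos^2\theta\,E_{11}+\sin\theta\cos\theta\,(E_{12}+E_{12}^*)+\sin^2\theta\,E_{22}$: every block whose bound carries the factor $1/\sin\theta$ (through $\ino{\z}\le 2\rho/\sin\theta$) appears multiplied by at least one factor of $\sin\theta$, so the singular factors cancel and each block of $\M^*\M-\Iden_\theta$ is $O(\delta k\rho\log n)$ uniformly in $\theta$, with no extra hypothesis and no case split. (Equivalently, the paper writes $\X^*\Y=\cos\theta\,\X^*\X+\sin\theta\,\X^*\Y'$, bounds $\|\X^*\Y'\|$ by $\sin(\theta)^{-1}c_1\delta k\rho\log n$ so that the prefactor $\sin\theta$ absorbs the divergence, and controls $\Y^*\Y-\Iden$ directly from $\ino{\y}\le\rho$ rather than through $\z$.) With that one change your argument coincides with the paper's.
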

\begin{proof} Proof is based on Theorem \ref{thm main}. Consider the decomposition $\y=\cos(\theta)\x+\sin(\theta)\y'$ where $\x^*\y'=0$. Denote the $k$ chosen columns $\{\Xii\}_{i=1}^k$ by the matrix $\X\in\R^{n\times k}$ and the corresponding matrix to $\{\Yi\}_{i=1}^k$ by $\Y\in\R^{n\times k}$ and set $\Y'=\sin(\theta)^{-1}(\Y-\cos(\theta)\X)$. Now observe that
\beq
\X^*\Y=\cos(\theta)\X^*\X+\sin(\theta)\X^*\Y'.\nn
\eeq
Using Theorem \ref{thm main} on $\X^*\X$, we know that for an absolute constant $c_1>0$, with probability $1-2\exp(-\delta^2k)$
\beq
\|\X^*\X-\Iden\|\leq c_1k\rho\log n.\nn
\eeq
Next we apply Theorem \ref{thm main} to the matrix $[\X~\Y']$. From Lemma \ref{ortho cost}, $\y'$ obeys $\ino{\y'}\leq \frac{2\rho}{\sin(\theta)}$. Consequently, we have the spectral norm estimate
\beq
\|\X^*\Y'\|\leq \|[\X~\Y']^*[\X~\Y']\|\leq  \sin(\theta)^{-1}c_1\delta k\rho\log n.\nn
\eeq
Combining these, and using triangle inequality, we obtain
\beq
\|\X^*\Y-\cos(\theta)\Iden\|\leq \sin(\theta)\|\X^*\Y'\|+\cos(\theta)\|\X^*\X-\Iden\|\leq 2c_1\delta k\rho\log n+\cos(\theta)c_1\delta k\rho\log n\leq 3c_1\delta k\rho\log n.\nn
\eeq
Finally, we need to estimate the remaining submatrices. In particular, direct applications of Theorem \ref{thm main} yields
\beq
\max\{\|\X^*\X-\Iden\|,\|\Y^*\Y-\Iden\|\}\leq c_1\delta k\rho\log n.\nn
\eeq
Combining these estimates and representing $\M$ as $4$ $k\times k$ submatrix involving $\X^*\X,\Y^*\Y,\X^*\Y,\Y^*\X$, we find
\beq
\|\M^*\M-\Iden_\theta\|\leq \|\Y^*\Y-\Iden\|+\|\X^*\X-\Iden\|+2\|\X^*\Y-\cos(\theta)\Iden\|\leq 8c_1\delta k\rho\log n.\nn
\eeq
Using a union bound, this final event happens with probability $1-6\exp(-\delta^2k)$.
\end{proof}

\subsection{Proof of Theorem \ref{thm main}}
\begin{theorem}[Hanson-Wright Theorem \cite{rudelson2013hanson}] Let $\A\in\R^{n\times n}$ and $\g\in\R^n$ be a standard Gaussian vector. There exists a constant $c>0$ such that
\beq
\Pro(|\g^*\A\g-\E[\g^*\A\g]|\geq t)\leq 2\exp\left(-c\left\{\frac{t^2}{\|\A\|_F^2},\frac{t}{\|\A\|}\right\}\right).\nn
\eeq
\end{theorem}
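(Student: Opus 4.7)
The plan is to establish the inequality by splitting $\g^*\A\g$ into its diagonal and off-diagonal contributions, controlling each with the appropriate tail tool, and recombining via a union bound. Write
\[
\g^*\A\g - \E[\g^*\A\g] = \sum_i A_{ii}(g_i^2 - 1) + \sum_{i\neq j} A_{ij}\, g_i g_j.
\]
The diagonal piece is a sum of independent centered sub-exponentials with $\|A_{ii}(g_i^2-1)\|_{\psi_1} = O(|A_{ii}|)$. Since $\sum_i A_{ii}^2 \leq \|\A\|_F^2$ and $\max_i |A_{ii}| \leq \|\A\|$, a standard Bernstein bound for sub-exponentials yields exactly the mixed Gaussian/sub-exponential tail $\exp(-c\min\{t^2/\|\A\|_F^2,\, t/\|\A\|\})$ for the diagonal contribution.

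The main obstacle is the off-diagonal chaos $X := \sum_{i\neq j} A_{ij}\, g_i g_j$, which is a genuine second-order Gaussian form and does not reduce to independent summands. The key reduction is a decoupling inequality: letting $\g'$ be an independent copy of $\g$, one has $\E[\exp(\lambda X)] \leq \E[\exp(C\lambda\, \g^*\A\g')]$ for an absolute constant $C$ (the standard de la Pe\~na--Montgomery-Smith decoupling applied to zero-diagonal quadratic Gaussian forms after a symmetrization). It therefore suffices to prove the advertised tail for the decoupled chaos $\g^*\A\g'$.

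To bound $\g^*\A\g'$ I would condition on $\g$ first. Given $\g$, the quantity $\g^*\A\g' = (\A^*\g)^*\g'$ is Gaussian in $\g'$ with variance $\|\A^*\g\|_2^2$, hence
\[
\Pro\bigl(|\g^*\A\g'| \geq t \,\big|\, \g\bigr) \leq 2\exp\bigl(-t^2/(2\|\A^*\g\|_2^2)\bigr).
\]
To control $\|\A^*\g\|_2$, note that $\g \mapsto \|\A^*\g\|_2$ is $\|\A\|$-Lipschitz with expectation at most $\|\A\|_F$, so Gaussian concentration yields $\Pro(\|\A^*\g\|_2 \geq \|\A\|_F + s\|\A\|) \leq \exp(-s^2/2)$. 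Choosing $s$ to balance the two exponentials and taking a union bound produces the mixed tail $\exp(-c\min\{t^2/\|\A\|_F^2,\, t/\|\A\|\})$ for the off-diagonal piece: in the Gaussian regime $t \lesssim \|\A\|_F^2/\|\A\|$ the optimum is $s \asymp t/\|\A\|_F$, whereas for larger $t$ the sub-exponential regime $s \asymp \sqrt{t/\|\A\|}$ dominates. Combining with the diagonal bound and absorbing constants finishes the proof. The principal difficulty throughout is the decoupling step; an alternative route via direct moment computations for Gaussian chaos together with a Lata\l{}a-type inequality is more self-contained but computationally heavier.
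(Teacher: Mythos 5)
The paper does not prove this statement at all --- it is imported verbatim from Rudelson and Vershynin \cite{rudelson2013hanson} as a black box (note the display in the paper is missing a $\min$ inside the exponent, which you have correctly restored). Your sketch is a correct reconstruction of the standard proof, essentially the argument of the cited reference specialized to the Gaussian case: the diagonal/off-diagonal split, Bernstein for the diagonal, de la Pe\~na--Montgomery-Smith decoupling for the off-diagonal chaos, and then the conditional-Gaussian argument with $\|\A^*\g\|_2$ controlled by Lipschitz concentration. The balancing step also checks out: for $t\lesssim \|\A\|_F^2/\|\A\|$ the choice $s\asymp t/\|\A\|_F$ keeps $s\|\A\|\lesssim\|\A\|_F$ and both exponentials are $\exp(-ct^2/\|\A\|_F^2)$, while for larger $t$ the choice $s\asymp\sqrt{t/\|\A\|}$ makes both $\exp(-ct/\|\A\|)$. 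One remark worth making: since $\g$ is exactly Gaussian here (the hard part of \cite{rudelson2013hanson} is the general sub-Gaussian case), you can bypass decoupling entirely. Replace $\A$ by its symmetrization $\A_{\mathrm{sym}}=(\A+\A^*)/2$, which leaves the quadratic form unchanged and can only decrease $\|\cdot\|_F$ and $\|\cdot\|$; diagonalize and use rotation invariance to write $\g^*\A\g-\E[\g^*\A\g]\stackrel{d}{=}\sum_i\lambda_i(h_i^2-1)$ with $h$ standard Gaussian, $\sum_i\lambda_i^2=\|\A_{\mathrm{sym}}\|_F^2$ and $\max_i|\lambda_i|=\|\A_{\mathrm{sym}}\|$; then a single application of Bernstein's inequality for sub-exponential sums gives the full mixed tail in one step. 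Your route is more general (it survives the passage to non-Gaussian sub-Gaussian entries), while the rotation-invariance route is shorter and avoids the decoupling constant.
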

The following lemma follows as a corollary of Hanson-Wright Theorem.
\begin{lemma} \label{hs app}Let $\x,\y$ be two unit vectors where $\ino{\x},\ino{\y}\leq \rho$ and $\x^*\y=0$. Let $\g$ be a standard Gaussian vector and $\Gb=\diag{\g}$. Then, the followings hold
\item For all $1\leq i\neq j\leq n$
\beq
\max\{\Pro(|s_i(\Gb\x)^*s_j(\Gb\y)|>t),\Pro(|s_i(\Gb\x)^*s_j(\Gb\x)|>t),\Pro(|\tn{\Gb\x}^2-1|>t)\}\leq 2\exp(-c\min\{\frac{t}{\rho^2},\frac{t^2}{\rho^2}\}).\nn
\eeq
If $\ino{\x}\leq \rho,\ino{\y}\leq \rho'$ we additionally have $\Pro(|s_i(\Gb\x)^*s_j(\Gb\y)|>t)\leq  2\exp(-c\min\{\frac{t}{\rho\rho'},\frac{t^2}{\rho^2}\})$.
\end{lemma}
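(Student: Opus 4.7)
The plan is to realize each of the three quantities as a quadratic form $\g^{*}A\g$ in the standard Gaussian vector $\g$ and then invoke the Hanson--Wright theorem directly; the only remaining work is to compute the mean of $\g^{*}A\g$ together with $\fronorm{A}$ and $\opnorm{A}$ for the matrix $A$ in each case. One minor but standard point: Hanson--Wright is stated for symmetric $A$, while the cross-term matrices below are bilinear, but since $\g^{*}A\g=\g^{*}\tfrac{A+A^{*}}{2}\g$ and the symmetrization satisfies $\fronorm{(A+A^{*})/2}\leq \fronorm{A}$ and $\opnorm{(A+A^{*})/2}\leq \opnorm{A}$, no constants are lost.

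For the norm estimate, $\tn{\Gb\x}^{2}=\sum_{k}g_{k}^{2}x_{k}^{2}=\g^{*}D\g$ where $D$ is the diagonal matrix with entries $x_{k}^{2}$. Then $\E[\g^{*}D\g]=\tn{\x}^{2}=1$, $\fronorm{D}^{2}=\sum_{k}x_{k}^{4}\leq \ino{\x}^{2}\tn{\x}^{2}\leq \rho^{2}$, and $\opnorm{D}=\max_{k}x_{k}^{2}\leq \rho^{2}$, so applying Hanson--Wright to $\tn{\Gb\x}^{2}-1$ yields the advertised tail.

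For the cross terms, fix $d=j-i$, which satisfies $d\not\equiv 0\pmod n$, and reindex to write
\[
s_{i}(\Gb\x)^{*}s_{j}(\Gb\y)=\sum_{\ell} g_{\ell}\,g_{\ell+d}\,x_{\ell}y_{\ell+d}=\g^{*}A\g,
\]
where $A_{\ell,\ell+d}=x_{\ell}y_{\ell+d}$ and all other entries vanish; equivalently $A=\diag{\x}\,S_{d}\,\diag{\y}$, with $S_{d}$ the unitary cyclic shift by $d$. Since $d\neq 0$ the diagonal of $A$ is zero, so $\E[\g^{*}A\g]=0$. The factorization gives $\opnorm{A}\leq \ino{\x}\,\ino{\y}$, while
\[
\fronorm{A}^{2}=\sum_{\ell} x_{\ell}^{2}y_{\ell+d}^{2}\leq \ino{\x}^{2}\tn{\y}^{2}=\ino{\x}^{2}.
\]
Under the symmetric hypothesis $\ino{\x},\ino{\y}\leq \rho$ both bounds are $\leq \rho^{2}$, producing the claimed $2\exp(-c\min\{t^{2}/\rho^{2},\,t/\rho^{2}\})$ after Hanson--Wright. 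The $s_{i}(\Gb\x)^{*}s_{j}(\Gb\x)$ case is identical, replacing $\y$ by $\x$ throughout. In the asymmetric setting $\ino{\x}\leq \rho$, $\ino{\y}\leq \rho'$, the Frobenius bound still reads $\fronorm{A}^{2}\leq \rho^{2}$ (using only $\ino{\x}\leq \rho$ and $\tn{\y}=1$), whereas the operator norm improves to $\opnorm{A}\leq \rho\rho'$, giving the final bound $2\exp(-c\min\{t^{2}/\rho^{2},\,t/(\rho\rho')\})$.

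There is no genuine obstacle here; the argument is a direct application of Hanson--Wright once the shift-indexed bilinear form is recognized as the product $\diag{\x}\,S_{d}\,\diag{\y}$, from which both norm estimates and the vanishing of the diagonal of $A$ are immediate.
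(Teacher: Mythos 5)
Your proof is correct and follows essentially the same route as the paper: both recognize the shifted inner products as Gaussian quadratic forms with a weighted (cyclic) permutation matrix, bound its spectral norm by $\rho^2$ (resp.\ $\rho\rho'$) and its Frobenius norm by $\rho$, note the vanishing mean, and apply Hanson--Wright; your writeup is simply more explicit about the factorization $\diag{\x}S_d\diag{\y}$ and the symmetrization step.
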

\begin{proof} The proofs are based on Hanson-Wright Theorem. $s_i(\Gb\x)^*s_j(\Gb\y)$ can be viewed as $\g^*\M\g$ where $\M$ is a weighted permutation matrix whose $(\ell+i,\ell+j)$th entry is of the form $\x_{\ell+i}\y_{\ell+j}$ for $1\leq \ell\leq n$ and whose remaining entries are $0$. Consequently, this matrix has maximum spectral norm $\rho^2$ and maximum Frobenius norm of $\rho$.  $\E[s_i(\Gb\x)^*s_j(\Gb\y)]$ is clearly $0$ when $i\neq j$ and for $i=j$ it is equal to $\x^*\y=0$. Hence, Hanson-Wright yields the desired bound.

For the second and third relations, identical argument applies. We additionally use the fact that $\E[\tn{\Gb\x}^2]=1$. The last statement follows by modifying the spectral norm estimate from $\rho^2$ to $\rho\rho'$.
\end{proof}

\begin{theorem}\label{lemma main} Pick unit vectors satisfying $\x^*\y=0$, $\ino{\x},\ino{\y}\leq \rho$, and $\rho<c_0\max\{\delta,(\log n)^{-1/2}\}$ for a sufficiently small constant $c_0>0$. Define the matrix $\A=[\X_1~\dots~\X_n~\Y_1~\dots~\Y_n]$. Form the matrix $\M\in\R^{n\times 2k}$ by picking the same $k$ columns from $\{\X_i\}_{i=1}^n$ and $\{\Y_i\}_{i=1}^n$ uniformly at random. Denote the indices of these columns (i.e.~support set) by $S$ which is a subset of $\{1,2,\dots,n\}$. Let $\Phi\in\R^{n\times 2n}$ be the matrix obtained by normalizing the columns of $\A$. Let $\Pb$ be the matrix obtained by normalizing the columns of $\M$ which is a submatrix of $\Phi$. With probability $1-3e^{-\delta^2k}$ over the generation of $S$, we have that
\beq
\E[\sigma_{\max}(\M^*\M-\Iden)]\leq ck\rho\log n.
\eeq
\end{theorem}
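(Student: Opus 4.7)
The plan is to reduce $\sigma_{\max}(\M^*\M - \Iden)$ to an entrywise control problem via Hanson--Wright (Lemma \ref{hs app}), and then combine it with the random subsampling structure. First, observe that $\M^*\M - \Iden$ is a principal $2k \times 2k$ submatrix of the $2n \times 2n$ matrix $\A^*\A - \Iden$, where $\A = [\X_1~\ldots~\X_n~\Y_1~\ldots~\Y_n]$. Because $\tn{\x}=\tn{\y}=1$, $\x^*\y = 0$, and the coordinates of $\rb$ are iid $\Nn(0,1)$, a direct calculation shows $\E_\rb[\A^*\A] = \Iden_{2n}$, so the target matrix is centered in $\rb$.

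Next, I would invoke Lemma \ref{hs app} on every entry of $\A^*\A - \Iden$: the diagonal entries $\tn{\X_i}^2 - 1$, $\tn{\Y_i}^2 - 1$ and the off-diagonal entries $\X_i^*\X_j$, $\X_i^*\Y_j$, $\Y_i^*\Y_j$ for $i \ne j$ are all mean-zero sub-exponential variables with scale parameter $\rho^2$. A union bound over $\order{n^2}$ entries gives, on a favorable event of probability at least $1 - 3e^{-\delta^2 k}$, a uniform entrywise bound of order $C\rho\sqrt{\log n + \delta^2 k}$ on $\A^*\A - \Iden$. The hypothesis $\rho < c_0\max\{\delta, (\log n)^{-1/2}\}$ places the regime in the sub-Gaussian branch of Hanson--Wright and simplifies the bound to $\order{\rho\log n}$.

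Given such entrywise control, a Gershgorin-style bound immediately yields $\sigma_{\max}(\M^*\M - \Iden) \le 2k \cdot C\rho\log n$, from which $\E[\sigma_{\max}(\M^*\M - \Iden)] \le ck\rho\log n$ on the favorable event follows once one absorbs the contribution of the complementary bad event via the sub-exponential tails (which decay fast enough not to affect the expectation at this scale).

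The main obstacle is aligning the two sources of randomness --- the Gaussian $\rb$ underlying $\A$ and the random subset $S$ indexing the columns of $\M$ --- to obtain a clean $\log n$ factor, while handling the dependence between the entries of $\M^*\M - \Iden$ (all quadratic forms in a common $\rb$, so that matrix Bernstein for independent summands is not directly available). A purely Gershgorin-based argument is borderline; a sharper moment method, a non-commutative Khintchine inequality, or Rudelson's selection principle applied after conditioning on $\rb$ may be required to secure the estimate in the form stated.
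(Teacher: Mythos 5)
Your route is genuinely different from the paper's, and for the bound as literally printed it does work: Lemma \ref{hs app} plus a union bound over the $O(n^2)$ inner products gives $|\X_i^*\X_j|,|\X_i^*\Y_j|,|\tn{\X_i}^2-1|\leq C\rho\sqrt{\log n}$ simultaneously with probability $1-O(n^{-3})$ over $\rb$; Gershgorin then yields $\sigma_{\max}(\M^*\M-\Iden)\leq 2Ck\rho\sqrt{\log n}\leq 2Ck\rho\log n$ for \emph{every} choice of $S$; and on the complementary event the crude bound $\sigma_{\max}(\M^*\M-\Iden)\leq 2k\rho^2\tn{\rb}^2$ together with Lemma \ref{gauss tail} contributes only a term of order $k\rho^2n^{-2}$ to the expectation. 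One correction to your sketch: the threshold $C\rho\sqrt{\log n+\delta^2k}$ conflates the two sources of randomness. The failure probability $1-3e^{-\delta^2k}$ in the statement is over the support $S$, not over $\rb$, so you should not pay for an $e^{-\delta^2 k}$-probability good event in $\rb$; with your inflated threshold the Gershgorin bound becomes $k\rho\sqrt{\delta^2k+\log n}$, which can exceed $k\rho\log n$. Take $t=C\rho\sqrt{\log n}$, accept failure probability $n^{-3}$ in $\rb$, and absorb the bad event into the expectation as above.

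The reason the paper does not argue this way is that the bound it actually derives in the proof and cites downstream is $c\delta k\rho\log n$, with an extra factor of $\delta$ (the theorem statement drops it; the ``finalizing'' lemma uses the $\delta$-version, which is what Condition \ref{cond1} is calibrated to). That gain is precisely what the randomness of $S$ buys and what Gershgorin cannot deliver: the paper conditions on the event that the full dictionary $\Phi$ has coherence $\mu\leq c\rho\sqrt{\log n}$ and spectral norm $\|\Phi\|\leq c\rho\sqrt{n\log n}$ (Lemma \ref{lemma markov}), then applies the Tropp-style random-subdictionary bounds (Theorems \ref{tropp var} and \ref{tropp var2}) with $u\sim\delta\sqrt{k}\log k$ to obtain $\mu\sqrt{k}u+\tfrac{k}{n}\|\Phi\|^2\lesssim\delta k\rho\log n$ with probability $1-3e^{-\delta^2k}$ over $S$, and finally removes the column normalization via Lemma \ref{lemma simple but useful}. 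This is exactly the Rudelson/Bourgain--Tzafriri selection machinery you flag at the end as possibly required; your instinct there is correct, and it is also where the stated probability over $S$ comes from. In short: your elementary argument proves the theorem as printed but not the strengthened form the rest of the paper relies on.
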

\begin{proof} We first calculate the coherence of the matrix $\Phi$ which is defined as $\mu(\Phi)=\sup_{i\neq j}|\phi_i^*\phi_j|$ where $\phi_i$ is the $i$th column of $\Phi$ for $1\leq i\leq 2n$.
\begin{lemma} $\Pro(\mu(\Phi)\leq c_1\sqrt{\log n}\rho)\geq 1- 4n^{-3}$ where the probability is over $\rb$ and $c_1>0$ is an absolute constant.
\end{lemma}
\begin{proof} For some absolute constant $c_1=8c_2^{-1/2},c_2>0$ we have the followings. Applying Lemma \ref{hs app}, for all terms, we have that $|\tn{\X_i}^2-1|,|\X_i^*\X_j|,|\X_i^*\Y_j|,|\X_i^*\Y_i|\leq \gamma$ with probability $1-4n^2\exp(-c_2\min\{\gamma\rho^{-2},\gamma^2\rho^{-2}\})$. Hence, picking $\gamma=c_1\sqrt{\log n}\rho/2$, we can guarantee that $\sup_{i\neq j}|\phi_i^*\phi_j|$ is small for all $i,j$ pairs with probability $1-4n^{-3}$ after normalizing the columns by their lengths $\tn{\X_i},\tn{\Y_i}$. 
\end{proof}
Next, Lemma \ref{lemma markov} shows that the spectral norm of $\Phi$ can be bounded as $\|\Phi\|\leq c_3\rho\sqrt{n \log n}$ with probability $1-n^{-3}$ as well. Assume $c_3>c_1$ without losing generality.

Let us call the event that ``$\mu(\Phi)\leq c_3\sqrt{\log n}\rho$ and $|\tn{\X_i}^2-1|\leq c_3\sqrt{ \log n}\rho$ and $\|\Phi\|\leq c_3\rho\sqrt{n \log n}$'' as $E$ which is an event over $\rb$ with probability at least $1-5n^{-3}$. Split $\P$ into $\X$ and $\Y$ parts namely $\Pb=[\Pb_\X~\Pb_\Y]$ where $\Pb_\X,\Pb_\Y\in\R^{n\times k}$. Conditioned on $E$, applying Theorem \ref{tropp var} with $u=2\delta\sqrt{k}\log k$, with probability $1-e^{-\delta^2 k}$ over the choice of support $S$, we have that
\begin{align}
&\E_{\rb\big|E}[\sigma_{\max}(\Pb_\X^*\Pb_\X-\Iden)]\leq c_4k\rho\log n
&\E_{\rb\big|E}[\sigma_{\max}(\Pb_\Y^*\Pb_\Y-\Iden)]\leq c_4k\rho\log n\nn
\end{align}
since we can bound \eqref{rand var} as follows
\begin{align}
\mu(\Phi) \sqrt{k}u+\frac{k}{n}\|\Phi\|^2&\leq c_3(\rho\sqrt{\log n}\sqrt{k}\sqrt{k}\sqrt{\log k}\delta+\frac{k}{n}(\rho^2n\log n))\nn\\
&\leq c_3(k\delta\rho\log n+k\rho^2\log n)\leq 2c_3k\delta\rho\log n.\nn
\end{align}
where we used the fact that $\delta\geq \rho$. Similarly, applying Theorem \ref{tropp var2}, we estimate the cross term as
\[
\E_{\rb\big|E}[\sigma_{\max}(\Pb_\X^*\Pb_\Y)]\leq c_4k\rho\log n
\]
This yields
\begin{equation}
\E_{\rb\big|E}[\sigma_{\max}(\Pb^*\Pb-\Iden)]\leq4c_4k\rho\log n\nn
\end{equation}
with probability $1-3e^{-\delta^2 k}$. Next, observe that
\beq
\Pb=\M\diag{\alpha}\nn
\eeq
where $\alpha\in\R^{2k}$ is a vector whose entries lie between $\sqrt{1\pm c_3\sqrt{\log n}\rho}$. Let $c_5=\max\{2c_3,8c_4\}$. Consequently, applying Lemma \ref{lemma simple but useful}
\begin{equation}
\sigma_{\max}(\M^*\M-\Iden)\leq c_5k\rho\log n.\label{e bound}
\end{equation}

For the complementary event $\bar{E}$, independent of the support $S$ we will use a simpler estimate namely 
\beq
\sigma_{\max}(\M^*\M-\Iden)\leq \sigma_{\max}(\M^*\M)\leq \|\M\|_F^2=k(\tn{\X_1}^2+\tn{\Y_1}^2)\leq 2k\rho^2\tn{\rb}^2.\nn
\eeq
For this case, applying Lemma \ref{gauss tail} with $p=5n^{-3}$ yields that
\beq
\E[\tn{\rb}^2\big| \bar E]\Pro(\bar {E})\leq c_6n^{-2}\label{eb bound}
\eeq
 Combining the estimates over $E$ \eqref{e bound} and $\bar E$ \eqref{eb bound}, we find that with the desired probability over $S$ ($1-e^{-\delta^2k}$),
\begin{align}
\E_{\rb}[\sigma_{\max}(\M^*\M-\Iden)]&=\E_{\rb}[\sigma_{\max}(\M^*\M-\Iden)\big|E]\Pro(E)\nn\\
&+\E_{\rb}[\sigma_{\max}(\M^*\M-\Iden)\big|\bar{E}]\Pro(\bar{E})\nn\\
&\leq c_5k\rho\log n+c_6n^{-2}\approx c_5k\rho\log n\nn
\end{align}
where we used the fact that $k\rho\log n\geq n^{-2}$.
\end{proof}

\subsection{Probabilistic bounds on the singular values}
\begin{lemma} Let $\Rb\in\R^{n\times k}$ be a matrix obtained by picking $k$ elements from $\{\X_i\}_{i=1}^n\subset\R^n$ and stacking them next to each other. The maximum and minimum singular values of $\Rb$ are $\sqrt{k}\ino{\x}$ Lipschitz function of $\rb$.
\end{lemma}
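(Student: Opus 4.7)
The plan is to bound the Lipschitz constant in two stages: first reduce the question about singular values to a question about the operator norm of the map $\rb \mapsto \Rb(\rb)$, and then bound that norm by a direct Frobenius-norm computation that exploits the structure of the circular-shift columns.

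First, I would invoke Weyl's perturbation inequality for singular values, which states that $|\sigma_{\max}(\Rb) - \sigma_{\max}(\Rb')| \leq \|\Rb - \Rb'\|$ and similarly for $\sigma_{\min}$. Combined with the elementary bound $\|\cdot\| \leq \|\cdot\|_F$, it suffices to show that
\[
\fronorm{\Rb(\rb) - \Rb(\rb')} \leq \sqrt{k}\,\ino{\x}\,\tn{\rb - \rb'}
\]
for any two realizations $\rb,\rb' \in \R^n$. Note that $\rb \mapsto \Rb(\rb)$ is linear (each entry of $\Rb$ is a product of a single entry of $\rb$ with a single entry of $\x$), so $\Rb(\rb) - \Rb(\rb') = \Rb(\rb - \rb')$ and it is enough to bound $\fronorm{\Rb(\rb)}$ uniformly.

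For this, recall that the $j$th column of $\Rb$ is $\X_{i_j} = s_{i_j}(\diag{\rb}\x)$, whose $\ell$th entry is $r_{\ell+i_j}\,x_{\ell+i_j}$ (indices mod $n$). Expanding,
\[
\fronorm{\Rb(\rb)}^2 = \sum_{j=1}^k \sum_{\ell=1}^n r_{\ell+i_j}^2\,x_{\ell+i_j}^2.
\]
For each fixed $j$, the inner sum is a cyclic reindexing of $\sum_m r_m^2 x_m^2$, so it equals $\tn{\diag{\x}\rb}^2 \leq \ino{\x}^2\,\tn{\rb}^2$. Summing over the $k$ columns gives $\fronorm{\Rb(\rb)}^2 \leq k\,\ino{\x}^2\,\tn{\rb}^2$.

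Putting the pieces together and applying the identity to $\rb - \rb'$ yields
\[
|\sigma_{\max/\min}(\Rb(\rb)) - \sigma_{\max/\min}(\Rb(\rb'))| \leq \|\Rb(\rb-\rb')\| \leq \fronorm{\Rb(\rb-\rb')} \leq \sqrt{k}\,\ino{\x}\,\tn{\rb-\rb'},
\]
which is exactly the claimed Lipschitz bound. There is no real obstacle here — the only small point to be careful about is that the cyclic shift acts as a permutation on coordinates, so each column contributes $\tn{\diag{\x}\rb}^2$ and nothing is double-counted — so I expect the proof to be entirely routine once the reduction to the Frobenius norm is in place.
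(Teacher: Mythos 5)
Your proof is correct and follows essentially the same route as the paper: both arguments exploit linearity in $\rb$, observe that each column of the difference matrix is a cyclic permutation of $\diag{\x}(\rb-\rb')$ so the Frobenius norm equals $\sqrt{k}\,\tn{\diag{\x}(\rb-\rb')}\leq\sqrt{k}\,\ino{\x}\tn{\rb-\rb'}$, and conclude via the $1$-Lipschitzness of singular values in the Frobenius norm. You merely make explicit the Weyl-inequality step that the paper leaves implicit.
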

\begin{proof} We view $\Rb$ as a random matrix obtained from the vector $\rb$. Given an alternative vector $\hat\rb$, construct $\hat\Rb$ from circular shifts of the vector $\diag{\hat\rb}\x$ in an identical manner to $\Rb$ (i.e. form $\{\hat{\X}_i\}_{i=1}^n$ and pick the same $k$ elements). Applying Lemma \ref{inf lip}, we have that
\beq
\|\hat\Rb-\Rb\|_F=\sqrt{k}\tn{\diag{\hat\rb}\x-\diag{\rb}\x}\leq \sqrt{k}\ino{\x}\tn{\rb-\hat\rb}\nn
\eeq
which is the desired conclusion.
\end{proof}

\begin{lemma} \label{lemma markov}Let $\X=[\X_1~\dots~\X_n]$ where $\X_i$ are obtained by circular shifts of $\diag{\rb}\x$. There exists an absolute constant $c>0$ such that
\beq
\Pro(\|\X\|\geq c\rho\sqrt{n\log n})\leq c_1n^{-3}.\label{advert 1}
\eeq
Next, consider the matrix $\Phi$ of Theorem \ref{lemma main}. Assuming $\rho<c'(\log n)^{-1/2}$, $\Phi$ obeys the following similar bound
\beq
\Pro(\|\Phi\|\geq 4c\rho\sqrt{n\log n})\leq (2c_1+4)n^{-3}.\nn
\eeq
\end{lemma}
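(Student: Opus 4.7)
The plan is to exploit the circulant structure of $\X$. Since each column $\X_i$ is a cyclic shift of $v := \diag{\rb}\x$, the matrix $\X$ is circulant and is therefore simultaneously diagonalized by the unitary discrete Fourier transform. Its singular values are exactly the magnitudes of
\[ a_k \;:=\; \sum_{j=0}^{n-1} e^{-2\pi i kj/n}\, r_j x_j, \qquad 0\le k<n. \]
For fixed $k$, the real and imaginary parts of $a_k$ are Gaussian linear combinations of the entries of $\rb$ whose variances are bounded by $\sum_j x_j^2 = 1$. A standard Gaussian tail estimate combined with a union bound over the $n$ frequencies (and over the real/imaginary parts) yields $\|\X\| = \max_k |a_k| \le c\sqrt{\log n}$ with probability at least $1 - c_1 n^{-3}$ for sufficiently large absolute constants $c, c_1$. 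Since $\rho = \ino{\x} \ge \tn{\x}/\sqrt n = 1/\sqrt n$, this immediately implies the desired bound $\|\X\| \le c\rho\sqrt{n\log n}$.

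For the claim about $\Phi$, form $\A = [\X_1~\cdots~\X_n~\Y_1~\cdots~\Y_n]$ and let $D$ denote the diagonal matrix of column norms, so that $\Phi = \A D^{-1}$. Two estimates will combine. First, $\|\A\|^2 = \|\X\X^* + \Y\Y^*\| \le \|\X\|^2 + \|\Y\|^2$, and both operands are controlled by the first part of the lemma applied to $\x$ and to $\y$. Second, because cyclic shifts preserve length, the diagonal of $D$ has only two distinct values, namely $\tn{\diag{\rb}\x}$ and $\tn{\diag{\rb}\y}$. Applying the Hanson-Wright inequality to the quadratic form $\rb^* \diag{x_j^2}\rb$ (whose operator norm is $\rho^2$, Frobenius norm at most $\rho$, and expectation equal to $1$) yields $\Pro(|\tn{\diag{\rb}\x}^2 - 1| > 1/2) \le 2\exp(-c''/\rho^2)$; the hypothesis $\rho < c'(\log n)^{-1/2}$ with $c'$ small enough makes this at most $2n^{-3}$. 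The same holds for $\y$, so $\|D^{-1}\| \le \sqrt{2}$ with probability at least $1 - 4n^{-3}$. Combining via $\|\Phi\| \le \|D^{-1}\|\,\|\A\|$ and union-bounding over the three failure events produces $\|\Phi\| \le 4c\rho\sqrt{n\log n}$ with probability at least $1 - (2c_1 + 4)n^{-3}$.

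There is no serious obstacle here: once circulant diagonalization is exploited, both claims reduce to standard Gaussian concentration. The only point demanding mild care is calibrating the constant $c'$ in the assumption $\rho \le c'(\log n)^{-1/2}$ against the Hanson-Wright exponent so that the column-norm deviation probability is $O(n^{-3})$. This is also the precise place where the bound for $\Phi$ uses the coherence hypothesis, whereas the bound for $\X$ alone does not.
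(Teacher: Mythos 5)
Your proof is correct, and for the main estimate on $\|\X\|$ it takes a genuinely different (and cleaner) route than the paper. You observe that the columns of $\X$ are all $n$ cyclic shifts of $v=\diag{\rb}\x$, so $\X$ is a (column-permuted) circulant matrix whose singular values are exactly the moduli of the unnormalized DFT coefficients $a_k=\sum_j \omega^{kj} r_j x_j$; since the real and imaginary parts of each $a_k$ are Gaussians of variance at most $\tn{\x}^2=1$, a union bound over the $n$ frequencies gives $\|\X\|\le C\sqrt{\log n}$ with probability $1-c_1 n^{-3}$, and the stated form follows from $\rho\ge \ino{\x}\ge 1/\sqrt{n}$. (One small point worth making explicit: with the paper's shift convention the $(j,i)$ entry of $\X$ depends on $i+j \bmod n$ rather than $j-i \bmod n$, so $\X$ is anti-circulant; reversing the column order makes it circulant without changing singular values, so the diagonalization step is legitimate.) The paper instead proves the bound by a trace--moment method: it bounds $\E\|\X\|^{2d}\le\E\tr(\X^*\X)^d$, compares term by term with the complex Gaussian circulant matrix generated by the constant vector $[\rho~\cdots~\rho]^*$ (whose eigenvalues are independent complex Gaussians of variance $\sim\rho^2 n$), and applies Markov's inequality. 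Your direct spectral argument is more elementary and in fact yields the strictly stronger, $\rho$-free bound $\|\X\|\lesssim\sqrt{\log n}$, whereas the moment-comparison route only delivers $\rho\sqrt{n\log n}$; the paper's bound is what gets used downstream, so nothing is lost either way. For the second claim about $\Phi$ your argument essentially coincides with the paper's: both write $\Phi$ as $[\X~\Y]$ times an inverse diagonal normalization, bound the column norms via Hanson--Wright (the paper invokes its Lemma \ref{hs app} and the hypothesis $\rho<c'(\log n)^{-1/2}$ exactly as you calibrate the exponent), and combine with a union bound.
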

\begin{proof} Let $g\sim\Nn(0,1)$. From Stirling's approximation, we have that
\beq
\E[g^{2d}]= (2d)!!=\frac{(2d)!}{2^d d!}\leq \frac{e(2d)^{2d+1/2}\exp(-2d)}{2^d\sqrt{2\pi}d^{d+1/2}\exp(-d)}\leq c_1(2/e)^dd^d.\nn
\eeq
Pick a complex standard normal $g'=g_1+ig_2$ where $g_1,g_2\sim\Nn(0,1)$. Comparing the moments of $g'$ to $g$
\beq
\E[|g'|^{2d}]=\E[(g_1^2+g_2^2)^d]\leq \E[(g_1^2+g_1^2)^d]\leq c_1(4/e)^dd^d.\nn
\eeq
Suppose $h$ is a real random variable obeying $\E[h^{2d}]\leq n\E[|g'|^{2d}]\leq c_1n(4/e)^dd^{d}$ for all $d\geq 1$. Then, using Markov inequality
\beq
\Pro(|h|\geq t)\leq t^{-2d}c_1n(4/e)^dd^{d}.\nn
\eeq
Pick $t=\sqrt{2d}$ and $d=(C\log n)/2$ to find that \[\Pro(|h|\geq t)\leq t^{-2d}c_1n(2/e)^dd^{d}\leq c_1n (2/e)^{(C\log n)/2}\leq c_1n^{-3}\] by picking $C>0$ to be a large enough constant. This gives 
\beq
\Pro(|h|\geq \sqrt{C\log n})\leq c_1n^{-3}.\label{s norm est}
\eeq
The remaining discussion will analyze the spectral norm of $\X$ to make use of \eqref{s norm est}. Observe that the random variable $\text{tr}((\X^*\X)^d)\geq \|\X\|^{2d}$. Now form the complex circulant matrix $\X'$ by stacking circular shifts $\{s_i(\diag{\g}\x')\}_{i=1}^n$ next to each other where are entries of the vector $\x'$ are equal to $\rho$ i.e. $\x'=[\rho~\dots~\rho]^*$ and $\g$ is a vector of independent random variables where each entry is distributed as $g'$.

Singular values of $\rho^{-1}\X'$ are trivial. In particular, singular values are absolute values of the eigenvalues and eigenvalues are independent complex Gaussian random variables whose imaginary and real parts have variance $\sqrt{n}$. Next, we relate $\X'$ to $\X$. Let $\X'_{real}$ denote the real part of the matrix $\X'$. First observe that, the following deterministic relation holds for all $d\geq 1$
\[\tr({\X'}^*\X')^d\geq \tr((\X'_{real})^*\X'_{real})^d.\]

On the other hand, $\E\tr((\X'_{real})^*\X'_{real})^d\geq \E\tr(\X^*\X)^d$. This follows form the fact that when the traces are expanded term by term, each individual nonzero term of the left-hand side dominates that of the right-hand side as entries of $\x'$ are at least as large as that of $\x$ (in absolute value). Finally observe that $\tr({\X'}^*\X')^d\sim \sum_{i=1}^n g_i^{2d}$ where $g_i$ are independent standard complex with variance $2\rho^2n$. Consequently, setting $h=\|\X\|$, we find that
\beq
\E[h^{2d}]=\E[\|\X\|^{2d}]\leq \E\tr(\X^*\X)^d\leq \E\tr({\X'}^*\X)^d=\E[\sum_{i=1}^n g_i^{2d}].\nn
\eeq
To conclude with the proof of the first statement, apply the estimate \eqref{s norm est} by normalizing both sides by $\rho\sqrt{n}$ and obtain the advertised result \eqref{advert 1}. 

To obtain the result on $\Phi$, recalling Theorem \ref{lemma main}, we write $\Phi\diag{\alpha}=[\X~\Y]$ where $\X,\Y=[\Y_1~\dots~\Y_n]$ have spectral norm at most $c\rho\sqrt{n\log n}$ and $\alpha$ is a diagonal (length) normalization matrix whose entries are at least $1/2$ with probability $1-4n^{-3}$ as soon as $\rho<c'(\log n)^{-1/2}$ for sufficiently small constant $c'>0$.
\end{proof}

\subsubsection{Finalizing the Proof of Theorem \ref{thm main}}
\begin{lemma} Consider the setup in Theorem \ref{thm main} and set $\M=[\X_{r_1}~\dots~\X_{r_k}~\Y_{r_1}~\dots~\Y_{r_k}]$. There exists a constant $c_1>0$ such that with probability $1-4\exp(-\delta^2k)$ over the generation of $\{r_i\}_{i=1}^k$ and modulation $\rb$, we have that
\beq
\|\M^*\M-\Iden\|\leq c_1\rho \delta k\log n.\nn
\eeq
\end{lemma}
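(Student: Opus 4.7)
\emph{Plan.} The statement is the high-probability companion of the expected-value bound proved in Theorem~\ref{lemma main}; the plan is to retain only the high-probability part of that argument (dropping the $\bar E$ expectation term that was only used to control $\E[\cdot]$) and package it with an outer union bound. First, reuse the ``good'' $\rb$-event
\[E=\bigl\{\mu(\Phi)\leq c_3\rho\sqrt{\log n},\ \max_i|\|\X_i\|^2-1|\leq c_3\rho\sqrt{\log n},\ \|\Phi\|\leq c_3\rho\sqrt{n\log n}\bigr\}\]
from that proof. Hanson--Wright (Lemma~\ref{hs app}) applied to the $O(n^2)$ inner products $\phi_i^*\phi_j$ and to the column norms, combined with Lemma~\ref{lemma markov} for $\|\Phi\|$, gives $\Pro(E)\geq 1-5n^{-3}$ over $\rb$.

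\emph{Conditional tail bound via Tropp.} Condition on $E$, so that $\Phi=[\X\ \Y]$ becomes a deterministic matrix with coherence $O(\rho\sqrt{\log n})$, column norms close to $1$, and spectral norm $O(\rho\sqrt{n\log n})$. Apply Theorem~\ref{tropp var} and its cross-block analogue Theorem~\ref{tropp var2} separately to the three $k\times k$ blocks $\Pb_\X^*\Pb_\X-\Iden$, $\Pb_\Y^*\Pb_\Y-\Iden$, and $\Pb_\X^*\Pb_\Y$, with deviation parameter $u=2\delta\sqrt{k}\log k$. Plugging in the variance estimate
\[\mu(\Phi)\sqrt{k}\,u+\tfrac{k}{n}\|\Phi\|^2\leq 2c_3\,k\delta\rho\log n\]
from the proof of Theorem~\ref{lemma main}, but reading the matrix Bernstein inequality as a tail bound rather than an expectation bound, yields $\sigma_{\max}(\Pb^*\Pb-\Iden)\leq c'\delta\rho k\log n$ with conditional probability at least $1-3e^{-\delta^2 k}$ over the uniform sample $S$. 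Undoing the length-normalization $\Pb=\M\diag{\alpha}$, whose diagonal entries lie in $\sqrt{1\pm c_3\rho\sqrt{\log n}}$, via Lemma~\ref{lemma simple but useful} costs only a multiplicative constant and delivers $\sigma_{\max}(\M^*\M-\Iden)\leq c_1\delta\rho k\log n$ on the joint event.

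\emph{Union bound and main subtlety.} The two failure modes union-bound to an unconditional failure probability $\leq 3e^{-\delta^2 k}+5n^{-3}$. The one delicate point is absorbing the $5n^{-3}$ tail into the $e^{-\delta^2 k}$ factor: Condition~\ref{cond1}(1) together with $N\geq n$ gives $\delta^2 k\gtrsim \log n$, and by inflating the constant $c_3$ in the coherence bound for $\Phi$ one can sharpen $\Pro(\bar E)$ to $e^{-\delta^2 k}$ (this only changes the absolute constant $c_1$ in the final spectral norm estimate), producing the stated probability $1-4e^{-\delta^2 k}$. The conceptual obstacle is the two-layer conditioning: the Tropp concentration must be fired with $\Phi$ treated as a frozen deterministic matrix (after restricting to $E$), and the $\rb$ randomness is reintroduced only through the outer union bound; once this structure is in place, everything else is routine bookkeeping of estimates already proved in the preceding subsections.
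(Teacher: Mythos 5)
There is a genuine gap in your final step, and it is exactly the difficulty the paper's proof is engineered to avoid. You condition on the event $E$ (coherence, column norms, and $\|\Phi\|$ all controlled), which you can only establish with probability $1-O(n^{-3})$, and then claim this polynomial tail can be ``absorbed'' into $\exp(-\delta^2k)$ by inflating the constant $c_3$. But the inequality goes the wrong way: Condition \ref{cond1} gives $\delta^2k\geq c_1\delta^{-1}\log N\geq c_1\log n$, so $\exp(-\delta^2 k)$ is typically \emph{much smaller} than $n^{-3}$ (e.g.\ when $\log N\sim\sqrt{n}$), and the unconditional failure probability of your argument is dominated by the $n^{-3}$ term, not bounded by $4\exp(-\delta^2k)$. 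Inflating $c_3$ does not repair this: to push the Hanson--Wright failure probability for a single inner product $\phi_i^*\phi_j$ down to $\exp(-\delta^2k)$ you must raise the coherence threshold to $\gamma\sim\rho\sqrt{\delta^2k+\log n}\gtrsim\rho\delta\sqrt{k}$, and then the Tropp estimate $\mu(\Phi)\sqrt{k}u$ becomes of order $\rho\delta^2k^{3/2}\log k$, which exceeds the target $\rho\delta k\log n$ whenever $\delta\sqrt{k}\gg\sqrt{\log n}$ --- i.e.\ in precisely the interesting regime. The moment-method bound of Lemma \ref{lemma markov} on $\|\Phi\|$ has the same limitation: it is a Markov bound from $O(\log n)$ moments and only yields polynomial tails.

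The paper sidesteps this by never conditioning on $E$ in the final probability accounting. Theorem \ref{lemma main} is deliberately stated as a bound on the \emph{expectation} $\E_{\rb}[\sigma_{\max}(\M^*\M-\Iden)]$, holding with probability $1-3e^{-\delta^2k}$ over $S$ alone; the bad event $\bar E$ enters only inside that expectation, where Lemma \ref{gauss tail} shows it contributes an additive $O(n^{-2})$, which is harmless. The high-probability statement over $\rb$ is then obtained by an entirely different mechanism: $\sigma_{\max}(\M)$ and $\sigma_{\min}(\M)$ are $\sqrt{k}\ino{\x}\leq\sqrt{k}\rho$--Lipschitz functions of the Gaussian vector $\rb$, so Lemma \ref{lip gauss} gives a deviation of at most $\sqrt{2}\,\delta k\rho$ at probability level $\exp(-\delta^2k)$, which sits comfortably inside the target $c_1\rho\delta k\log n$. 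If you want to complete your write-up, you should replace your union-bound step with this expectation-plus-Lipschitz-concentration argument (or find another way to get a genuinely exponential tail over $\rb$); as written, the step ``sharpen $\Pro(\bar E)$ to $e^{-\delta^2k}$'' does not go through.
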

\begin{proof} From Theorem \ref{lemma main}, we know that with probability $1-3e^{-\delta^2k}$ over support $S$
\beq
\E[\|\M^*\M-\Iden\|]\leq c\delta k\rho\log n.\nn
\eeq
On the other hand, $\sigma_{\min}(\M)$ and $\sigma_{\max}(\M)$ are $\sqrt{k}\ino{\x}$ Lipschitz functions of $\rb$. Consequently, conditioned on $S$, applying Lemma \ref{lip gauss}, we have that
\beq
\Pro(|\|\M\|- \E[\|\M\|]|\leq \delta \sqrt{2}k\ino{\x})\leq 2\exp(-\delta^2k).\nn
\eeq
Combining the expectation and deviation estimates, with probability $1-2\exp(-\delta^2k)$, we obtain that
\beq
\|\M\|\leq \sqrt{1+c\delta k\rho\log n}+\sqrt{2}\delta k\rho\leq 1+(c+\sqrt{2})\delta k\rho\log n .\nn
\eeq
The exact same argument applies to the minimum singular value $\sigma_{\min}(\M)$ which gives
\beq
\sigma_{\min}(\M)\geq \sqrt{1-c\delta k\rho\log n}-\sqrt{2}\delta k\rho\geq 1-(c+\sqrt{2})\delta k\rho\log n\nn
\eeq
allowing us to conclude with the desired result.
\end{proof}

\section{On orthogonal decomposition of Gaussian circulant pairs}\label{ortho sec}
Let $\x,\y$ be two unit vectors chosen from $\{\vb_i\}_{i=1}^N$. Form $\{\Xii,\Yi\}_{i=1}^k$ via uniform sampling of Gaussian circular rotations $\{\X_i,\Y_i\}_{i=1}^n$. Decompose $\Xii=\Xii'+\pb_i$, $\Yi=\Yi'+\pb'_i$ where $\pb_i$, $\pb'_i$ are the projections of $\Xii,\Yi$ onto the span of $\{\X'_{r_j},\Y'_{r_j}\}_{j=1}^{i-1}$. Observe that this has a similar flavor to QR decomposition.

\begin{lemma} \label{p individual}Let $S_i$ be the subspace spanned by $\{\X_{r_j},\Y_{r_j}\}_{j=1}^{i-1}$. With probability $1-4\exp(-\delta^2 k)$, we have that for all $1\leq i\leq k$
\beq
\max\{\tn{\Pc_{S_i}(\Xii)}=\tn{\pb_i},~\tn{\Pc_{S_i}(\Yi)}=\tn{\pb'_i}\}\leq c_1\delta k\rho_{direct}.
\eeq
\end{lemma}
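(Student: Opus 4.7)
Plan: reduce the bound to the single Gram-matrix concentration estimate already provided by Corollary \ref{cor main}. Assemble the full matrix $\M=[\X_{r_1}~\dots~\X_{r_k}~\Y_{r_1}~\dots~\Y_{r_k}]\in\R^{n\times 2k}$. Since $\ino{\x},\ino{\y}\leq \rho_{direct}$, Corollary \ref{cor main} applied with $\rho=\rho_{direct}$ yields, with probability at least $1-6\exp(-\delta^2 k)$, the event
\[
E:\quad \|\M^*\M-\Iden_\theta\|\leq C\delta k\rho_{direct}\log n,
\]
where $\theta=\ang(\x,\y)$. I would condition on $E$ in everything that follows.

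A routine induction on $j$ shows that $\mathrm{span}\{\X'_{r_j},\Y'_{r_j}\}_{j<i}=\mathrm{span}\{\X_{r_j},\Y_{r_j}\}_{j<i}=S_i$, so letting $\M_i\in\R^{n\times 2(i-1)}$ denote the submatrix of $\M$ keeping the first $i-1$ columns of each block, one has $\pb_i=\M_i(\M_i^*\M_i)^{-1}\M_i^*\Xii$ and $\tn{\pb_i}^2\leq \tn{\M_i^*\Xii}^2\cdot\|(\M_i^*\M_i)^{-1}\|$. The vector $\M_i^*\Xii$ consists of entries of the $i$th column of $\M^*\M$ at row indices $\{1,\dots,i-1\}\cup\{k+1,\dots,k+i-1\}$, every one of which is off-diagonal in $\Iden_\theta$ (whose only nonzeros in that column sit at rows $i$ and $k+i$). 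Hence $\tn{\M_i^*\Xii}\leq \|(\M^*\M-\Iden_\theta)\e_i\|\leq C\delta k\rho_{direct}\log n$. For the inverse factor, $\M_i^*\M_i$ is a principal submatrix of $\M^*\M$, and the matching principal submatrix of $\Iden_\theta$ has minimum eigenvalue $1-\cos\theta$; so Weyl's inequality combined with the spectral perturbation bound gives $\sigma_{\min}(\M_i^*\M_i)\geq (1-\cos\theta)/2$ under Condition \ref{cond1}. Putting these together yields
\[
\tn{\pb_i}\leq \sqrt{\tfrac{2}{1-\cos\theta}}\cdot C\delta k\rho_{direct}\log n,
\]
and the identical argument with $\Xii\to\Yi$ controls $\tn{\pb_i'}$. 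All $k$ bounds hold simultaneously on the single event $E$, so no further union bound over $i$ is needed.

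The main obstacle is the angle factor $1/\sqrt{1-\cos\theta}$ that unavoidably enters when inverting the Gram matrix of an $\X$-$\Y$ pair. To absorb it into the constant $c_1$ and recover the stated $\theta$-free bound $c_1\delta k\rho_{direct}$, I would invoke the bound $\rho_{cross}\leq 2\rho_{direct}/\sin\theta$ recorded in the notation section, which trades the $\sin\theta$ in the denominator for an extra factor of $\rho_{cross}$; combined with Condition \ref{cond1}'s smallness assumption on $\delta k\rho_{cross}\log n$, this eliminates both the stray $\log n$ and the angle dependence. Tightening the probability from $6\exp(-\delta^2k)$ to $4\exp(-\delta^2k)$ is cosmetic: it amounts to rebalancing Corollary \ref{cor main}'s failure budget between the Gaussian modulation $\rb$ and the uniform sampling of $\{r_i\}_{i=1}^k$.
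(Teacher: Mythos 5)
Your reduction to Corollary \ref{cor main} has two genuine gaps, and the first is precisely the issue the paper's proof is engineered to avoid. First, your lower bound $\sigma_{\min}(\M_i^*\M_i)\geq (1-\cos\theta)/2$ requires the perturbation $\|\M^*\M-\Iden_\theta\|$ to be at most $(1-\cos\theta)/2$, and $1-\cos\theta$ can be arbitrarily small when $\x$ and $\y$ are nearly parallel. Condition \ref{cond1} only gives $\delta k\rho_{cross}\log n\lesssim 1$, i.e.\ a bound by an absolute constant, not by $(1-\cos\theta)/2$. The inequality $\rho_{cross}\leq 2\sin(\theta)^{-1}\rho_{direct}$ that you invoke to absorb the angle goes the wrong way: it is an \emph{upper} bound on $\rho_{cross}$, so it cannot be used to dominate $1/\sqrt{1-\cos\theta}$ (equivalently $\rho_{direct}/\sin\theta$) by something involving $\rho_{cross}$. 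The paper sidesteps this degeneracy by observing that $S_i$ is also spanned by $\{\X_{r_j}\}_{j<i}$ together with the rotations $\Yp_{r_j}$ of the orthogonalized vector $\y^\perp=(\y-\cos(\theta)\x)/\tn{\y-\cos(\theta)\x}$; for that basis the reference Gram matrix is the identity, Theorem \ref{thm main} gives $\sigma_{\min}(\M_i)\geq 1/2$ with no angle dependence, and the price paid, $\ino{\y^\perp}\lesssim\rho_{cross}$, is exactly what Condition \ref{cond1} controls. This is the ``additional orthogonalization procedure'' the paper highlights as its improvement over prior work, which otherwise breaks down for nearby points.

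Second, bounding $\tn{\M_i^*\Xii}$ by the full operator norm $\|\M^*\M-\Iden_\theta\|$ costs a factor $\log n$ that the target bound $c_1\delta k\rho_{direct}$ does not have, and Condition \ref{cond1} cannot remove it: $C\delta k\rho_{direct}\log n\leq c_1\delta k\rho_{direct}$ would force $\log n\leq c_1/C$, which fails for large $n$ (knowing $\delta k\rho_{cross}\log n\lesssim 1$ only yields $\tn{\pb_i}\lesssim 1$, not the required multiple of $\delta k\rho_{direct}$). The paper instead bounds each of the $2(i-1)\leq 2k$ entries of $\qb_i=\M_i^*\Xii$ directly via Hanson--Wright (Lemma \ref{hs app}), obtaining $c_2\delta\rho_{direct}\sqrt{k}$ per entry and hence $\tn{\M_i^*\Xii}\leq c_2\sqrt{2k}\cdot\delta\rho_{direct}\sqrt{k}\lesssim \delta k\rho_{direct}$ with no $\log n$. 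Your projection identity and the observation that all the principal-submatrix bounds hold on a single event are fine, but as written the argument proves a bound of the form $C\delta k\rho_{direct}\log n/\sqrt{1-\cos\theta}$, not the stated $\theta$-free, $\log n$-free estimate.
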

\begin{proof} Our proof is in similar spirit to Lemma $12$ of \cite{yu2015binary}. The main difference is that we apply an additional orthogonalization procedure that reduces dependency on the correlation $|\x^*\y|$ and improves our estimates. To start analysis, let us focus on $\Xii$ only. First observe that 
\beq
\text{span}([\X_{r_1}~\dots~\Xii~\Y_{r_1}~\dots~\Yi])=\text{span}([\X'_{r_1}~\dots~\X'_{r_i}~\Y'_{r_1}~\dots~\Y'_{r_i}]).\nn
\eeq 
Next observe that
\beq
\text{span}([\X_{r_1}~\dots~\Xii~\Y_{r_1}~\dots~\Yi])=\text{span}([\X_{r_1}~\dots~\Xii~\Y^\perp_{r_1}~\dots~\Y^\perp_{r_i}])\nn
\eeq
where $\Yp_{r_1}$ is obtained by the Gaussian circular rotations of $\y^\perp=\frac{\y-\cos(\theta)\x}{\tn{\y-\cos(\theta)\x}}$ where $\theta$ is the angle between $\x$ and $\y$. Consequently, we can focus on understanding the projection of $\Xii$ onto the column span of $\M_i=[\X_{r_1}~\dots~\X_{r_{i-1}}~\Yp_{r_1}~\dots~\Yp_{r_{i-1}}]$. Let $\M_i$ have singular value decomposition $\Ub_L\Sigma\Ub_R^*$ where $\Sigma\in\R^{2(i-1)\times 2(i-1)}$. Consider the vector
\beq
\qb_i=\M_i^*\Xii\in\R^{2(i-1)}\nn
\eeq
From Lemma \ref{hs app}, we know that each entry of $\qb_i$ is less than $c_2\max\{\rho_{direct}\rho_{cross}\delta^2k,\rho_{direct}\delta\sqrt{k}\}\leq c_2\delta \rho_{direct}\sqrt{k}$ for all $1\leq i\leq k$ with probability $1-\exp(-\delta^2k)$ where we used the fact that $\delta^2k\geq c'\log N\geq c'\log n$. On the other hand, using Theorem \ref{thm main}, with the same probability all matrices $\{\M_i\}_{i=1}^k$ satisfy
\beq
\|\M_i^*\M_i-\Iden\|\leq c_3\rho_{cross} k\log n\implies \sigma_{\min}(\M_i)=\sigma_{\min}(\Sigma)\geq 1- c_3\rho_{cross} \delta k\log n\geq 1/2.\nn
\eeq
Consequently, the projection can be bounded as
\beq
\tn{\Pc_{S_i}(\Xii)}=\tn{\Ub_L^*\Xii}\leq \sigma_{\min}^{-1}(\Sigma)\tn{\Sigma\Ub_L^*\Xii}=\sigma_{\min}^{-1}(\Sigma)\tn{\M_i^*\Xii}.\nn
\eeq
This implies that, with $1-2\exp(-\delta^2k)$ probability, $\tn{\Pc_{S_i}(\Xii)}\leq c_2\sqrt{2k\times (\delta \rho_{direct}\sqrt{k})^2}=2c_2\delta k\rho_{direct}$. The identical argument applies to $\Yi$.
\end{proof}

\begin{lemma}\label{p spectral} Consider the matrix $\Pb\in\R^{n\times {2k}}$ obtained by concatenating $\pb_i,\pb_i'$ for $1\leq i\leq k$. Under initial assumptions, we have that $\|\Pb\|\leq 7$ with probability $1-8\exp(-\delta^2k)$.
\end{lemma}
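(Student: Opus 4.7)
The plan is to decompose $\M = \M' + \Pb$, where $\M' = [\X'_{r_1}\ \cdots\ \X'_{r_k}\ \Y'_{r_1}\ \cdots\ \Y'_{r_k}]$ uses the same column ordering as $\M$, and then apply the triangle inequality $\|\Pb\|\leq \|\M\|+\|\M'\|$. For the first term, Corollary \ref{cor main} yields (with probability $1-6\exp(-\delta^2k)$) the estimate $\|\M^*\M-\Iden_\theta\|\leq C\delta k\rho_{cross}\log n$, which is at most a small absolute constant under Condition \ref{cond1}. Since $\|\Iden_\theta\|=1+|\cos\theta|\leq 2$, this immediately gives $\|\M\|\leq\sqrt{3}$.

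The main structural observation that bounds $\|\M'\|$ is that the primed vectors are mutually orthogonal across distinct indices: for $j<i$ one has $\X'_{r_j},\Y'_{r_j}\in S_{j+1}\subseteq S_i$, while by construction $\X'_{r_i}$ and $\Y'_{r_i}$ are orthogonal to $S_i$. Hence the only possibly nonzero off-diagonal entries of $(\M')^*\M'$ are the same-index inner products $\langle \X'_{r_i},\Y'_{r_i}\rangle$. After permuting so that column $i$ is paired with column $k+i$, $(\M')^*\M'$ becomes block-diagonal with $2\times 2$ positive semidefinite blocks
\[
B_i=\begin{pmatrix}\tn{\X'_{r_i}}^2 & \langle \X'_{r_i},\Y'_{r_i}\rangle\\ \langle \X'_{r_i},\Y'_{r_i}\rangle & \tn{\Y'_{r_i}}^2\end{pmatrix},
\]
whose spectral norm is dominated by its trace $\tn{\X'_{r_i}}^2+\tn{\Y'_{r_i}}^2\leq \tn{\X_{r_i}}^2+\tn{\Y_{r_i}}^2=\tn{\diag{\rb}\x}^2+\tn{\diag{\rb}\y}^2$, using that circular shifts are isometries. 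Lemma \ref{hs app} with $t=1$, applied once to $\x$ and once to $\y$, now ensures $\tn{\diag{\rb}\x}^2,\tn{\diag{\rb}\y}^2\leq 2$ on an event whose failure probability is, under Condition \ref{cond1}, absorbed into $2\exp(-\delta^2 k)$. Thus $\|\M'\|^2\leq 4$.

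Combining the two estimates yields $\|\Pb\|\leq \sqrt 3+2<7$, and the total failure probability is controlled by a union bound of the $6\exp(-\delta^2 k)$ term from Corollary \ref{cor main} and the length events, giving the stated $1-8\exp(-\delta^2 k)$. The main point to underline is the necessity of the block-diagonal structure of $(\M')^*\M'$: a naive column-wise estimate via Lemma \ref{p individual} only produces $\|\Pb\|_F^2 \lesssim k(\delta k\rho_{direct})^2$, which grows with $k$ even under Condition \ref{cond1}, so the $k$-independent \emph{spectral}-norm bound relies entirely on exploiting the pairwise orthogonality across distinct indices that is built into the Gram--Schmidt-style decomposition.
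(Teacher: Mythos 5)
Your proposal is correct and follows essentially the same route as the paper: write $\Pb=\M-\M'$, bound $\|\M\|$ via Corollary \ref{cor main}, bound $\|\M'\|$ using the orthogonality of the primed vectors across distinct indices together with control of the column lengths $\tn{\X_{r_i}}=\tn{\diag{\rb}\x}$, and finish by the triangle inequality with the same $8\exp(-\delta^2k)$ accounting. Your $2\times2$ block-diagonal treatment of $(\M')^*\M'$ is a slightly sharper bookkeeping of the same orthogonality (giving $\|\M'\|\leq 2$ where the paper splits $\M'$ into the $\X'$ and $\Y'$ halves and gets $\|\M'\|\leq 4$), but the argument is the same in substance.
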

\begin{proof} Consider the matrix $\M=[\X_{r_1}~\dots~\X_{r_k}~\Y_{r_1}~\dots~\Y_{r_k}]$. From Corollary \ref{cor main}, we know that $\|\M\|\leq 3$ with probability $1-6\exp(-\delta^2k)$. On the other hand, using Gaussian concentration, each column of $\M$ obeys
\beq
\Pro(\tn{\Xii}\leq 2)\leq 1- \exp(-0.5\rho^{-2}).\nn
\eeq
Using our initial assumption $\delta\geq c'k\rho$ (see Condition \ref{cond1}), this holds for all columns with probability $1-k\exp(-0.5\rho^{-2})\geq 1-\exp(-\delta^2k)$. Given this, observe that $\Xii'$ is perpendicular to $\{\X_{r_{j}}'\}_{j\neq i}$ and $\tn{\Xii'}\leq \tn{\Xii}\leq 2$. This ensures that
\beq
\|[\X_{r_1}'~\dots~\X_{r_k}']\|\leq \max_{1\leq i\leq k}\tn{\X_{r_k}}\leq 2.\nn
\eeq
The same argument applies to $\Yi'$ ensuring $\M'=[\X_{r_1}'~\dots~\X_{r_k}'~\Y_{r_1}'~\dots~\Y_{r_k}']$ has spectral norm of at most $4$. Consequently $\|\Pb\|=\|\M-\M'\|\leq \|\M\|+\|\M'\|\leq 7$.
\end{proof}

\begin{lemma} The matrix $\Pb=[\pb_1~\dots~\pb_k~\pb'_1~\dots~\pb'_k]$ obeys the following bounds with probability $1-12\exp(-c\delta^2k)$.
\begin{itemize}
\item Each column $\pb_i,\pb'_i$ of $\Pb$ satisfies $\tn{\pb_i},\tn{\pb'_i}\leq C\delta \rho k$ for all $1\leq i\leq k$.
\item Spectral norm of $\Pb$ satisfies $\|\Pb\|\leq 7$.
\end{itemize}
\end{lemma}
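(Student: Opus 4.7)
\medskip

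\noindent\textbf{Proof plan.} The statement is essentially a bookkeeping consolidation of the two preceding lemmas, so the plan is to invoke each of them and glue their conclusions by a union bound. First I would apply Lemma \ref{p individual}, which directly gives the per-column bound $\max\{\tn{\pb_i},\tn{\pb'_i}\}\leq c_1\delta k\rho_{direct}$ uniformly in $i\in\{1,\dots,k\}$, with probability at least $1-4\exp(-\delta^2 k)$. Setting $C\geq c_1$ and noting $\rho_{direct}\leq \rho_{cross}=\rho$ (by the definition of $\rho_{cross}$ and the assumption used throughout this section) yields the first bullet point exactly as stated.

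Next I would invoke Lemma \ref{p spectral}, which shows $\|\Pb\|\leq 7$ with probability at least $1-8\exp(-\delta^2 k)$. Since the second lemma only requires the spectral control of $\M$ from Corollary \ref{cor main} plus a Gaussian tail bound on the column norms $\tn{\Xii}$ that uses the standing assumption $\delta\geq c_3 k\rho$ from Condition \ref{cond1}, these hypotheses are already in force under the initial assumptions and no new conditions need to be imposed.

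Finally, a union bound over the failure events in Lemma \ref{p individual} and Lemma \ref{p spectral} gives probability at least $1-4\exp(-\delta^2 k)-8\exp(-\delta^2 k)=1-12\exp(-\delta^2 k)$, which matches the claimed bound (absorbing the constant into $c$). The only thing that could in principle require care is verifying that the two probabilistic events share the same underlying randomness (the modulation vector $\rb$ and the sampling indices $\{r_i\}_{i=1}^k$) so the union bound is valid; this is immediate because both lemmas are stated with respect to these same variables.

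\medskip

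\noindent\textbf{Main obstacle.} There is essentially no technical obstacle here: all the heavy lifting, namely the Hanson--Wright-type concentration for the entries of $\M_i^*\Xii$, the spectral estimate on $\M_i$ from Theorem \ref{thm main}, and the Gaussian tail bound on the column lengths, has already been performed. The only minor point of bookkeeping is to ensure the constants $C$ and $c$ in the final statement are chosen to accommodate both lemmas simultaneously and to re-express the bound in terms of $\rho=\rho_{cross}$ rather than $\rho_{direct}$; this is harmless because $\rho_{direct}\leq \rho_{cross}$ by definition.
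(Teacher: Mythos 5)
Your proposal is correct and matches the paper's argument exactly: the paper's proof is a one-line invocation of Lemmas \ref{p individual} and \ref{p spectral}, and your union-bound accounting ($4+8=12$ failure terms) recovers precisely the stated probability $1-12\exp(-c\delta^2k)$. The observation that $\rho_{direct}\leq\rho_{cross}$ justifies restating the column bound in terms of $\rho$ and is the only bookkeeping needed.
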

\begin{proof} The proof follows directly by making use of Lemmas \ref{p spectral} and \ref{p individual}.
\end{proof}

\section{Final perturbation analysis}

We are in a position to prove our main result Theorem \ref{main result}.

\begin{proof} The proof is based on perturbation analysis, namely to what extent structured samples deviate from Gaussian-like behavior. We break the analysis in two parts, namely over $\rb$ and over $\h$.\\
\noindent {\bf{$\bullet$~Upper bounds on the perturbation due to $\rb$:}}\\
Recall that $\Cb=\Cb_{\h}$ is the circulant part of the embedding operator where $\h\sim\Nn(0,\Iden)$ is its first row and $i$th row is equal to $s_i(\h)$ for $1\leq i\leq n$. Given any two points $\x,\y$, chosen from $\{\vb_i\}_{i=1}^N$ consider the vectors $\x'=\diag{\rb}\x=\X_1$ and $\y'=\diag{\rb}\y=\Y_1$. Now, observe that the $i$th entry of $\Cb\x'$ is equal to 
\beq
s_i(\h)^*\x'=\h^*s_{n-i}(\x')=\h^*\X_{n-i}.\nn
\eeq
Similarly the $i$th entry of $\Cb\y'$ is equal to $\h^*\Y_{n-i}$. Consequently, for a random subsampling $\Rb\Cb\in\R^{k\times n}$ of $\Cb$, we have that
\beq
\Rb\Cb\x'=\M_\x\h,~\Rb\Cb\y'=\M_\y\h,\nn
\eeq
where $\M_\x=[\X_{r_1}~\dots~\X_{r_k}]^*$ and $\M_\y=[\Y_{r_1}~\dots~\Y_{r_k}]^*$ and $\{r_i\}_{i=1}^k$ are randomly selected coordinates. Next, for each $1\leq i\leq k$, we decompose $\X_{r_i},\Y_{r_i}$ as described in Section \ref{ortho sec}.
\beq
\X_{r_i}=\X_{r_i}'+\pb_i,~\Y_{r_i}=\Y_{r_i}'+\pb'_i.\nn
\eeq
Since $\h$ is a standard Gaussian vector, by construction, $\h^*\Xii'$ and $\h^*\Yi'$ is independent of $\{\h^*\X_{r_j}',\h^*\Y_{r_j}'\}_{j\neq i}$. To proceed, let us estimate the angle between $\Xii',\Yi'$ probabilistically.

Firstly, $\tn{\Xii}^2,\tn{\Yi}^2$ lies between $1\pm c_1\delta\rho\sqrt{k}$ with probability $1-\exp(-\delta^2k)$. Next, with the same probability $|\Xii^*\Yi-\x^*\y|\leq c_1\delta\rho\sqrt{k}$. Together these imply that $|\bar\X^*_{r_i}\bar\Y_{r_i}-\x^*\y|\leq 4c_1\delta\rho\sqrt{k}$ where $\bar{\ab}=\ab/\tn{\ab}$. Making use of Lemma \ref{angular perturb}, we can conclude that
\beq
|\ang(\Xii,\Yi)-\ang(\x,\y)|\leq c_2\sqrt{\rho\delta\sqrt{k}}:=\Delta_{\rb}.
\eeq
In particular, since $\Xii,\Yi$ are circulant rotations of $\X_{r_1},\Y_{r_1}$ the angle between is exactly same i.e. $\ang(\X_{r_i},\Y_{r_i})=\ang(\X_{r_j},\Y_{r_j})$ for $1\leq i,j\leq k$.

With these, we can state the following result that summarizes the properties of the perturbation. Below we additionally used the fact that $\log N\leq \delta^2 k/4$.
\begin{lemma} $\{\Xii,\Yi\}$ satisfies the following with probability $1-12\exp(-\delta^2 k/2)$ for all $\x,\y$ pairs chosen from $\{\vb_i\}_{i=1}^N$ where the probability is over $\rb$ and support $S$.
\begin{itemize}
\item $|\ang(\Xii,\Yi)-\ang(\x,\y)|\leq \Delta_{\rb}$.
\item $\{\Xii',\Yi'\}_{i=1}^k$ are orthogonal pairs and for all $i$, $\Xii-\Xii'=\pb_i$, $\Yi-\Yi'=\pb_i'$ where $\pb_i,\pb_i'$ obey
\beq
\tn{\pb_i}\leq C\rho\delta k,~\|[\pb_1~\dots~\pb_k]\|\leq 7.\nn
\eeq
\end{itemize}
\end{lemma}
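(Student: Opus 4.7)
The plan is to read this lemma as a consolidation of the bounds already established in Section~\ref{ortho sec} and in the calculation immediately preceding the statement: the first bullet is just the angular estimate worked out above, while the second bullet is a direct quotation of the combined lemma at the end of Section~\ref{ortho sec}. The remaining tasks are to (a) verify that, for a fixed pair $(\x,\y)$, both bullets fail with probability at most $O(\exp(-c\delta^2 k))$, and (b) absorb a union bound over the at most $\binom{N}{2}$ pairs using the hypothesis $\log N \leq \delta^2 k/4$ built into Condition~\ref{cond1}.

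For the first bullet I would fix $(\x,\y)$ and apply Lemma~\ref{hs app} with deviation $t = c_1\delta\rho\sqrt{k}$ three times, once each for $|\tn{\Xii}^2-1|$, $|\tn{\Yi}^2-1|$, and $|\Xii^*\Yi - \x^*\y|$. Condition~\ref{cond1} (in particular item~3, $\delta\geq c_3 k\rho$) forces the tail of Lemma~\ref{hs app} into the sub-Gaussian regime $t^2/\rho^2$ rather than the sub-exponential regime $t/\rho^2$, so each single-$i$ event fails with probability at most $2\exp(-c'\delta^2 k)$. A union over $i\in[k]$ and the three events costs a factor of $6k$, which I absorb into the exponent using $k\leq \exp(\delta^2 k/4)$. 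The resulting length estimate lets one normalize to $|\bar\Xii^*\bar\Yi - \x^*\y|\leq 4c_1\delta\rho\sqrt{k}$, and Lemma~\ref{angular perturb} (inner-product to angle conversion for unit vectors) then yields $|\ang(\Xii,\Yi) - \ang(\x,\y)|\leq c_2\sqrt{\rho\delta\sqrt{k}} = \Delta_{\rb}$.

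For the second bullet, the combined lemma at the end of Section~\ref{ortho sec} already delivers $\tn{\pb_i},\tn{\pb_i'}\leq C\rho\delta k$ for all $i$ together with $\|\Pb\|\leq 7$, at total failure probability $12\exp(-c\delta^2 k)$ over $\rb$ and the support $S$; the orthogonality of the $\{\Xii',\Yi'\}$ pairs is built into the QR-style decomposition set up at the beginning of that section. So this bullet requires no new argument beyond invoking that result.

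Finally I would union bound over the at most $N^2$ ordered pairs $(\x,\y)\in\{\vb_i\}\times\{\vb_i\}$. The combinatorial factor $N^2\leq \exp(\delta^2 k/2)$ (using $\log N\leq \delta^2 k/4$) shrinks the exponent by at most a factor of two, landing at $12\exp(-\delta^2 k/2)$. The main obstacle is not conceptual but bookkeeping: one must track the constants $c,c',c_1,c_2,c_3$ in Lemma~\ref{hs app} and Section~\ref{ortho sec}, choosing them so that after the $6k$ union and the $N^2$ union the numerical prefactor collapses to the claimed $12$ and the exponent to $\delta^2 k/2$. No new probabilistic tool is required beyond Hanson--Wright and the orthogonal decomposition machinery already in place.
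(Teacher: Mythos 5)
Your proposal is correct and follows essentially the same route as the paper: the paper states this lemma as a summary of the immediately preceding angular computation (Hanson--Wright via Lemma~\ref{hs app} applied to $\tn{\Xii}^2$, $\tn{\Yi}^2$, $\Xii^*\Yi$, followed by Lemma~\ref{angular perturb}) together with the combined perturbation lemma at the end of Section~\ref{ortho sec}, finished with a union bound over pairs using $\log N\leq \delta^2k/4$. Your accounting of why Condition~\ref{cond1} keeps the Hanson--Wright tail in the sub-Gaussian regime and how the $N^2$ factor is absorbed into the exponent matches the paper's intent.
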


What remains is to characterize the effect of perturbation error on the binary embedding distortion. Let $\theta_i=\ang(\Xii',\Yi')$. Applying Lemma \ref{angular perturb} again, we know for a fact that (by picking $c_2>0$ to be a large enough constant)
\beq
\ang(\Xii',\Xii)\leq c_2\delta \rho k/2,~\ang(\Yi',\Yi)\leq c_2\rho \delta k/2.\nn
\eeq
Together, these ensure that
\beq
|\theta_i-\ang(\x,\y)|=|\ang(\Xii',\Yi')-\ang(\x,\y)|\leq c_2\delta\rho k+|\ang(\Xii,\Yi)-\ang(\x,\y)|\leq c_2\delta\rho k+\Delta_{\rb}:=\Delta'_{\rb}.\nn
\eeq
$\Delta'_{\rb}$ will be the source of embedding distortion due to $\rb$ and our initial assumptions will guarantee that it is small. Next section develops estimates for the remaining source of the perturbation which is connected to $\h$.

\noindent{\bf{$\bullet$~Probabilistic analysis of the perturbation due to $\h$:}}\\
Pick $\deltab>0$. For the rest of the discussion probabilities will be over $\h$. Let us define the events
\begin{align}
&E_i=(\h^*\Xii'>\deltab~\text{and}~\h^*\Yi'<-\deltab)~\text{or}~(\h^*\Xii'<-\deltab~\text{and}~\h^*\Yi'>\deltab),\nn\\
&\bar{E}_i=(\h^*\Xii'>\deltab~\text{and}~\h^*\Yi'>\deltab)~\text{or}~(\h^*\Xii'<-\deltab~\text{and}~\h^*\Yi'<-\deltab).\nn
\end{align}
$E_i$ and $\bar E_i$ are the robust versions of the events $\sgn{\h^*\Xii'}\ne\sgn{\h^*\Yi'}$ and $\sgn{\h^*\Xii'}=\sgn{\h^*\Yi'}$ respectively.

Without losing generality, let us consider the event $E_i$. Recall that with probability $1-\exp(-\delta^2k)$, for all $1\leq i\leq k$, we can guarantee that $0.75\leq\tn{\Xii}^2\leq 2$ and $\tn{\pb_i}^2\leq 0.25$. Hence, conditioned on $\rb$, $\h^*\Xii'$ (and $\h^*\Xii'$) is a Gaussian random variable with variance between $0.5$ to $2$. Also, observe that $\Pro(\sgn{\h^*\Xii'}\ne\sgn{\h^*\Yi'})=\theta_i$. Consequently, letting $\theta=\ang(\x,\y)$, from small ball probability of Gaussians, we have that
\beq
\Pro(E_i)\geq \ang(\theta_i)-c_3\deltab\geq \ang(\theta)-c_3(\deltab+\Delta'_{\rb}).\nn
\eeq
Let $E=\sum_{i=1}^k1_{E_i}$. Consequently, applying a standard Chernoff bound, we find that
\beq
\Pro(E\geq k(\ang(\theta)-c_3(\deltab+\Delta'_{\rb})-\deltab)):=\Pro(E\geq k(\ang(\theta)-c_4(\deltab+\Delta'_{\rb})))\geq \exp(-2\deltab^2k)\label{lower bad}
\eeq
where the probability is over $\h$.

Next, we consider the impact of perturbations $\{\pb_i,\pb_i'\}_{i=1}^k$. Using the facts that $\|\Pb\|_F^2\leq C^2\rho^2\delta^2k^3, \|\Pb\|\leq 7$ and applying Lemma \ref{expo bound}, we have that
\beq
\Pro(\tn{[\pb_1~\pb_2~\dots~\pb_k]^*\h}^2\leq \rho^2 \delta^2k^3+t)\geq 1-\exp(-c_5\min\{\frac{t^2}{50C^2\rho^2\delta^2k^3},\frac{t}{50}\}).\nn
\eeq
To proceed, pick $t=\eps_{buff}\deltab^2 k$ to obtain that with probability $1-\exp(-c_6\min\{\frac{\eps_{buff}^2\deltab^4 }{\rho^2\delta^2k},\epsb\deltab^2 k\})$, perturbation obeys
\beq
\tn{\Pb^*\h}^2\leq \rho^2\delta^2 k^3+\epsb\deltab^2 k.\label{bad est}
\eeq
The same bound applies to the perturbation over $\y$ namely $\Pb'=[\pb'_1~\dots~\pb'_k]$. Now for any $1\leq i\leq k$, observe that $\sgn{\Xii}\ne \sgn{\Yi}$ whenever
\beq
\text{i)~$E_i$ holds ~~and}~~~\text{ii)}~\max\{|\h^*(\Xii-\Xii')|,|\h^*(\Yi-\Yi')|\}<\deltab.\label{bad event}
\eeq
We know that $E_i$ holds on at least $k(\ang(\theta)-c_4(\deltab+\Delta'_{\rb}))$ coordinates. Next, we can upper bound the number of coordinates for which \eqref{bad event} does not hold. Using the estimate \eqref{bad est}, this number is given by
\beq
\frac{\tn{\Pb^*\h}^2+\tn{\Pb'^*\h}^2}{\deltab^2}\leq k\epsb+\deltab^{-2}\rho^2 \delta^2k^3.\label{upper bad}
\eeq
With the estimates \eqref{upper bad} and \eqref{lower bad}, we find that for all pairs $\x,\y$ with probability \[1-N^2\exp(-c_6\min\{\frac{\eps_{buff}^2\deltab^4 }{\rho^2\delta^2k},\epsb\deltab^2 k\})-\exp(-\delta^2k/2)\] we have that
\beq
k^{-1}\|\Rb\A\x,\Rb\A\y\|_H\geq \ang(\x,\y)-[ \epsb+\deltab^{-2}\delta^2\rho^2 k^2+c_3(\deltab+\Delta'_{\rb})].\nn
\eeq
The identical (symmetric) argument allows us to obtain the upper bound on the Hamming distance to conclude that
\beq
|k^{-1}\|\Rb\A\x,\Rb\A\y\|_H-\ang(\x,\y)|\leq  \epsb+\deltab^{-2}\delta^2\rho^2 k^2+c_3(\deltab+\Delta'_{\rb}).\nn
\eeq
With these bounds, we find that binary embedding with $c_{final}\delta$ distortion succeeds with probability $1-2\exp(-c_6\delta^3k/2)$ under the following conditions:
\begin{itemize}
\item $\epsb\leq \delta$,
\item $\deltab\leq \delta$,
\item $\rho^2k^2\leq \deltab^2\delta^{-1}$,
\item Via $\Delta'_{\rb}$: $\rho\delta\sqrt{k}\leq \delta^2$ i.e. $\rho\sqrt{k}\leq \delta$,
\item Via $\Delta'_{\rb}$: $\rho \delta k\leq \delta$.
\end{itemize}
To satisfy these, pick $\deltab=\epsb=\delta$. Furthermore, our initial assumptions (Condition \ref{cond1}) guarantee that $\delta\geq C_0\rho k\geq  C_0\max\{\rho^2 k^2,\rho\sqrt{k}\}$ for a sufficiently large constant $C_0>0$ which yields a total distortion proportional to $\delta$. Finally, the probability of success is 
\[1-N^2\exp(-c_6\min\{\frac{\delta^4 }{\rho^2k},\delta^3 k\})-\exp(-\delta^2k/2).\]
Observing $\rho^2k\leq \delta^2/C_0^2$ ($C_0>1$ is sufficient) and using the initial assumption $C_1\log N\leq \delta^3k$ for a sufficiently large constant $C_1>0$ we can conclude. In particular pick $C_1>4/c_6$. With these, we ensured that the total distortion is $c_{final}\delta$ for an absolute constant $c_{final}>0$ with the desired probability. Rescaling $\delta$ as $\delta\rightarrow c_{final}^{-1}\delta$, we conclude with the advertised result in Theorem \ref{main result}.\end{proof}



\small{
\bibliography{Bibfiles}
\bibliographystyle{plain}
}

\appendix
\section{Standard results}

\begin{lemma} \label{inf lip}Given vectors $\vb,\ub$, $f(\vb)=\tn{\diag{\ub}\vb}$ is $\ino{\ub}$ Lipschitz function.
\end{lemma}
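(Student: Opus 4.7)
The plan is to prove Lipschitz continuity directly from the definition using the reverse triangle inequality. Given two vectors $\vb_1, \vb_2$, I would first write
\[
|f(\vb_1) - f(\vb_2)| = \big|\tn{\diag{\ub}\vb_1} - \tn{\diag{\ub}\vb_2}\big|
\]
and apply the reverse triangle inequality for the $\ell_2$ norm to bound this by $\tn{\diag{\ub}\vb_1 - \diag{\ub}\vb_2} = \tn{\diag{\ub}(\vb_1-\vb_2)}$, using linearity of the map $\vb \mapsto \diag{\ub}\vb$.

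Next I would expand the $\ell_2$ norm coordinatewise. Letting $\w = \vb_1 - \vb_2$, we have
\[
\tn{\diag{\ub}\w}^2 = \sum_i u_i^2 w_i^2 \leq \big(\max_i u_i^2\big)\sum_i w_i^2 = \ino{\ub}^2 \tn{\w}^2,
\]
and taking square roots yields $\tn{\diag{\ub}(\vb_1-\vb_2)} \leq \ino{\ub}\tn{\vb_1-\vb_2}$, which is the desired Lipschitz bound. Since there is no probabilistic or combinatorial content here, no step presents a genuine obstacle; this is a short deterministic calculation whose only role is as a building block invoked in the earlier singular value Lipschitz argument.
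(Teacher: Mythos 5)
Your proof is correct. The paper states Lemma \ref{inf lip} without any proof (it is listed among the standard results), and your argument --- the reverse triangle inequality followed by the coordinatewise bound $\sum_i u_i^2 w_i^2 \leq \ino{\ub}^2\tn{\w}^2$, i.e.\ the observation that the operator norm of $\diag{\ub}$ is $\ino{\ub}$ --- is exactly the canonical one the authors are implicitly relying on.
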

\begin{lemma} [Lipschitz concentration of Gaussians] \label{lip gauss}If $f:\R^n\rightarrow \R$ is an $L$-Lipschitz function, for a standard Gaussian vector $\g$ $\Pro(|f(\g)-\E[f(\g)]|>t)\leq 2\exp(-t^2/(2L^2))$.
\end{lemma}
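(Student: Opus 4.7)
The plan is to prove this by the Herbst argument, which converts the Gaussian logarithmic Sobolev inequality into a sub-Gaussian moment generating function estimate. First I would reduce: by a standard mollification I may assume $f$ is smooth with $\tn{\nabla f}\leq L$ pointwise (Lipschitz functions are a.e.\ differentiable by Rademacher and can be uniformly approximated by convolutions with narrow Gaussian kernels that do not increase the Lipschitz constant, after which one passes to the limit in the tail bound by Fatou). By replacing $f$ with $f-\E[f(\g)]$ I may assume $\E[f(\g)]=0$, and by rescaling $L=1$. The goal then collapses to showing $\Pro(f(\g)>t)\leq \exp(-t^2/2)$; applying the same argument to $-f$ (also $1$-Lipschitz) and taking a union bound produces the factor $2$ in the two-sided statement.

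Next I invoke Gross's logarithmic Sobolev inequality for the standard Gaussian measure: for every sufficiently nice smooth $g$,
\[
\E[g^2\log g^2]-\E[g^2]\log\E[g^2]\;\leq\; 2\,\E[\tn{\nabla g}^2].
\]
Apply this with $g=\exp(\lambda f/2)$: a direct computation rewrites the left-hand side as $\lambda\E[f\,e^{\lambda f}]-\E[e^{\lambda f}]\log\E[e^{\lambda f}]$, while the right-hand side is bounded by $(\lambda^2/2)\E[e^{\lambda f}]$ using $\tn{\nabla f}\leq 1$. Setting $\psi(\lambda)=\log\E[e^{\lambda f}]$ and dividing by $\E[e^{\lambda f}]$ gives the differential inequality $\lambda\psi'(\lambda)-\psi(\lambda)\leq \lambda^2/2$, equivalently $(d/d\lambda)[\psi(\lambda)/\lambda]\leq 1/2$. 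Since $\psi(\lambda)/\lambda\to \E[f]=0$ as $\lambda\downarrow 0$, integrating yields $\psi(\lambda)\leq \lambda^2/2$ for all $\lambda>0$, i.e.\ $f(\g)$ is sub-Gaussian with proxy $1$.

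Finally, apply the Chernoff method: for every $\lambda>0$,
\[
\Pro(f(\g)>t)\;\leq\; e^{-\lambda t}\E[e^{\lambda f(\g)}]\;\leq\; \exp\!\bigl(\lambda^2/2-\lambda t\bigr),
\]
and optimizing at $\lambda=t$ gives $\Pro(f(\g)>t)\leq \exp(-t^2/2)$. Undoing the rescaling restores the factor $L^2$, and the symmetric bound for $-f$ combines into the stated two-sided inequality.

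The main obstacle is really the log-Sobolev inequality itself: once it is in hand the Herbst ODE and Chernoff optimization are mechanical. Two alternative pathways avoid it. One is the Borell--Sudakov--Tsirelson Gaussian isoperimetric inequality, which identifies half-spaces as tail-probability extremizers and yields the conclusion directly with sharp constants; this is the most conceptually clean route but requires the isoperimetric machinery. Another is the Ornstein--Uhlenbeck semigroup interpolation, where one differentiates $t\mapsto \E[f(\g)\,P_t f(\g)]$ along the semigroup and uses the contraction $\tn{\nabla P_t f}\leq e^{-t}\E[\tn{\nabla f}\mid \cdot]$; this gives the sub-Gaussian MGF bound without explicit appeal to log-Sobolev, at the cost of a little more semigroup setup.
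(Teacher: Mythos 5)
Your proof is correct. The paper itself states Lemma \ref{lip gauss} in the appendix under ``Standard results'' and offers no proof at all, so there is no in-paper argument to compare against; what you have written is a complete, self-contained derivation of the standard Gaussian concentration inequality (Borell--Sudakov--Tsirelson / Cirel'son--Ibragimov--Sudakov) via the Herbst argument. The details all check out: the mollification step legitimately reduces to smooth $f$ with $\tn{\nabla f}\leq L$ pointwise, the application of the log-Sobolev inequality with $g=e^{\lambda f/2}$ and the resulting differential inequality $(d/d\lambda)\left[\psi(\lambda)/\lambda\right]\leq 1/2$ are computed correctly, and the Chernoff optimization recovers exactly the constant $\exp(-t^2/(2L^2))$ claimed in the lemma, with the factor $2$ coming from the union of the two one-sided events. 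The alternative routes you mention (Gaussian isoperimetry, Ornstein--Uhlenbeck interpolation) are accurately described and equally standard; any of the three would serve here.
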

\begin{lemma} \label{expo bound}Let $\vb_1,\dots,\vb_k\in\R^n$ be vectors satisfying $\tn{\vb_i}\leq \ell$. Let $\Vb=[\vb_1~\dots~\vb_k]^*$ and $\g\sim\Nn(0,\Iden_n)$. Then, we have that
\beq
\Pro(\tn{\Vb\g}^2\geq \|\V\|_F^2+t)\leq \exp(-c\min\{\frac{t^2}{\|\Vb\|_F^2\|\Vb\|^2},\frac{t}{\|\Vb\|^2}\}).\nn
\eeq
\end{lemma}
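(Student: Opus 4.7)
The plan is to reduce this to the Hanson-Wright inequality stated earlier in the excerpt. Observe that $\tn{\Vb\g}^2 = \g^* \Vb^* \Vb \, \g = \g^* \A \g$ where $\A = \Vb^*\Vb$ is positive semidefinite. The expectation is $\E[\g^*\A\g] = \tr(\A) = \tr(\Vb^*\Vb) = \|\Vb\|_F^2$, which matches the centering term in the desired inequality. So it suffices to apply Hanson-Wright to $\A$ and translate the bounds on $\|\A\|$ and $\|\A\|_F$ back into the quantities $\|\Vb\|$ and $\|\Vb\|_F$.

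For the operator norm, $\|\A\| = \|\Vb^*\Vb\| = \sigma_{\max}(\Vb)^2 = \|\Vb\|^2$, which gives the denominator $\|\Vb\|^2$ in the linear term directly. For the Frobenius norm, I would use the standard submultiplicative bound $\|\Vb^*\Vb\|_F \leq \|\Vb\| \cdot \|\Vb\|_F$ (viewing $\Vb^*\Vb$ as the product of an operator with a matrix and using $\|AB\|_F \leq \|A\|\cdot \|B\|_F$). Squaring yields $\|\A\|_F^2 \leq \|\Vb\|^2 \|\Vb\|_F^2$, which is exactly the denominator appearing in the quadratic term.

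Plugging these two norm estimates into the Hanson-Wright tail bound
\[
\Pro(|\g^*\A\g - \tr(\A)| \geq t) \leq 2\exp\!\left(-c\min\left\{\tfrac{t^2}{\|\A\|_F^2},\tfrac{t}{\|\A\|}\right\}\right)
\]
and keeping only the one-sided (upper-tail) event produces the advertised bound. Since we only want an upper tail the factor of $2$ out front disappears (or can be absorbed into the constant $c$ by a standard adjustment). The hypothesis $\tn{\vb_i}\leq \ell$ is never actually needed for this statement; it is presumably included only for context in how the lemma is invoked.

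There is really no main obstacle here, as this is a direct corollary of Hanson-Wright; the only non-trivial step is the submultiplicative Frobenius bound $\|\Vb^*\Vb\|_F \leq \|\Vb\|\cdot \|\Vb\|_F$, but this is a classical fact that can be proved in one line by writing $\|\Vb^*\Vb\|_F^2 = \sum_i \sigma_i(\Vb)^4 \leq \sigma_{\max}(\Vb)^2 \sum_i \sigma_i(\Vb)^2 = \|\Vb\|^2 \|\Vb\|_F^2$ via the singular value decomposition of $\Vb$.
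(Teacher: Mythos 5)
Your proof is correct and essentially the same as the paper's: the paper diagonalizes $\Vb$ via its SVD and applies Bernstein's inequality to the resulting weighted sum of subexponentials, which for a positive semidefinite quadratic form is exactly Hanson--Wright applied to $\A=\Vb^*\Vb$, and the key norm computation $\|\Vb^*\Vb\|_F^2=\sum_i\sigma_i(\Vb)^4\leq\|\Vb\|^2\|\Vb\|_F^2$ appears verbatim in both arguments. You are also right that the hypothesis $\tn{\vb_i}\leq\ell$ is not used.
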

\begin{proof} Let $\Vb$ have singular value decomposition $\Ub_L\Sigma\Ub_R^*$ where $\Sigma\in\R^{k\times k}$. $\Ub_R^*\g\sim\g$ and $\Ub_L$ does not affect the $\ell_2$ norm. Hence $\tn{\Ub_L\Sigma\Ub_R^*}^2\sim \tn{\Sigma\g}^2$ which is a weighted sum of subexponentials where weights are at most $\|\Vb\|^2$. Denoting the $i$th weight by $w_i=\sigma_i(\Vb)^2$ we have that
\beq
\sum_{i=1}^kw_i=\|\Vb\|_F^2,~\sup_{1\leq i\leq w_i}\leq \|\Vb\|^2.\nn
\eeq
Subject to these constraints, we are interested in finding $\sum_{i=1}^kw_i^2$. Observe that if $a>b>c$ we have that $(a+c)^2+(b-c)^2>a^2+b^2$. Consequently, without losing generality, we can assume that nonzero singular values are as large as possible, namely $\|\Vb\|$ so that there are $\frac{\|\Vb\|_F^2}{\|\Vb\|^2} $ nonzero values equal to $\|\Vb\|$.
\beq
\sum_{i=1}^kw_i^2\leq \frac{\|\Vb\|_F^2}{\|\Vb\|^2} \|\Vb\|^4=\|\Vb\|_F^2\|\Vb\|^2.\nn
\eeq
 With this bound, Proposition $5.16$ of \cite{vershynin2010introduction} yields the desired result.
\end{proof}
\begin{lemma} \label{ortho cost} Let $\ab,\bb$ be two unit vectors obeying $\max\{\ino{\ab},\ino{\bb}\}\leq \rho$. Let $\theta$ be the angle in between. Let $\bb'=\bb-\ab\ab^*\bb$. We have that $\ino{\bar{\bb}'}\leq \frac{2\rho}{\sin(\theta)}$.
\end{lemma}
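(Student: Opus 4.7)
The plan is to exploit the fact that $\ab,\bb$ are unit vectors, so the inner product $\ab^*\bb$ equals $\cos(\theta)$, and the Gram-Schmidt orthogonalization $\bb' = \bb - (\ab^*\bb)\ab$ simplifies to $\bb' = \bb - \cos(\theta)\ab$. From this closed form, the $\ell_\infty$ bound will follow from just the triangle inequality applied entrywise, once we know the normalizing factor $\tn{\bb'}$.

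First, I would compute $\tn{\bb'}$ exactly. Since $\ab,\bb$ are unit vectors with $\ab^*\bb=\cos(\theta)$, expanding gives
\[
\tn{\bb'}^2 = \tn{\bb}^2 - 2\cos(\theta)(\ab^*\bb) + \cos^2(\theta)\tn{\ab}^2 = 1 - \cos^2(\theta) = \sin^2(\theta),
\]
so $\tn{\bb'} = \sin(\theta)$. This is the factor that produces the $\sin(\theta)^{-1}$ blowup in the conclusion and makes the inequality tight when $\theta$ is small.

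Next, I would bound the entries of $\bb'$ directly. For the $i$th coordinate, the triangle inequality together with $|\cos(\theta)|\leq 1$ and the coherence hypothesis $\max\{\ino{\ab},\ino{\bb}\}\leq \rho$ gives
\[
|\bb'_i| = |\bb_i - \cos(\theta)\ab_i| \leq |\bb_i| + |\cos(\theta)|\cdot|\ab_i| \leq \rho + \rho = 2\rho.
\]
Dividing by $\tn{\bb'}=\sin(\theta)$ yields $\ino{\bar{\bb}'}\leq 2\rho/\sin(\theta)$, which is the claim.

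There is essentially no obstacle here; the only subtle point is remembering that $\ab^*\bb=\cos(\theta)$ (as opposed to $\pm\cos(\theta)$), which holds under the convention that $\theta$ is the angle in $[0,\pi]$ between the two unit vectors. If one instead used the $[0,\pi/2]$ angular distance of the paper, one would simply replace $\cos(\theta)$ by $|\ab^*\bb|$ and the same bound goes through with $|\cos(\theta)|\leq 1$. No concentration or probabilistic tools are needed — this is a deterministic one-line consequence of the explicit formula for the orthogonal projection.
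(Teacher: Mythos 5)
Your proof is correct and follows essentially the same route as the paper: bound $\ino{\bb'}\leq\ino{\bb}+|\ab^*\bb|\,\ino{\ab}\leq 2\rho$ and divide by $\tn{\bb'}=\sin(\theta)$. You simply spell out the norm computation that the paper states in one line.
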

\begin{proof} Clearly $\ino{\bb'}\leq \ino{\bb}+\ino{\ab^*\bb\ab}\leq 2\rho$. On the other hand since the angle between is $\theta$, $\tn{\bb'}=\sin(\theta)$.
\end{proof}


\begin{lemma} \label{angular perturb} Let $\x,\x',\y,\y'$ be unit length vectors satisfying $|\x^*\y-\x'^*\y'|\leq \alpha$. We have that $|\ang(\x,\y)-\ang(\x',\y')|\leq 5\sqrt{\alpha}$. We also have that for a unit vector $\x$ and a perturbation $\vb$, $\ang(\x,\x+\vb)\leq 5\tn{\vb}$.
\end{lemma}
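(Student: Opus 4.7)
The statement has two parts, and both reduce to standard calculus facts about $\arccos$ once one unpacks the definition $\ang(\ab,\bb)=\arccos(\ab^{*}\bb)/\pi$.

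\textbf{First part.} Let $\theta=\arccos(\x^{*}\y)$ and $\theta'=\arccos(\x'^{*}\y')$, so that $|\ang(\x,\y)-\ang(\x',\y')|=|\theta-\theta'|/\pi$ and $|\cos\theta-\cos\theta'|\le \alpha$. The plan is to invert the product-to-sum identity
\[
\cos\theta-\cos\theta'=-2\sin\!\Bigl(\tfrac{\theta+\theta'}{2}\Bigr)\sin\!\Bigl(\tfrac{\theta-\theta'}{2}\Bigr).
\]
WLOG $\theta'\ge \theta$, and set $\psi=(\theta+\theta')/2\in[0,\pi]$, $\eta=(\theta'-\theta)/2\in[0,\pi/2]$. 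If $\psi\le \pi/2$, apply Jordan's inequality $\sin x\ge 2x/\pi$ on $[0,\pi/2]$ to both factors; if $\psi>\pi/2$, replace $\psi$ by $\pi-\psi$ and use the same bound (this corresponds to looking at the complementary angles $\pi-\theta,\pi-\theta'$, under which the value of $|\cos\theta-\cos\theta'|$ is unchanged). In either case one gets $\alpha\ge |\cos\theta-\cos\theta'|\ge (8/\pi^{2})\,\eta\cdot\max\{\psi,\pi-\psi\}\ge (8/\pi^{2})\,\eta^{2}$, since $\max\{\psi,\pi-\psi\}\ge \eta$. Solving yields $|\theta-\theta'|=2\eta\le \pi\sqrt{\alpha/2}$, so $|\ang(\x,\y)-\ang(\x',\y')|\le \sqrt{\alpha/2}\le 5\sqrt{\alpha}$.

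\textbf{Second part.} If $\tn{\vb}\ge 1/5$ the claim is trivial because $\ang(\cdot,\cdot)\in[0,1]$, so assume $\tn{\vb}<1/5$. Let $s=\tn{\x+\vb}\in[1-\tn{\vb},1+\tn{\vb}]$ and $\y=(\x+\vb)/s$; then $\ang(\x,\x+\vb)=\ang(\x,\y)$. Writing $\x-\y=((s-1)\x-\vb)/s$ and using $|s-1|\le\tn{\vb}$ gives $\tn{\x-\y}\le 2\tn{\vb}/s \le 2\tn{\vb}/(1-\tn{\vb})\le 5\tn{\vb}/2$. Let $\phi\in[0,\pi]$ be the unnormalized angle, so $\tn{\x-\y}=2\sin(\phi/2)$, whence $\sin(\phi/2)\le 5\tn{\vb}/4\le 1/4$. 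Finally Jordan's inequality in the form $x\le (\pi/2)\sin x$ on $[0,\pi/2]$ gives $\phi\le \pi\sin(\phi/2)\le (5\pi/4)\tn{\vb}$, and dividing by $\pi$ yields $\ang(\x,\x+\vb)\le (5/4)\tn{\vb}\le 5\tn{\vb}$.

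\textbf{Where the work sits.} Neither part is subtle; the only point requiring attention is the non-Lipschitz behaviour of $\arccos$ near $\pm 1$, which is precisely what forces the $\sqrt{\alpha}$ (rather than linear) dependence and which is handled cleanly by the product-to-sum identity together with Jordan's inequality. The generous constant $5$ quoted in the lemma leaves ample slack, so no sharpness tracking is needed.
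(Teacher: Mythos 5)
Your proof is correct and follows essentially the same route as the paper's: both parts are elementary trigonometric estimates, with your product-to-sum identity plus Jordan's inequality playing the role of the paper's integral representation $\cos\theta-\cos\theta'=\int_\theta^{\theta'}\sin x\,dx$, and your chord-length computation replacing the paper's projection-onto-$\x'$ argument. One small slip: the lower bound should read $(8/\pi^2)\,\eta\cdot\min\{\psi,\pi-\psi\}$ rather than $\max$ (Jordan's inequality gives $\sin\psi\geq(2/\pi)\min\{\psi,\pi-\psi\}$ for $\psi\in[0,\pi]$), but since $\min\{\psi,\pi-\psi\}\geq\eta$ also holds (using $\theta\geq 0$ and $\theta'\leq\pi$), the conclusion $(8/\pi^2)\eta^2\leq\alpha$ and hence the claimed bound stand.
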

\begin{proof} Without losing generality, let $\theta=\ang(\x,\y)$ and $\theta'=\ang(\x',\y)$ where $0\leq \theta\leq \theta'<\pi$. We are given that
\beq
\cos(\theta)-\cos(\theta')=\int_{\theta}^{\theta'}\sin(x)dx\leq \alpha.\nn
\eeq
Using the fact that $\sin(x)$ is increasing over $[0,\pi/2]$ and decreasing over $[\pi/2,\pi]$, we have that
\beq
2\int_0^{(\theta'-\theta)/2}\sin(x)dx\leq\int_{\theta}^{\theta'}\sin(x)dx.\nn
\eeq
If $\theta'-\theta<\pi/2$, using the fact that $\sin(x)/x$ is decreasing over $[0,\pi/2]$
\beq
2\int_0^{(\theta'-\theta)/2}\sin(x)dx\geq (2\sqrt{2}\pi^{-1})^2(\theta'-\theta)^2\geq 0.9^2(\theta'-\theta)^2.\nn
\eeq
This implies $0.9^2(\theta'-\theta)^2\leq \alpha$. Otherwise, $\alpha\geq2\int_0^{(\theta'-\theta)/2}\sin(x)dx\geq 2\int_0^{\pi/4}\sin(x)dx\geq 0.5\implies \sqrt{\alpha}\geq 0.7$. On the other hand $\theta'-\theta<\pi$ which implies $\theta'-\theta<(\pi/0.7)\sqrt{\alpha}$.
Consequently $ \theta'-\theta\leq \max\{\pi/0.7,1.2\}\sqrt{\alpha}$.

Suppose $\x$ is a unit length vector and $\x''$ be the projection of $\x$ on $\x'$. Clearly $\tn{\x'-\x}\geq \tn{\x''-\x}$ and $\ang(\x'',\x)=\ang(\x',\x)$. $\tn{\x''-\x}$ has a simple form namely it is equal to $\sin(\theta)$. Now if $\theta<\pi/4$, $\sin(\theta)>2\sqrt{2}\pi^{-1}\theta$ so that $\ang(\x',\x)\leq 2\sqrt{2}\pi^{-1}\tn{\x'-\x}$.
If $\theta>\pi/4$, $\tn{\x'-\x}\geq \sqrt{1/2}$ and $\theta\leq \pi$ which implies $\theta\leq \sqrt{2}\pi\tn{\x-\x'}$.
%
\end{proof}

\section{Results on random matrices}
\begin{lemma} \label{gauss tail}Let $\g\in\R^n$ be a standard Gaussian vector and $E$ be an event over $\g$ that holds with probability $p$. We have that
\beq
\E[\tn{\g}^2\big|E]\Pro(E)\leq  (9n+2\log p^{-1})p.\nn
\eeq
Setting $p=n^{-3}$ yields right hand side is at most $15n^{-2}$.
\end{lemma}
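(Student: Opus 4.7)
The plan is to exploit that the event $E$ has total mass only $p$ while $\tn{\g}^2$ concentrates sharply around $n$: the bulk of the expectation $\E[\tn{\g}^2 \mathbf{1}_E]$ should come from the regime $\tn{\g}^2 \lesssim n + \log p^{-1}$, contributing essentially $M p$ for a threshold $M$ of this size, while the tail $\{\tn{\g}^2 > M\}$ contributes an amount controlled purely by the Gaussian tail of $\tn{\g}^2$ and independent of $E$.

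First I would use the layer-cake identity
\[
\E[\tn{\g}^2 \mathbf{1}_E] \;=\; \int_0^\infty \Pro\bigl(E \cap \{\tn{\g}^2 > t\}\bigr)\, dt \;\leq\; \int_0^\infty \min\bigl(p,\, \Pro(\tn{\g}^2 > t)\bigr)\, dt,
\]
which cleanly separates the crude bound $\Pro(E) = p$ from the Gaussian tail. Next I would invoke Lemma \ref{lip gauss} applied to the $1$-Lipschitz map $\g \mapsto \tn{\g}$, together with $\E\, \tn{\g} \leq \sqrt{n}$, to obtain $\Pro(\tn{\g} \geq \sqrt{n} + s) \leq e^{-s^2/2}$ and hence $\Pro(\tn{\g}^2 \geq (\sqrt{n}+s)^2) \leq e^{-s^2/2}$.

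I would then take the cutoff $M = (\sqrt{n} + s_0)^2$ with $s_0 = \sqrt{2\log p^{-1}}$, so that $\Pro(\tn{\g}^2 > M) \leq p$ and $M \leq 2n + 4\log p^{-1}$ by AM--GM. Below $M$ the integrand in the layer-cake bound is at most $p$, contributing at most $Mp$. Above $M$ I would substitute $t = (\sqrt{n}+s)^2$, $dt = 2(\sqrt{n}+s)\, ds$, to get
\[
\int_M^\infty \Pro(\tn{\g}^2 > t)\, dt \;\leq\; \int_{s_0}^\infty 2(\sqrt{n}+s)\, e^{-s^2/2}\, ds,
\]
which I would bound by standard Gaussian tail estimates (using $\int_{s_0}^\infty s e^{-s^2/2}\, ds = e^{-s_0^2/2} = p$ and $\int_{s_0}^\infty e^{-s^2/2}\, ds \leq e^{-s_0^2/2}/s_0$) to produce an $O(\sqrt{n}+1)\cdot p$ tail contribution.

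Combining the two contributions gives $\E[\tn{\g}^2 \mathbf{1}_E] \leq (C_1 n + C_2 \log p^{-1})\, p$; using $\sqrt{n} \leq n$ and absorbing lower-order terms into the leading $9n$ yields the advertised bound $(9n + 2\log p^{-1})p$. The only delicate point is bookkeeping the constants to land exactly on coefficient $2$ in front of $\log p^{-1}$ rather than $4$; this may require invoking the slightly sharper Laurent--Massart estimate $\Pro(\tn{\g}^2 \geq n + 2\sqrt{nu} + 2u) \leq e^{-u}$ in place of the Lipschitz bound, which is a routine substitution. Structurally the argument is standard, and no probabilistic input beyond concentration of $\tn{\g}$ and the tail integration of its distribution is needed.
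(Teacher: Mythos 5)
Your argument is correct and is essentially the paper's own proof in different clothing: the layer-cake bound $\int_0^\infty \min\bigl(p,\Pro(\tn{\g}^2>t)\bigr)\,dt$ is exactly the paper's rearrangement step (replacing $E$ by the upper-tail event of $\tn{\g}$ with the same mass $p$), and your substitution $t=(\sqrt{n}+s)^2$ combined with Lipschitz concentration of $\tn{\g}$ is the same tail computation the paper carries out via integration by parts against the distribution of $\tn{\g}$. The constant bookkeeping you flag is immaterial: only a bound of order $(n+\log p^{-1})p$ is ever used downstream, at $p=n^{-3}$ your constants still give the claimed $15n^{-2}$, and the paper's own proof tracks these constants more loosely than you do.
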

\begin{proof} Let $r>0$ be the number for which $\Pro(\tn{\g}\geq r)=p$ and $L$ be the associated event. Then $\Pro(E\cap \bar{L})=\Pro(\bar{E}\cap L)$ and
\beq
\E[\tn{\g}^2\big|E \cap \bar{L}]\leq \E[\tn{\g}^2\big|\bar{E}\cap L].\nn
\eeq
This implies that
\beq
\E[\tn{\g}^2\big|E]p\leq \E[\tn{\g}^2\big|L]p=\int_{\tn{\g}>r} \tn{\g}^2 d\g. \nn
\eeq
Let $a=\tn{\g}$ and $p(t)$ be the density function of $a$ and $Q(t)=\Pro(a>t)$. Using Lipschitzness of $\ell_2$ norm, we have that for $t>\sqrt{n}$, $Q(t)\leq \exp(-(t-\sqrt{n})^2)$.
\begin{align}
\E[\tn{\g}^2\big|L]p&=\int_{a>r} a^2p(a) da = -\int_{a>r} a^2dQ(a)=\int_{a>r} Q(a)da^2-[Q(a)a^2]_r^\infty \nn
&=\int_{a>r} Q(a)da^2+Q(r)r^2.\nn
\end{align}
We also have that
\beq
\int_{a>r} p(a) da=Q(r)\leq \exp(-(r-\sqrt{n})^2).\nn
\eeq
which implies $p\leq \exp(-(r-\sqrt{n})^2)\implies r\leq \sqrt{\log p^{-1}}+\sqrt{n}$. 
Construct an alternative distribution where $p'(a)=p(a)$ for $a\leq r$, $p'(a)=0$ for $r<a\leq \sqrt{\log p^{-1}}+\sqrt{n}$ and $Q'(a)=\exp(-(a-\sqrt{n})^2)$ for $a>\sqrt{\log p^{-1}}+\sqrt{n}$. This choice ensures that $Q'(a)>Q(a)$ for all $a\geq 0$ hence
\beq
\int_{a>r} a^2p'(a)da=\int_{a>r} Q'(a)da^2+Q'(r)r^2\geq \int_{a>r} Q(a)da^2+Q(r)r^2.\nn
\eeq
Consequently, we will use $Q'$ to upper bound the Gaussial tail. We have that
\beq
\int_{a>r} a^2p'(a)da=\int_{a>\sqrt{\log p^{-1}}+\sqrt{n}} a^2p'(a)da.\nn
\eeq
Finally, we need to estimate the right hand side. For $r'=\sqrt{n}+c$ and $c\geq 1$, we have that
\beq
\int_{a>r'} 2aQ'(a)da=\int_{u>c}2(\sqrt{n}+u)\exp(-u^2/2)du=7\sqrt{n}\exp(-c^2/2).\nn
\eeq
We also have the estimate $Q(r')r'^2\leq 2(n+c^2)\exp(-c^2/2)$. Setting $c=\sqrt{\log p^{-1}}$ and $p=n^{-3}$ we find that
\beq
\E[\tn{\g}^2\big|L]p\leq (9n+2\log p^{-1})p\leq 15n^{-2}.\nn
\eeq
\end{proof}

\begin{lemma} [Infinity norm of random modulation]\label{rand modulation} Let $\{\vb_i\}_{i=1}^N$ be a finite set of points. Let $\bb\in\R^n$ be a vector with independent Rademacher entries and let $\Ub\in\R^n$ be the unitary Hadamard matrix where entries are $\pm \sqrt{1/n}$. Let $\w_i=\Ub\diag{\bb}\vb_i$. With probability $1-\exp(-c_0\log N)$, for all $1\leq i\leq N$, we have that
\beq
\sup_{1\leq i\leq N}\ino{\w_i}\leq \frac{\sqrt{\log n}+\sqrt{\log N}}{\sqrt{n}}.\nn
\eeq
\end{lemma}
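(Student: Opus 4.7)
The plan is a standard Hoeffding-plus-union-bound argument that exploits the Rademacher structure of $\bb$ together with the flatness of the Hadamard entries.

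First I would fix $i \in \{1,\dots,N\}$ and $j \in \{1,\dots,n\}$. Since $\Ub$ is deterministic, the $j$-th coordinate of $\w_i$ becomes the weighted Rademacher sum
\[
(\w_i)_j \;=\; \sum_{\ell=1}^n U_{j\ell}\,(\vb_i)_\ell \, b_\ell.
\]
The crucial observation is that the coefficient vector $\{U_{j\ell}(\vb_i)_\ell\}_\ell$ carries very little $\ell_2$ mass: because $|U_{j\ell}| = 1/\sqrt{n}$ uniformly across the Hadamard matrix and $\tn{\vb_i} = 1$, we have
\[
\sum_{\ell=1}^n \bigl(U_{j\ell}(\vb_i)_\ell\bigr)^2 \;=\; \frac{1}{n}\sum_{\ell=1}^n (\vb_i)_\ell^2 \;=\; \frac{1}{n}.
\]

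Next I would apply Hoeffding's inequality to this Rademacher sum to obtain, for every $t>0$,
\[
\Pro\bigl(|(\w_i)_j| > t\bigr) \;\leq\; 2\exp(-nt^2/2),
\]
and then choose $t = C(\sqrt{\log n} + \sqrt{\log N})/\sqrt{n}$ for a sufficiently large absolute constant $C$. Using the elementary inequality $(\sqrt{a}+\sqrt{b})^2 \geq a+b$, this gives $nt^2/2 \geq (C^2/2)(\log n + \log N)$. A union bound over the $nN$ pairs $(i,j)$ then yields
\[
\Pro\Bigl(\sup_{i,j}|(\w_i)_j| > t\Bigr) \;\leq\; 2nN\exp\!\Bigl(-\tfrac{C^2}{2}(\log n + \log N)\Bigr) \;\leq\; \exp(-c_0\log N),
\]
provided $C$ is taken large enough that $C^2/2 \geq 3$. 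Since $\ino{\w_i} = \max_j |(\w_i)_j|$, this is exactly the advertised tail bound.

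The hard part will be essentially nonexistent: this is a textbook Hoeffding estimate combined with a union bound, and no structural feature of $\{\vb_i\}$ beyond $\tn{\vb_i} = 1$ is used. The only mild subtlety is that the stated bound $(\sqrt{\log n} + \sqrt{\log N})/\sqrt{n}$ carries no explicit constant; the calculation above delivers the bound up to a universal multiplicative factor $C$, which should either be read as implicit in the displayed inequality or absorbed into the exponent of the tail probability by weakening $c_0$.
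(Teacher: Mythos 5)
Your proof is correct and follows essentially the same route as the paper's: both rewrite $(\w_i)_j$ as a Rademacher sum $\sum_\ell U_{j\ell}(\vb_i)_\ell b_\ell$ with coefficient $\ell_2$ mass $1/n$, apply a Hoeffding/subgaussian tail bound, and union-bound over the $nN$ coordinate--point pairs. Your closing remark about the implicit absolute constant in the stated bound matches how the paper itself handles it (via the constant $c'$ in the choice of $t$).
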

\begin{proof} Observe that $\w_i=\Ub\diag{\vb_i}\bb$ hence each entry of $\w_i$ is a weighted linear combination of subgaussians where the weights are $\Ub_{jk}\vb_{ki}=\pm \vb_{ki}/\sqrt{n}$. In particular $\sum_{k=1}^n|\Ub_{jk}\vb_{ki}|^2=1/n$ hence $\w_{ij}$ has $\order{1/\sqrt{n}}$ subgaussian norm. Consequently for any $1\leq j\leq n$
\beq
\Pro(|\w_{ij}|\geq t)\leq \exp(-cnt^2/2).\nn
\eeq
Pick $t=c'(\frac{\log n+\log N}{{n}})^{1/2}$ and apply a union bound over all $1\leq j\leq n$ and all $1\leq i\leq N$ to conclude.
\end{proof}

\begin{lemma} [Embedding most vectors]\label{most modulation} Let $\{\vb_i\}_{i=1}^N$ be a finite set of points. Let $\bb\in\R^n$ be a vector with independent Rademacher entries and let $\Ub\in\R^n$ be the unitary Hadamard matrix where entries are $\pm \sqrt{1/n}$. Let $\w_i=\Ub\diag{\bb}\vb_i$. With probability $1-p$, for at least $(1-cp^{-1}n^{-3})N$ points $\w_i$ ($1\leq i\leq N$), we have that
\beq
\sup_{1\leq i\leq N}\ino{\w_i}\leq C\frac{\sqrt{\log n}}{\sqrt{n}}.\nn
\eeq
\end{lemma}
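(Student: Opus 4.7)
The plan is to reuse the per-point subgaussian concentration already used in the proof of Lemma \ref{rand modulation} and then replace the union bound over all $N$ points by a Markov inequality on the random count of ``bad'' points. This trades the $\sqrt{\log N}$ gain inside the infinity-norm threshold for a controlled exceptional fraction $c p^{-1} n^{-3}$ of points, which is exactly the weaker conclusion the statement asks for.

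First I would fix $i$ and $j$ and observe that $\w_{ij}=\sum_k \Ub_{jk}\vb_{ki}b_k$ is a Rademacher sum with squared weights summing to $\sum_k \Ub_{jk}^{2}\vb_{ki}^{2}=\tn{\vb_i}^{2}/n=1/n$. A standard Hoeffding bound gives $\Pro(|\w_{ij}|\ge t)\le 2\exp(-nt^{2}/2)$. Choosing $t=C\sqrt{\log n/n}$ with $C$ a sufficiently large absolute constant and union-bounding over $1\le j\le n$ yields the per-point estimate
\[
\Pro\!\left(\ino{\w_i}>C\sqrt{\log n/n}\right)\le 2n\cdot n^{-C^{2}/2}\le n^{-4},
\]
provided $C^{2}\ge 10$.

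Next I would let $B\subseteq\{1,\dots,N\}$ be the (random) set of indices $i$ for which $\ino{\w_i}>C\sqrt{\log n/n}$. Linearity of expectation gives $\E[|B|]\le N n^{-4}$, and Markov's inequality then yields
\[
\Pro\!\left(|B|\ge c\,p^{-1}n^{-3}N\right)\le\frac{\E[|B|]}{c\,p^{-1}n^{-3}N}\le\frac{p}{cn}\le p
\]
for any $c\ge 1$. On the complementary event, which has probability at least $1-p$, the set $\{1,\dots,N\}\setminus B$ has cardinality at least $(1-c\,p^{-1}n^{-3})N$ and every $i$ in it satisfies $\ino{\w_i}\le C\sqrt{\log n/n}$, which is the advertised conclusion.

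The hard part is essentially absent here: the single-point subgaussian bound is literally the one used in Lemma \ref{rand modulation}, and the Markov step is a one-liner. The only delicate calibration is making the constant $C$ large enough that the per-point failure probability decays strictly faster than $n^{-3}$, so that after dividing by the Markov denominator $c\,p^{-1}n^{-3}$ a factor of at least $p$ (in fact $p/n$) of slack survives; any $C^{2}\ge 10$ works. No conditioning, net argument, or chaining is required, which is why this lemma can be stated and proved independently of the delicate circulant and Hanson--Wright machinery of Section 4.
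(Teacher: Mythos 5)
Your proof is correct and follows essentially the same route as the paper's: a per-entry Hoeffding/subgaussian bound using $\sum_k|\Ub_{jk}\vb_{ki}|^2=1/n$, a union bound over the $n$ coordinates to control $\ino{\w_i}$ per point, and then Markov's inequality on the expected number of bad points. The only (harmless) difference is that you calibrate the threshold so the per-point failure probability is $n^{-4}$ rather than $n^{-3}$, which leaves you an extra factor of $n^{-1}$ of slack in the final Markov step.
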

\begin{proof} Observe that $\w_i=\Ub\diag{\vb_i}\bb$ hence each entry of $\w_i$ is a weighted linear combination of subgaussians where the weights are $\Ub_{jk}\vb_{ki}=\pm \vb_{ki}/\sqrt{n}$. In particular $\sum_{k=1}^n|\Ub_{jk}\vb_{ki}|^2=1/n$ hence $\w_{ij}$ has $\order{1/\sqrt{n}}$ subgaussian norm. Consequently for any $1\leq j\leq n$
\beq
\Pro(|\w_{ij}|\geq t)\leq \exp(-cnt^2).\nn
\eeq
Pick $t=c'(\frac{\log (n)}{{n}})^{1/2}$ to ensure that $\Pro(|\w_{ij}|\geq t)\leq c''n^{-4}$. Applying a union bound over the entries, this ensures $\Pro(\ino{\w_{i}}\geq t)\leq c''n^{-3}$. Let $N_s$ be the number of $\w_i$ obeying the bound $\ino{\w_i}\leq t$. We have that
\[
\E[N_s]\geq 1-c''n^{-3}.
\] 
Hence $N-N_s$ is a nonnegative random variable obeying $\E[N-N_s]\leq c''n^{-3}$. Applying Markov's inequality $\Pro(N-N_s>p^{-1}c''n^{-3})\leq p$.
\end{proof}

\begin{lemma} \label{lemma simple but useful}Let $\A$ be a random matrix with unit length columns. Suppose $\E\|\A^*\A-\Iden\|\leq \alpha$. Let $\B=\A\diag{\alpha}$ where $\alpha$ is a diagonal matrix whose entries lie between $\sqrt{1\pm \eps}$ and $\alpha$ is allowed to depend on $\A$. We have that
\beq
\E\|\B^*\B-\Iden\|\leq2\alpha+\eps.\nn
\eeq
\end{lemma}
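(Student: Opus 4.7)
The plan is to reduce the problem to a triangle-inequality decomposition that separates the effect of the diagonal rescaling from the already-controlled quantity $\|\A^*\A-\Iden\|$. Writing $\Db=\diag{\alpha}$ (to avoid overloading the symbol $\alpha$ with the expectation bound), I would observe the algebraic identity
\beq
\B^*\B-\Iden=\Db\A^*\A\Db-\Iden=\Db(\A^*\A-\Iden)\Db+(\Db^2-\Iden).\nn
\eeq
This is the key step: it isolates the ``diagonal mismatch'' $\Db^2-\Iden$, which is a deterministic-looking diagonal perturbation of size $\eps$, from the ``random fluctuation'' $\A^*\A-\Iden$, which is already controlled in expectation by the hypothesis.

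Next I would apply the triangle inequality in operator norm together with submultiplicativity to bound
\beq
\|\B^*\B-\Iden\|\leq \|\Db\|^2\|\A^*\A-\Iden\|+\|\Db^2-\Iden\|.\nn
\eeq
Since the entries of $\Db$ lie in $[\sqrt{1-\eps},\sqrt{1+\eps}]$, we have $\|\Db\|^2\leq 1+\eps$. Since $\Db^2-\Iden$ is diagonal with entries in $[-\eps,\eps]$, we have $\|\Db^2-\Iden\|\leq \eps$. Taking expectations and using the hypothesis $\E\|\A^*\A-\Iden\|\leq\alpha$ gives
\beq
\E\|\B^*\B-\Iden\|\leq (1+\eps)\alpha+\eps\leq 2\alpha+\eps,\nn
\eeq
where the last inequality is simply the observation that $\eps\leq 1$ in the relevant regime (the statement is vacuous for $\eps\geq 1$ since operator norms of $\B^*\B-\Iden$ are then trivially dominated).

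There is essentially no obstacle here: the lemma is a routine deterministic bound plus linearity of expectation. The only small subtlety worth flagging is that $\Db$ is permitted to depend on $\A$, but this dependence is harmless because the inequality $\|\Db(\A^*\A-\Iden)\Db\|\leq\|\Db\|^2\|\A^*\A-\Iden\|$ holds pathwise, and the worst-case bound $\|\Db\|^2\leq 1+\eps$ is deterministic. Thus the random coupling between $\Db$ and $\A$ does not enter the final estimate.
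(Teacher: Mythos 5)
Your proof is correct, and it takes a mildly different route from the paper's. The paper bounds the extreme singular values of $\B$ multiplicatively: with $\phi=\|\A^*\A-\Iden\|$ it writes $\|\B\|^2\leq(1+\phi)(1+\eps)$ and $\sigma_{\min}(\B)^2\geq(1-\phi)(1-\eps)$, then reads off $\|\B^*\B-\Iden\|\leq\phi+\eps+\phi\eps$. You instead use the exact additive identity $\B^*\B-\Iden=\Db(\A^*\A-\Iden)\Db+(\Db^2-\Iden)$ plus the triangle inequality and submultiplicativity, arriving at $(1+\eps)\phi+\eps$ --- numerically the same intermediate bound. Both arguments are pathwise, so the permitted dependence of the diagonal matrix on $\A$ is harmless in either version, as you correctly flag. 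Your decomposition has the small advantage of making the two error sources (random fluctuation of $\A^*\A$ versus deterministic diagonal mismatch) explicit and would extend to non-diagonal perturbations; the paper's is marginally more direct. One nitpick: both proofs need $\eps\leq 1$ to absorb the cross term into $2\alpha$, and your justification that the claim is ``vacuous'' for $\eps\geq 1$ is not quite the right reason --- the cleaner observation is that for $\eps>1$ the hypothesis that the diagonal entries lie in $[\sqrt{1-\eps},\sqrt{1+\eps}]$ is not even meaningful over the reals, so the lemma implicitly assumes $\eps\leq 1$ anyway (the paper leaves this equally implicit). This does not affect the validity of your argument.
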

\begin{proof} Let $\phi=\|\A^*\A-\Iden\|$. We have that $\|\B\|^2\leq (1+\phi)(1+\eps)$ and $\sigma_{\min}(\B)^2\geq (1-\phi)(1-\eps)$. Consequently
\beq
\E[\|\B\|^2-1]\leq \E[\phi]+\eps+\E[\phi\eps],~1-\sigma_{\min}(\B)^2\leq \E[\phi]+\eps-\E[\phi\eps].\nn
\eeq
\end{proof}
\section{\large{Generalizations of Tropp's ``Incoherent Subdictionary Theorem''}}
\begin{definition} $\Phi\in\R^{m\times n}$ is called a dictionary with coherence $\mu$ if columns of $\Phi$ have unit length and coherence is defined as $\mu=\sup_{i\neq j}|\phi_i^*\phi_j|$ where $\phi_i$ is the $i$th column of $\Phi$.
\end{definition}
\begin{definition}[Restriction] $\Rb\in\R^{m\times m_1}$ is called a restriction operator if $\A\Rb\in\R^{n\times m_1}$ is a matrix obtained by selecting $m_1$ columns of $\A$ for any $\A\in\R^{n\times m}$ and any $n\geq 1$. If $\Rb$ select $m_1$ columns uniformly at random, we shall call it random restriction. A random subdictionary of $\A$ is obtained by applying the restriction $\Rb$ to get $\A\Rb$.
\end{definition}
Define $\|\cdot\|_{1,2}$ norm of a matrix to be the largest $\ell_2$ norm of its columns. The next result will be beneficial for the derivation.
\begin{theorem}[Theorem $8$ of \cite{tropp2008conditioning}]\label{single restrict} Let $\A$ be a matrix with $N$ columns and let $\Rb$ be a restriction to $m$ coordinates chosen uniformly at random. Fix $q\geq 1$. For any $p\geq \max\{2,2\log (\text{rank}\A\Rb^*),q/2\}$ we have that
\beq
(\E\|\A\Rb^*\|^q)^{1/q}\leq 3\sqrt{p}\|\A\|_{1,2}+\sqrt{m/N}\|\A\|.\nn
\eeq
Observe that $(a+b)^q\leq (2\max(a+b))^q\leq (2a)^q+(2b)^q$ hence
\beq
\E\|\A\Rb^*\|^q\leq (6\sqrt{p}\|\A\|_{1,2})^q+(2\sqrt{m/N}\|\A\|)^q.\nn
\eeq
\end{theorem}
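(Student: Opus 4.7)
The plan is to combine symmetrization with the noncommutative Khintchine inequality in Schatten $S_{2p}$-norms. Write $\|\A\Rb^*\|^2 = \big\|\sum_i \eta_i \phi_i \phi_i^*\big\|$, where $\eta_i\in\{0,1\}$ indicates selection of column $i$ by $\Rb$ and $\phi_i$ is the $i$-th column of $\A$. Decompose $\eta_i=(\eta_i - m/N)+m/N$; the deterministic mean part contributes $(m/N)\|\A\|^2$, which after taking square roots produces the $\sqrt{m/N}\,\|\A\|$ summand in the final bound. To handle the fluctuation cleanly, I would pass from uniform sampling without replacement to independent Bernoulli indicators $\delta_i$ of mean $m/N$ via a coupling/negative-association argument; since negative association preserves one-sided exponential and moment tail bounds, this step costs only a universal constant.

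Next, I apply symmetrization: introducing Rademacher signs $\varepsilon_i$ independent of the $\delta_i$, for integer moment $p\geq 1$,
\[
\E\Big\|\sum_i (\delta_i-m/N)\phi_i\phi_i^*\Big\|^p \leq 2^p\,\E\Big\|\sum_i \varepsilon_i\delta_i\phi_i\phi_i^*\Big\|^p.
\]
Conditioning on the $\delta_i$ and applying the noncommutative Khintchine inequality in Schatten $S_{2p}$-norm,
\[
\Big(\E_\varepsilon\Big\|\sum_i \varepsilon_i\delta_i\phi_i\phi_i^*\Big\|_{S_{2p}}^{2p}\Big)^{1/(2p)} \lesssim \sqrt{p}\,\Big\|\sum_i \delta_i\|\phi_i\|^2\phi_i\phi_i^*\Big\|^{1/2}.
\]
Since $\sum_i \delta_i\phi_i\phi_i^*$ has rank at most $r:=\rank(\A\Rb^*)$, its Schatten $S_{2p}$-norm lies within a factor $r^{1/(2p)}$ of its operator norm; the hypothesis $p\geq 2\log r$ makes this factor a constant and allows me to substitute operator norm for Schatten norm throughout.

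Now I close a self-bounding inequality. The variance matrix satisfies
\[
\Big\|\sum_i \delta_i\|\phi_i\|^2\phi_i\phi_i^*\Big\| \leq \|\A\|_{1,2}^2\,\Big\|\sum_i \delta_i\phi_i\phi_i^*\Big\| = \|\A\|_{1,2}^2\,\|\A\Rb^*\|^2.
\]
Setting $X:=(\E\|\A\Rb^*\|^{2p})^{1/(2p)}$ and combining the preceding estimates yields a quadratic inequality of the form $X^2 \leq C\sqrt{p}\,\|\A\|_{1,2}\,X+(m/N)\|\A\|^2$. Solving via the quadratic formula, together with the elementary bound $\sqrt{a+b}\leq\sqrt{a}+\sqrt{b}$, produces $X\leq 3\sqrt{p}\,\|\A\|_{1,2}+\sqrt{m/N}\,\|\A\|$ for a suitable absolute constant $C$. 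The desired $q$-th moment estimate follows by monotonicity of $L^q$-norms whenever $q\leq 2p$, justifying the requirement $p\geq q/2$.

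The main obstacle will be the self-bounding step: the constants in the symmetrization, Khintchine, and Schatten-to-operator conversion must be tracked carefully so that the leading constant in the final bound can actually be brought down to $3$. A secondary technicality is the reduction from sampling without replacement to Bernoulli selection; I would justify this either by invoking negative association (which preserves the relevant moment inequalities up to universal factors) or by an explicit coupling to a Bernoulli model. Both routes are standard but non-trivial to execute cleanly while preserving sharp constants.
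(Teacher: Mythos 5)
This theorem is quoted verbatim from Tropp (Theorem $8$ of the cited reference): the paper gives no proof of the first display, and its only original content here is the elementary inequality $(a+b)^q\leq (2a)^q+(2b)^q$ that yields the second display. Your sketch therefore proves something the paper only imports, and it does so along essentially the lines of Tropp's original argument: pass to independent selectors, center and symmetrize, apply the noncommutative Khintchine inequality in $S_{2p}$ (Rudelson's lemma), use $p\geq 2\log\rank(\A\Rb^*)$ to trade the Schatten norm for the operator norm at constant cost, and close the self-bounding quadratic $X^2\leq C\sqrt{p}\,\|\A\|_{1,2}X+(m/N)\|\A\|^2$. The outline is sound; in particular $\sum_i\delta_i\|\phi_i\|^2\phi_i\phi_i^*\preceq \|\A\|_{1,2}^2\sum_i\delta_i\phi_i\phi_i^*$ is correct in the positive-semidefinite order, and solving the quadratic does produce a bound of the stated form. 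Two caveats. First, the reduction from a uniformly random $m$-subset to Bernoulli selectors should not be waved through via ``negative association,'' which does not directly give two-sided moment bounds for operator norms; the clean route is to use that $\|\A\Rb^*\|^q$ is monotone in the selected set and couple the $m$-subset inside a Bernoulli draw conditioned to have at least $m$ elements (or, as Tropp does, work with the subset model directly through a dedicated moment-comparison step). Second, recovering the explicit constant $3$ requires the sharp Khintchine constant of order $\sqrt{2p/e}$ and careful bookkeeping of the symmetrization and rank factors; for the purposes of this paper the constant is immaterial, as it is absorbed into unnamed absolute constants downstream. You do not address the second display, but that is the one-line consequence already recorded in the statement.
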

The following is our variation of Tropp's spectral norm bounds on incoherent subdictionaries.
\begin{theorem} \label{tropp var}Suppose $\Phi\in\R^{n\times 2n}$ is a random matrix such that all of its realizations are incoherent dictionaries with coherence $\mu$. Pick a random subdictionary $\X\in\R^{n\times 2k}$ of $\Phi$. Define the function $f(\Rb)=\E_{\Phi}\|\X^*\X-\Iden\|$. For $u\geq \sqrt{2\log n+1}$, we have that
\beq
\Pro_{\Rb}(f(\Rb)\geq c'(u\sqrt{m(\log m+1)}\mu+ \frac{m}{n}\|\Phi\|^2))\leq \exp(-u^2/4).\label{rand var}
\eeq
\end{theorem}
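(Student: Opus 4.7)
The plan is to mimic Tropp's original proof of the incoherent subdictionary theorem from \cite{tropp2008conditioning}, extended to handle the extra randomness carried by $\Phi$. Setting $H = \Phi^*\Phi - \Iden$, which has zero diagonal (the columns of $\Phi$ are unit-normalized) and off-diagonal entries bounded in magnitude by $\mu$, the identity $\X^*\X - \Iden = \Rb^T H \Rb$ reduces the problem to controlling the spectral norm of a random principal submatrix of $H$. A direct entrywise count yields $\|H\|_{1,2} \leq \mu\sqrt{2n}$, and trivially $\|H\| \leq \|\Phi\|^2 + 1 \leq 2\|\Phi\|^2$.

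The main technical obstacle is decoupling: because $H$ is hollow, a standard Bourgain--Tzafriri-style random-partition argument (split the restriction set $S$ into two halves by independent fair coin flips and use the fact that the diagonal contribution vanishes) yields, for every $q \geq 1$,
\[
 \E_\Rb \|\Rb^T H \Rb\|^q \;\leq\; C^q\, \E \|\Rb_1^T H \Rb_2\|^q,
\]
where $\Rb_1,\Rb_2$ are \emph{independent} uniform restrictions to $O(m)$ coordinates each. This step is the only non-routine part of the argument; once the row and column restrictions are independent, Theorem \ref{single restrict} applies iteratively.

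The remainder is a two-stage application of Theorem \ref{single restrict}. Conditioning on $\Rb_2$ and viewing $\|\Rb_1^T H\Rb_2\| = \|\Rb_1^T (H\Rb_2)\|$, I would apply Theorem \ref{single restrict} to $(H\Rb_2)^T$ with restriction $\Rb_1$; since every row of $H\Rb_2$ is an $m$-entry restriction of a row of $H$, its $\ell_2$ norm is at most $\mu\sqrt{m}$, producing the $3\sqrt{p}\,\mu\sqrt{m}$ term. A second application of Theorem \ref{single restrict} to $H\Rb_2$, using $\|H\|_{1,2} \leq \mu\sqrt{2n}$ and $\|H\| \leq 2\|\Phi\|^2$, controls the $\|H\Rb_2\|$ factor and yields the $\sqrt{m/n}\cdot 2\|\Phi\|^2$ term. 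Combining these by Minkowski and choosing $p \asymp \log m$ (to meet the hypothesis $p \geq 2\log\mathrm{rank}$ of Theorem \ref{single restrict}) gives
\[
 (\E_\Rb \|\X^*\X - \Iden\|^q)^{1/q} \;\lesssim\; \sqrt{q(\log m + 1)}\,\mu\sqrt{m} \;+\; \frac{m}{n}\|\Phi\|^2.
\]

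To finish, Jensen's inequality pulls $\E_\Phi$ inside the $q$-th power, so the identical moment bound holds for $f(\Rb) = \E_\Phi\|\X^*\X - \Iden\|$; then Markov's inequality with $q = u^2$ converts the moment estimate into the advertised tail bound $\exp(-u^2/4)$. The hypothesis $u \geq \sqrt{2\log n + 1}$ is precisely what is needed so that $q \geq 2\log n + 1$ can serve as the exponent $p$ in Theorem \ref{single restrict}. The only items requiring careful bookkeeping are the absolute constants propagated through decoupling and the double moment computation; these are routine once the plan above is in place.
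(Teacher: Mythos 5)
Your proposal follows essentially the same route as the paper's proof: reduce to the hollow matrix $\Hb=\Phi^*\Phi-\Iden$, decouple the two copies of the random restriction via the Bourgain--Tzafriri random-partition argument (the paper invokes Theorem 9 of \cite{tropp2008conditioning} for exactly this step), apply Theorem \ref{single restrict} twice to the decoupled restrictions, pull $\E_\Phi$ inside by Jensen, and convert moments to tails (the paper's appeal to Proposition 10 of \cite{tropp2008conditioning} is precisely your Markov argument with $q=u^2$). The one bookkeeping point worth fixing: Theorem \ref{single restrict} requires $p\geq q/2$ in addition to $p\geq 2\log\operatorname{rank}$, so the correct choice is $p\asymp q(\log m+1)$ rather than $p\asymp\log m$ --- which is what your final displayed moment bound already reflects, and is the choice the paper makes.
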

\begin{proof} The proof exactly follows the work by Tropp, namely Section $6$ of \cite{tropp2008conditioning}. We will only point out the main differences as almost all of the argument overlaps. Let $\Rb$ be the random restriction for which $\X=\Phi\Rb$. We first establish the following result.
\begin{theorem}\label{pth moment} For $q\geq 2\log n+1$, we have that
\beq
(\E_{\Rb}(\E_{\Phi}\|\X^*\X-\Iden\|)^q)^{1/q}\leq  (\E_{\Rb,\Phi}\|\X^*\X-\Iden\|^q)^{1/q}\leq c(\sqrt{qm(\log m+1)}\mu+ \frac{m}{n}\|\Phi\|^2).\nn
\eeq
\end{theorem}
\begin{proof} For the sake of completeness, we repeat most of the arguments in \cite{tropp2008conditioning}. First note that $\X^*\X-\Iden=\Rb\Hb\Rb^*$ where $\Hb=\Phi^*\Phi-\Iden$. A standard symmetrization argument (Theorem $9$ of \cite{tropp2008conditioning}) ensures that there exists a submatrix $\hat\Hb\in\R^{n/2\times n/2}$ (where columns and rows correspond to disjoint subsets) and restrictions $\Rb_1,\Rb_2$ such that
\beq
(\E_{\Rb}\|\X^*\X-\Iden\|^q)^{1/q}\leq 2(\max_{m_1+m_2=m}\E_{\Rb_1,\Rb_2}\|\Rb_1\hat\Hb\Rb_2\|^q)^{1/q}.\label{sym part}
\eeq
Exponentiating both sides, this implies
\beq
\E_{\Rb}\|\X^*\X-\Iden\|^q\leq 2^q\max_{m_1+m_2=m}\E_{\Rb_1,\Rb_2}\|\Rb_1\hat\Hb\Rb_2\|^q.\nn
\eeq
Hence, we shall upper bound the right-hand side. This will be done in two steps by first taking expectation over $\Rb_2$ and then $\Rb_1$.
\beq
\E_{\Rb_1,\Rb_2}\|\Rb_1\hat\Hb\Rb_2\|^q=\E_{\Rb_1}\E_{\Rb_2}\|\Rb_1\hat\Hb\Rb_2\|^q.\nn
\eeq
Applying Theorem \ref{single restrict} with $p=\max\{2,2\log(m/2)+1,q/2\}$ we have that
\beq
\E_{\Rb_2}\|\Rb_1\hat\Hb\Rb_2\|^q\leq (6\sqrt{p}\|\Rb_1\hat\Hb\|_{1,2})^q+(\sqrt{8m_2/n}\|\Rb_1\hat\Hb\|)^q.\nn
\eeq
Similar to Tropp, the coherence assumption ensures that $\|\Rb_1\hat\Hb\|_{1,2}\leq \mu\sqrt{m}$ to obtain
\beq
\E_{\Rb_2}\|\Rb_1\hat\Hb\Rb_2\|^q\leq (6\sqrt{p}\mu\sqrt{m})^q+(\sqrt{8m_2/n}\|\Rb_1\hat\Hb\|)^q.\nn
\eeq
The remaining task is to upper bound $\E\|\Rb_1\hat\Hb\|^q$. Reapplying Theorem \ref{single restrict}, we have that
\beq
\E\|\Rb_1\hat\Hb\|^q=\E\|\hat\Hb^*\Rb_1^*\|^q\leq (6\sqrt{p}\mu\sqrt{n})^q+(\sqrt{8m_1/n}\|\hat\Hb\|)^q.\nn
\eeq
The combination of the last two inequalities, yields that, for any $\Phi$ obeying the coherence and spectral norm bounds, we have
\begin{align}
\E_{\Rb_1,\Rb_2}\|\Rb_1\hat\Hb\Rb_2\|^q&\leq (6\sqrt{p}\mu\sqrt{m})^q+(\sqrt{8m_2/n}6\sqrt{p}\mu\sqrt{n})^q+(\sqrt{8m_2/n}\sqrt{8m_1/n}\|\hat\Hb\|)^q\nn\\
&\leq (c_1\sqrt{pm}\mu)^q+(c_2 \frac{m}{n}\|\hat\Hb\|)^q.\nn
\end{align}
Since this holds for all realizations of $\Phi$, we can take an additional expectation over $\Phi$ to conclude
\beq
\E_{\Rb,\Phi}\|\X^*\X-\Iden\|^q\leq (c_1\sqrt{pm}\mu)^q+(c_2 \frac{m}{n}\|\hat\Hb\|)^q\implies (\E_{\Rb,\Phi}\|\X^*\X-\Iden\|^q)^{1/q}\leq c(\sqrt{pm}\mu+ \frac{m}{n}\|\hat\Hb\|).\nn
\eeq
For $q\geq 1$, this also implies that
\beq
(\E_{\Rb}(\E_{\Phi}\|\X^*\X-\Iden\|)^q)^{1/q}\leq  (\E_{\Rb,\Phi}\|\X^*\X-\Iden\|^q)^{1/q}\leq c(\sqrt{pm}\mu+ \frac{m}{n}\|\hat\Hb\|).\nn
\eeq
Picking $p=q(\log m/2+1)$, and using the estimate $\|\hat\Hb\|\leq \|\Hb\|\leq \|\Phi^*\Phi\|+1\leq 2\|\Phi\|^2$ we obtain that
\beq
(\E_{\Rb}(\E_{\Phi}\|\X^*\X-\Iden\|)^q)^{1/q}\leq  (\E_{\Rb,\Phi}\|\X^*\X-\Iden\|^q)^{1/q}\leq c(\sqrt{qm(\log m+1)}\mu+ \frac{m}{n}\|\Phi\|^2).\nn
\eeq
\end{proof}
Now letting $f(\Rb)=\E_{\Phi}\|\X^*\X-\Iden\|$ and applying Proposition $10$ of \cite{tropp2008conditioning}, for $u\geq 1$,  we obtain that
\beq
\Pro(f(\Rb)\geq c'(u\sqrt{m(\log m+1)}\mu+ \frac{m}{n}\|\Phi\|^2))\leq \exp(-u^2/4)\nn
\eeq
which is the desired concentration bound.
\end{proof}

\subsection{Asymmetric version of Tropp's incoherent subdictionary result}
We first prove the following variation of Theorem $25$ of \cite{tropp2008conditioning}. This result assumes the matrix to have even dimensions but the odd case can be shown with minimal modification of the proof strategy. The proof exactly follows the argument of Tropp however we will provide it here for the sake of completeness. We remark that Tropp's result was based on more classical results due to Bourgain and Tzafriri \cite{bourgain1987invertibility,ledoux2013probability}.
\begin{theorem} \label{thm asym}Let $\A$ be a $2n\times 2n$ matrix with a $0$ diagonal. Let $\Rb$ be a restriction to $m$ random coordinates. Fix $q\geq 1$. There exists a partition of the coordinates $\{1,2,\dots,2n\}$ into two blocks $T_1$ and $T_2$ with $N$ elements each so that
\[
\E\|\Rb\A\Rb^*\|^q\leq 2^q\max_{m_1+m_2=m}\E\|\Rb_1\A_{T_1\times T_2}\Rb_2\|^q.
\]
\end{theorem}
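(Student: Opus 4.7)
The strategy is a symmetrization/decoupling via a uniformly random equal-size partition, followed by a pigeonhole step to produce the existential partition. This is the standard Bourgain--Tzafriri paving trick, adapted to the asymmetric setting.

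First I would introduce a uniformly random partition $\eta=(T_1,T_2)$ of $\{1,\dots,2n\}$ into blocks of size $n$, and set $\D_{T_k}$ to be the diagonal indicator of $T_k$. A short combinatorial computation shows $\Pr_\eta[i\in T_1,\,j\in T_2]=n/(2(2n-1))$ for each pair $i\neq j$, which combined with the zero-diagonal hypothesis on $\A$ yields the decoupling identity
\[
\A=\frac{2n-1}{n}\,\E_\eta\bigl[\D_{T_1}\A\D_{T_2}+\D_{T_2}\A\D_{T_1}\bigr].
\]
Applying the sandwich $\Rb(\cdot)\Rb^*$ on both sides, pushing the expectation outside $\|\cdot\|^q$ by convexity (for $q\ge 1$), and using $(2n-1)/n\le 2$, one gets pointwise in $\Rb$ the bound
\[
\|\Rb\A\Rb^*\|^q\le 2^q\,\E_\eta\bigl\|\Rb(\D_{T_1}\A\D_{T_2}+\D_{T_2}\A\D_{T_1})\Rb^*\bigr\|^q.
\]

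Next I would exploit the block structure of the matrix inside the norm. After permuting so that coordinates in $T_1$ precede those in $T_2$, it takes the form
\[
\begin{pmatrix}0 & \Rb_1\A_{T_1\times T_2}\Rb_2^* \\ \Rb_2\A_{T_2\times T_1}\Rb_1^* & 0\end{pmatrix},
\]
where $\Rb_k$ is the sub-restriction of $\Rb$ onto coordinates in $T_k$. The operator norm of such a block anti-diagonal matrix equals the larger of the two block norms, so the right-hand side reduces to $2^q\,\E_\eta\max\{\|\Rb_1\A_{T_1\times T_2}\Rb_2^*\|^q,\|\Rb_2\A_{T_2\times T_1}\Rb_1^*\|^q\}$. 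The theorem's single symbol $\A_{T_1\times T_2}$ is then read as standing for whichever of the two off-diagonal blocks is larger.

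Finally I would take $\E_\Rb$, swap the order of expectations by Fubini, and apply the pigeonhole principle in $\eta$ to select a specific partition $(T_1^\star,T_2^\star)$ whose inner expectation dominates the $\eta$-average. This yields $\E_\Rb\|\Rb\A\Rb^*\|^q\le 2^q\,\E_\Rb\|\Rb_1\A_{T_1^\star\times T_2^\star}\Rb_2^*\|^q$. Conditioning on the (random) sizes $m_1=|\Rb\cap T_1^\star|$ and $m_2=|\Rb\cap T_2^\star|$, which satisfy $m_1+m_2=m$ and under which $\Rb_1,\Rb_2$ become independent uniform restrictions, and upper bounding the average over $(m_1,m_2)$ by the worst split gives the advertised $\max_{m_1+m_2=m}\E\|\Rb_1\A_{T_1\times T_2}\Rb_2\|^q$.

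The main technical obstacle is the decoupling identity: verifying the combinatorial constant $(2n-1)/n$ for equal-size partitions, and cleanly absorbing the two off-diagonal blocks into a single one so that only the $2^q$ factor survives; once that is done, the remainder is a routine chain of convexity, Fubini, pigeonhole, and conditioning on block sizes.
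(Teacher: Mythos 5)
Your proposal is correct and follows essentially the same route as the paper: a uniformly random equal-size partition with the combinatorial constant $(2n-1)/n\le 2$, Jensen/convexity to pull the partition average outside the norm, a pigeonhole step to fix one partition, and conditioning on the block sizes $m_1+m_2=m$. The only cosmetic differences are that you phrase the decoupling as a matrix identity with diagonal projections (rather than summing the rank-one terms $a_{jk}\e_j\e_k^*$) and use the block-antidiagonal norm identity in place of the paper's triangle inequality; note that collapsing $\E_\eta\max\{\|\Rb_1\A_{T_1\times T_2}\Rb_2^*\|^q,\|\Rb_2\A_{T_2\times T_1}\Rb_1^*\|^q\}$ to a single block still costs a factor $2$ via the $T_1\leftrightarrow T_2$ symmetry, so like the paper (whose own argument yields $4$ rather than $2$ at $q=1$) you actually obtain $2^{q+1}$ rather than the stated $2^q$ --- a harmless constant discrepancy.
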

\begin{proof} We prove the result for $q=1$. Identical argument applies to $q>1$ case. Let $a_{ij}$ denote the $ij$th coordinate of the matrix $\A$ and $\e_i$ denote the $i$th vector of the standard basis. Define the matrices $\B_{jk}=a_{jk}\e_j\e_k^*$. $\delta\in\{0,1\}^{2n}$, be a vector which has exactly $m$ components equal to $1$. Then, we have
\[
\Rb\A\Rb^*=\sum_{j\neq k}\delta_j\delta_k\B_{jk}
\]
We wish to bound the expectation
\[
M=\E_{\delta}\|\sum_{j\neq k}\delta_j\delta_k\B_{jk}\|
\]
Let $\eta\in\R^{2n}$ be a random vector with exactly $n$ coordinates equal to $1$. For $j\neq k$, we have that
\beq
\E_{\eta}[\eta_j(1-\eta_k)+\eta_k(1-\eta_j)]=\frac{n}{2n-1}.
\eeq
Now, based on these, applying Jensen's inequality, we have the following list of inequalities
\begin{align}
M&=\frac{2n-1}{n}\E_\delta\|\sum_{j\neq k}\E_\eta[\eta_j(1-\eta_k)+\eta_k(1-\eta_j)]\delta_j\delta_k\B_{jk}\|\nn\\
&<2\E_\delta\E_\eta\|\sum_{j\neq k}[\eta_j(1-\eta_k)+\eta_k(1-\eta_j)]\delta_j\delta_k\B_{jk}\| \nn\\
&<2[\E_\delta\E_\eta\|\sum_{j\neq k}\eta_j(1-\eta_k)\delta_j\delta_k\B_{jk}\|+\E_\delta\E_\eta\|\sum_{j\neq k}\eta_k(1-\eta_j)\delta_j\delta_k\B_{jk}\|]\nn\\
&<4\E_\delta\E_\eta\|\sum_{j\neq k}\eta_j(1-\eta_k)\delta_j\delta_k\B_{jk}\|\nn
\end{align}
It follows that there exists a $0-1$ vector $\eta^*$ containing exactly $n$ $1$s such that
\[
M<4\E_\delta\|\sum_{j\neq k}\eta^*_j(1-\eta^*_k)\delta_j\delta_k\B_{jk}\|
\]
Note that this vector $\eta^*$ partitions the set $\{1,2,\dots,2n\}$ into two parts $T_1,T_2$ each containing $N$ elements. $T_1$ corresponds to the coordinates $\eta_i=1$ and $T_2$ corresponds to the coordinates $\eta_i=0$. Calling these parts $T_1,T_2$, we can rewrite the inequality as
\[
M<4\E_\delta\|\sum_{j\in T_1,k\in T_2}\delta_j\delta_k\B_{jk}\|
\]
Next, let number of active coordinates of $\delta$ over $T_1$ be $m_1$. Observe that conditioned on the choice of $T_1$ and $m_1$, the $m_1$ and $m-m_1$ active coordinates of $\delta$ are distributed uniformly at random over $T_1$ and $T_2$. This is due to the fact that $\delta$ is independent of $\eta^*$. With this, the inequality takes the advertised form
\[
M<4\E_{m_1,m_2}\|\Rb_1^*\A_{T_1\times T_2}\Rb_2\|\leq \max_{m_1+m_2=m}4\E\|\Rb_1^*\A_{T_1\times T_2}\Rb_2\|
\]
\end{proof}
Using Theorem \ref{thm asym} and repeating the proof of Theorem \ref{tropp var} line by line we can conclude with the following result.
\begin{theorem} \label{tropp var2}Suppose $\Phi\in\R^{n\times 2n}$ is a random matrix such that all of its realizations are incoherent dictionaries with coherence $\mu$. Pick a random subdictionary $\X_1\in\R^{n\times k}$ from first $n$ columns of $\Phi$. Pick the same $k$ coordinates from the second $n$ columns of $\Phi$ and form $\X_2$. Define the function $f(\Rb)=\E_{\Phi}\|\X_1^*\X_2\|$. For $u\geq \sqrt{2\log n+1}$, we have that
\beq
\Pro_{\Rb}(f(\Rb)\geq c'(u\sqrt{m(\log m+1)}\mu+ \frac{m}{n}\|\Phi\|^2))\leq \exp(-u^2/4).\label{rand var}
\eeq
\end{theorem}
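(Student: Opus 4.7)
My plan is to reproduce the derivation of Theorem \ref{tropp var} (which goes through the moment bound of Theorem \ref{pth moment}) essentially verbatim, with the symmetric Gram $\Phi^*\Phi - \Iden$ replaced by the cross-Gram $\Hb = \Phi_1^*\Phi_2$, where $\Phi = [\Phi_1\ \Phi_2]$ is split into its first and second $n$-column blocks. The starting identity is $\X_1^*\X_2 = \Rb\Hb\Rb^*$, in which the single random $k$-restriction $\Rb:\R^n\to\R^k$ appears on both sides (reflecting the hypothesis that the same $k$ coordinates are selected from both blocks). Because every entry of $\Hb$ is an inner product $\phi_i^*\phi_{n+j}$ of two distinct columns of $\Phi$, we have $|\Hb_{ij}|\leq\mu$ for all $i,j$, and $\|\Hb\|\leq\|\Phi\|^2$.

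First I would peel off the diagonal by writing $\Hb = \diag{\Hb} + \bar{\Hb}$; the diagonal contribution satisfies $\|\Rb\diag{\Hb}\Rb^*\|\leq \mu$, which is negligible relative to the target bound. For the residual $\bar{\Hb}$ (zero diagonal, coherence $\mu$, spectral norm at most $\|\Phi\|^2+\mu$) I apply the decoupling Theorem \ref{thm asym}, rerun in the cross-Gram setting with the auxiliary Bernoulli partition $\eta$ acting on $\{1,\dots,n\}$ rather than $\{1,\dots,2n\}$. This yields, for $q\geq 1$,
\beq
(\E_\Rb\|\Rb\bar{\Hb}\Rb^*\|^q)^{1/q}\leq 2\bigl(\max_{m_1+m_2=k}\E_{\Rb_1,\Rb_2}\|\Rb_1\hat{\Hb}\Rb_2\|^q\bigr)^{1/q},\nn
\eeq
where $\hat{\Hb}$ is an $n/2\times n/2$ submatrix of $\bar{\Hb}$ supported on disjoint row and column index sets, and $\Rb_1,\Rb_2$ are now \emph{independent} random restrictions of sizes $m_1,m_2$.

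Next I apply Theorem \ref{single restrict} twice, in exact parallel with the proof of Theorem \ref{pth moment}: first over $\Rb_2$ conditional on $\Rb_1$, using $\|\Rb_1\hat{\Hb}\|_{1,2}\leq \mu\sqrt{k}$; then over $\Rb_1$, using $\|\hat{\Hb}^*\Rb_1^*\|_{1,2}\leq \mu\sqrt{n}$ together with $\|\hat{\Hb}\|\leq\|\Phi\|^2+\mu$. Setting $p=q((\log k)/2+1)$ and absorbing the $\mu$ from the diagonal into the leading term delivers
\beq
(\E_{\Rb,\Phi}\|\X_1^*\X_2\|^q)^{1/q}\leq c\bigl(\sqrt{qk(\log k+1)}\,\mu+\tfrac{k}{n}\|\Phi\|^2\bigr).\nn
\eeq
To produce the advertised tail inequality on $f(\Rb)=\E_\Phi\|\X_1^*\X_2\|$ (which by Jensen is controlled by the moment bound above), I would invoke Proposition 10 of \cite{tropp2008conditioning} with $q=u^2$ and $u\geq\sqrt{2\log n+1}$, exactly as in the final step of Theorem \ref{tropp var}; this produces the $\exp(-u^2/4)$ tail.

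The main obstacle is the adaptation of Theorem \ref{thm asym}, which is stated for a single uniform restriction of a $2n\times 2n$ zero-diagonal matrix, to the present setting of a single restriction acting on both sides of an $n\times n$ matrix. Fortunately, the combinatorial heart of the argument---the identity $\E_\eta[\eta_j(1-\eta_k)+\eta_k(1-\eta_j)]\propto 1$ for $j\neq k$, combined with Jensen's inequality over the random partition $\eta$---uses only that the underlying matrix has zero diagonal (arranged above by removing $\diag{\Hb}$) and that the two restrictions are identical copies of the same random set; it is insensitive to the ambient dimension or to symmetry of the matrix. Rerunning the proof with $\eta$ partitioning $\{1,\dots,n\}$ into halves of size $n/2$ therefore produces the required decoupling, after which the remaining steps reproduce Tropp's argument line by line.
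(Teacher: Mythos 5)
Your proposal is correct and follows essentially the same route as the paper: the paper's own proof is a two-line sketch that writes $\X_1^*\X_2=\Rb^*\Theta\Rb$ with $\Theta=\Phi_1^*\Phi_2$, splits off the diagonal, and reruns the proof of Theorem \ref{tropp var} with the symmetrization step replaced by the asymmetric decoupling of Theorem \ref{thm asym}. Your version simply fills in the details the paper leaves implicit (the $\leq\mu$ bound on the removed diagonal, the adaptation of the $\eta$-partition to $\{1,\dots,n\}$, and the two applications of Theorem \ref{single restrict}), all of which are consistent with the intended argument.
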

\begin{proof} Denote the first and second $n$ columns of $\Phi$ by $\Phi_1$ and $\Phi_2$ respectively and set $\Theta=\Phi_1^*\Phi_2$. Let $\Rb\in\R^{n\times k}$ be a random restriction. Observe that
\[
\E_{\Phi}\|\X_1^*\X_2\|=\E_{\Phi,\Rb}\|\Rb^*\Phi_1^*\Phi_2\Rb\|=\E_{\Phi,\Rb}\|\Rb^*\Theta\Rb\|
\]
Next, split the diagonal and off-diagonal entries of $\Theta$ and apply the argument in Theorem \ref{tropp var} where we replace the inequality \eqref{sym part} with the estimate of Theorem \ref{thm asym}.
\end{proof}
\end{document}